\theoremstyle{plain}
\newtheorem{mythe}{Theorem}[section]
\newtheorem{mylem}[mythe]{Lemma}
\newtheorem{mycor}[mythe]{Corollary}
\theoremstyle{definition}
\newtheorem{remark}[mythe]{Remark}
\newtheorem{prop}[mythe]{Proposition}
\newcommand{\ba}{\begin{eqnarray*}}
\newcommand{\ea}{\end{eqnarray*}}
\newcommand{\bq}{\begin{eqnarray}}
\newcommand{\eq}{\end{eqnarray}}
\newcommand{\ep}{\mathbb{E}}
\newcommand{\pr}{\mathbb{P}}
\newcommand{\resub}{\textcolor{black}}
\newcommand{\resubtwo}{\textcolor{black}}
\newcommand{\pushright}[1]{\ifmeasuring@#1\else\omit\hfill$\displaystyle#1$\fi\ignorespaces}
\author{Michael D. Nicholson$^1$\footnote{mdnicholson5@gmail.com}\,, David Cheek$^2$,\,Tibor Antal$^{3}$}
\date{%
    \small{$^1$Edinburgh Cancer Research, Institute of Genetics and Cancer, 
University of Edinburgh
\\
$^2$Center for Systems Biology, Department of Radiology, Massachusetts General Hospital Research Institute and Harvard Medical School
\\
$^3$School of Mathematics and Maxwell Institute for Mathematical Sciences, University of Edinburgh}\\%
    % $^2$Organization 2%\\[2ex]
    \vspace{0.3cm}
    \today
}
\title{\resub{Sequential mutations in exponentially growing populations} }
\begin{document}

\maketitle
% \tableofcontents

\subsection*{Abstract }
Stochastic models of sequential mutation acquisition are widely used to quantify cancer and bacterial evolution. Across manifold scenarios, recurrent research questions are: how many cells are there  with $n$ alterations, and how long will it take for these cells to appear. For exponentially growing populations, these questions have been tackled only in special cases so far. Here, within a multitype branching process framework, we consider a general mutational path where mutations may be advantageous, neutral or deleterious. In the biologically relevant limiting regimes of large times and small mutation rates, we derive probability distributions for the number, and arrival time, of cells with $n$ mutations. Surprisingly, the two quantities respectively follow Mittag-Leffler and logistic distributions regardless of $n$ or  the mutations' selective effects. Our results provide a rapid method to assess how altering the fundamental division, death, and mutation rates impacts the arrival time, and number, of mutant cells. \resubtwo{We highlight consequences for mutation rate inference in fluctuation assays.}

\subsection*{Author summary}
In settings such as bacterial infections and cancer, cellular populations grow exponentially. 
DNA mutations acquired during this growth can have profound effects, e.g. conferring drug resistance or faster tumour growth. In mathematical models of this fundamental process, considerable effort - spanning many decades - has been invested to understand the factors that control two key aspects of this process: how many cells exist with a set of mutations, and how long does it take for these cells to appear. In this paper, we consider these two aspects in a general mathematical framework. Surprisingly, for both quantities, we find universal probability distributions which are valid regardless of how many mutations we focus on, and what effect these mutations might have on the cells. The distributions are elegant and easy to work with, providing a computationally efficient alternative to intensive simulation-based approaches. We demonstrate the usefulness of our mathematical results by illustrating their consequences for bacterial experiments and cancer evolution.

% Our results provide insight on a fundamental class of mathematical models, and enable researchers to rapidly examine how the accumulation of mutations which potentially alters cells' division, death, and mutation rates, affects the speed of evolution, and detectability of given cell groups.

% We consider a supercritical birth-death branching process which accumulates mutations in a linear sequence, where mutations may affect birth and death rates. We derive a rare-mutation approximation for the waiting time until an individual with $n$ mutations arises.

%%%%%%%%%%%%%%%%%%%%%%%%%%%%%%%%%%%%%%%%%%%%%%%%%%%%%%%%%%%%
%%%%%%%%%%%%%%%%%%%%%%%%%%%%%%%%%%%%%%%%%%%%%%%%%%%%%%%%%%%%
\section{Introduction}

To quantitatively characterise diseases, in settings such as cancer, and bacterial and viral infections, a concerted effort has been made to study evolutionary dynamics in exponentially expanding populations. Understanding the timescale of evolution is a key aspect of this research program which has proven useful in a diverse range of areas such as: measuring mutation rates \cite{Luria:1943}, assessing the likelihood of therapy resistance developing \cite{Komarova:2005,Leder:2011,Bozic:2013}, inferring the selective advantage of cancer driver events \cite{Bozic:2010,Williams:2018,Lahouel:2022}, and exploring the necessary steps in the metastatic process \cite{Haeno:2012,Reiter:2018}. The common theme within these works is that they use information about when a particular cell type arises within the population of interest. For a concrete example, whose roots lie in the celebrated work of Luria and Delbr\"{u}ck \cite{Luria:1943}, if we imagine a growing colony of bacteria, we might wish to know how quickly a mutant bacterium will develop with a specific mutation that confers resistance to an antibiotic therapy.

% \ta{This next paragraph focuses on time, but the paper is now both time and number. I also want to say that the two type model is simpler, we have an exact solution (AK) and the large time limits were studied in Kessler, Cheek 1 , 2. But paper is well put together so I didn't want to butcher this in.} \mn{I've changed first sentence a bit for number; I'd rather not add a "author X studied Y in paper Z, and similarly, author A studied B in paper C" sentence}

The time until a cell type emerges, and expands to a detectable population size, depends on a variety of factors. Most obvious are the relevant mutation rates, however selection also plays an important role. For instance, if we start an experiment with an unmutated cell and wait for a cell with 2 mutations, a low division rate of cells with one mutation slows down this process. In the scenario of the sequential acquisition of driver alterations in cancer, with each mutation providing a selective advantage, Durrett and Moseley characterised the time to acquire $n$ driver mutations \cite{Durrett:2010}. We recently examined the setting of drug resistance conferring mutations, which often have a deleterious effect, so that the original cell type grew the fastest \cite{Nicholson:2019}. However, in general, the effects of mutation and selection on evolutionary timescales within exponentially growing populations remain unclear.

In this study we build upon the mathematical machinery developed in Refs.~\cite{Durrett:2010,Nicholson:2019} to investigate this question. We focus on the biologically relevant settings of large times and small mutation rates. Broad-ranging features of the cell number, and arrival time, of type $n$ cells are highlighted - including universal simple distributions - and explicit expressions make the impact of mutation and selection clear.

%%%%%%%%%%%%%%%%%%%%%%%%%%%%%%%%%%%%%%%%%%%%%%%%%%%%%%%%%%%%
%%%%%%%%%%%%%%%%%%%%%%%%%%%%%%%%%%%%%%%%%%%%%%%%%%%%%%%%%%%%

\section{Model}\label{sec_summary}

 \textbf{Model.} We consider a population of cells, where each cell can be associated with a given `type' (for example `type 3' might be cells with 3 particular mutations). \resub{Cells of type $n$ divide, die, and mutate to a cell of type $n+1$, at rates $\alpha_n,\,\beta_n\,$ and $\nu_n$, with all cells behaving independently of each other. 
With $(n)$ representing a type $n$ cell and $\varnothing$ symbolising a dead cell,} our cell level dynamics can be represented as (see also Fig~\ref{fig:model_schematic} A):
\begin{align*}
(n) \rightarrow 
\begin{cases}
(n),(n) \quad &\mbox{at rate } \alpha_n
\\
\varnothing \quad &\mbox{at rate } \beta_n
\\
(n),(n+1) \quad &\mbox{at rate } \nu_n.
\end{cases}
\end{align*}
\resub{In other words after a random, exponentially distributed waiting time with parameter $\alpha_n+\beta_n +\nu_n$, a type $n$ cell is replaced by one of the listed three options with probability proportional to its corresponding rate.} The process starts with a single cell of type 1 at time $t=0$, and we assume that \resub{the type 1 population is supercritical ($\alpha_1>\beta_1$) and that} it survives forever (does not undergo stochastic extinction).

\begin{figure}[t!]
    \centering
    \includegraphics[width = \textwidth]{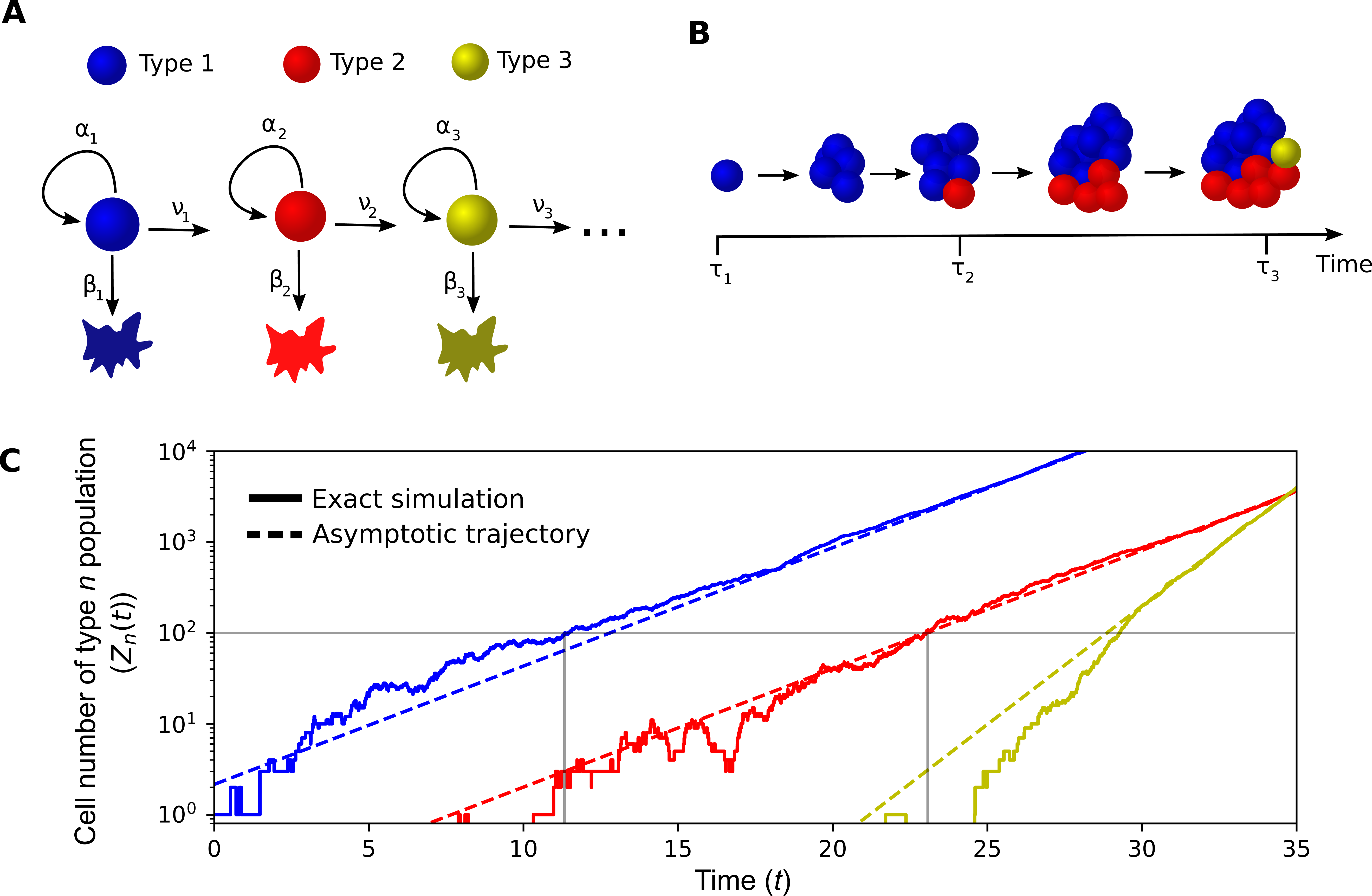}
    \caption{ \textbf{Model schematic}. A: We consider a multitype branching process in which cells can divide, die, or mutate to a new type. B: We study the waiting time until a cell of the $n$th type exists, $\tau_n$, starting with a single cell of type 1. C: Stochastic simulation of the number of cells over time, with dashed lines indicating the large-time trajectories given by Eq.~\eqref{eqn_Zlimit_sum}. \resub{Grey horizontal line occurs at the inverse of the mutation rate, while the grey vertical lines indicate the time at which the type $n$ population size reaches the inverse of the mutation rate, which gives the arrival time of the type $n+1$ cells to leading order}. Parameters: $\alpha_1 =\alpha_3 = 1.1,\,\alpha_2 = 1$, $\beta_1=0.8,\,\beta_2 = 0.9,\,\beta_3 = 0.5$, $\nu_1=\nu_2=0.01$. Thus, the net growth rates are $\lambda_1=0.3,\,\resubtwo{\lambda_2=0.1},\,\lambda_3=0.6$ and the running-max fitness follows $\delta_1=\delta_2=\lambda_1,\delta_3=\lambda_3$}
    \label{fig:model_schematic}
\end{figure}

 We focus on two quantities; the number of cells of type $n$ at time $t$ - denoted $Z_n(t)$, and the arrival time of the first type $n$ cell - termed $\tau_n$ (see Figs~\ref{fig:model_schematic} B \resubtwo{\& C}).  To describe the growth of the cellular populations, let the net growth rate of the type $n$ cells be $\lambda_n = \alpha_n-\beta_n$. \resubtwo{We denote the `running-max' fitness, which is the largest growth rate of the cell types among $1,\ldots, n$, as $\delta_n$, that is ~$\delta_n=\max_{i=1,\ldots,n}\lambda_i$.} Further, we introduce $r_n$ as the number of times the running-max has been \resubtwo{attained} over the cell types up to $n$, that is 
 $r_n=\#\{i=1,\,\ldots,\resub{n}: \lambda_i = \delta_n\}$.

% Our model considers a single evolutionary path of cells mutating from type 1 to 2 to 3, and so on, potentially representing cells accumulating $n$ cancer driver mutations or acquiring resistance to $n$ drugs. Instead one could consider a more general model of multiple competing evolutionary paths, where the space of types is a directed, acyclic graph, tackled for example in \cite{Nicholson:2019} under the assumption that the type 1 cells have the largest net growth rate. While the waiting time distributions in the present paper for the traversal of single path can in fact be applied to the more general competing path model -  for example among a set of paths to a target evolutionary type, the median times can be compared to determine the fastest path.
\resub{ \textbf{Motivation.} Our model considers a linear evolutionary path of cells sequentially mutating from type 1 to 2 to 3, and so on (see Fig~\ref{fig:model_schematic} A \resubtwo{and Fig~\ref{fig_types_grate}}). 
We briefly highlight scenarios for which our model is relevant, drawing on examples from cancer evolution (although similar statements can be made for other exponentially growing populations). }

\begin{figure}[h!]	\centering
	\includegraphics[width = \linewidth]{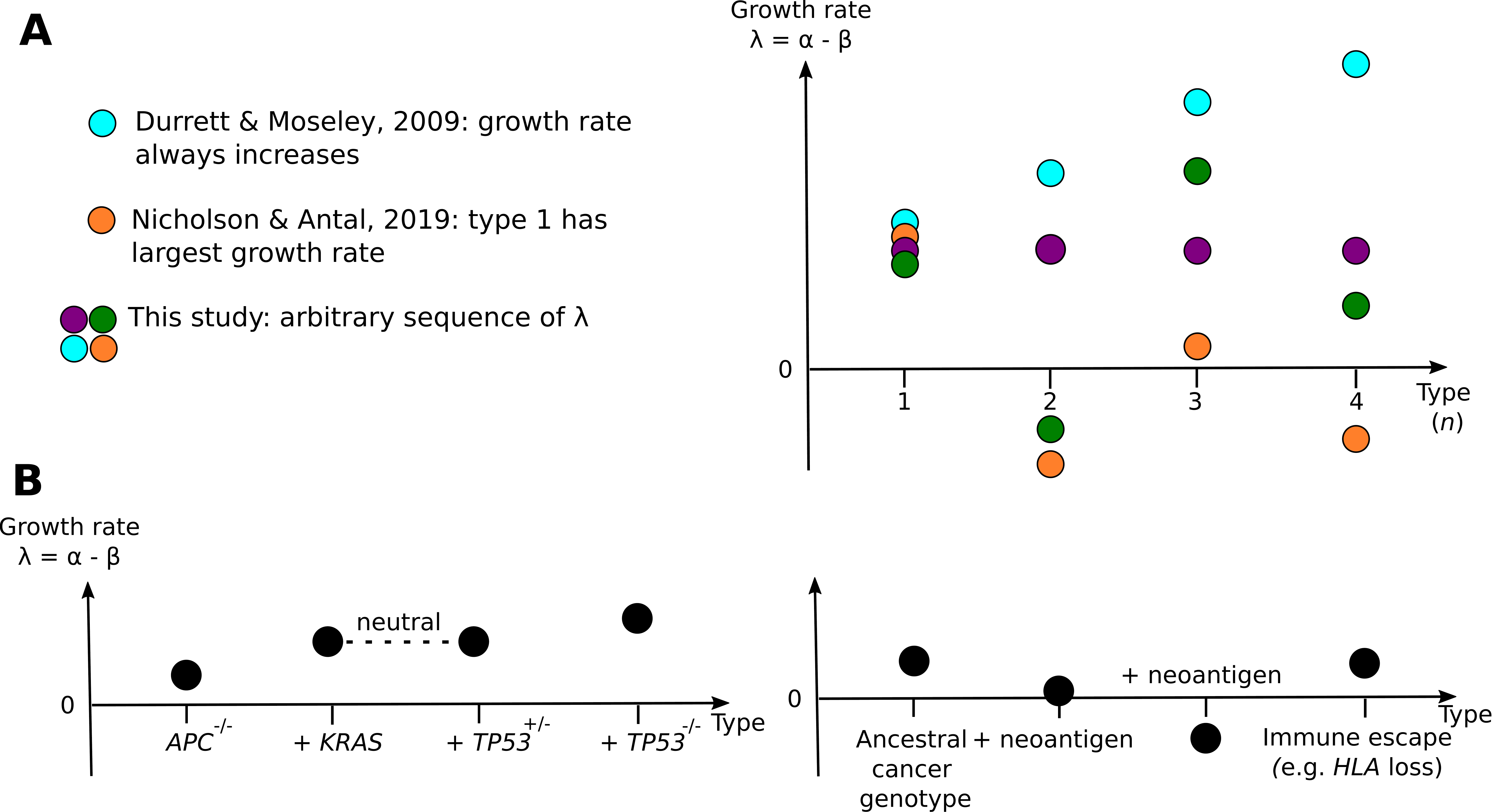}
	\caption{\resubtwo{\textbf{Comparison with prior work and motivating examples.} A. Previous work has considered special cases of growth rate sequences, here we consider general sequences as long as $\lambda_1>0.$ B. Two biological scenarios in which the growth rate sequences covered in this paper are relevant: the acquisition of driver mutations in the canonical carcinogenesis pathway of colorectal cancer, and the accumulation of neoantigens by cancer cells which results in increased cell death due to immune system surveillance.   }}
			\label{fig_types_grate}
\end{figure}

\resub{Cancer cells accumulate mutations with a variety of phenotypic effects during the cancer's expansion. Oncogenic driver mutations are thought to increase the population's net growth rate, either by increasing the proliferation rate or decreasing the death rate. A linear path is relevant when considering cancers that follow a specified evolutionary trajectory. For example, the canonical mutational path \cite{Fearon:1990,Nguyen:2020} in colorectal cancer is loss of $APC$ (type 1 cells), followed by a $KRAS$ mutation (type 2 cells have mutations in both genes), then loss of $TP53$ (type 3 cells with mutations in all 3 genes); \resubtwo{see Fig~\ref{fig_types_grate} B}.   }

% Another specific instance of our framework is a model of driver mutation accumulation,

% When considering only this specific path, other cell types, e.g.\ cells with mutations only in $KRAS$ are ignored. Mutation, division and death rates need to be estimated for each cell type.
When the cancer evolutionary trajectory is not specified, but it is assumed that driver mutations arise at a constant rate such that each new mutation confers a constant $1+s_d$ fold increase in the proliferation rate, then this model also falls within our framework. Bozic \textit{et al.} \cite{Bozic:2010} applied this model to cancer genetic data, thereby inferring the selective effect $s_d$ of driver mutations. Conversely to oncogenic drivers, neoantigen-creating mutations that stimulate the immune system to attack cancer cells have been modelled as increasing the death rate of the mutated cells by a factor of $1+s_n$ \cite{Lakatos:2020} \resubtwo{(Fig~\ref{fig_types_grate} B)}. \resubtwo{Lakatos \textit{et al.} \cite{Lakatos:2020} used this model to examine conditions such that a population of neoantigen-presenting cancer cells would be sufficiently large to be observed in sequencing data in order to explore the limits of detecting immune-mediated negative selection.} Exploring how the distribution of the cell number with $k$ neoantigens varies as function of $s_n$ and the neoantigen-mutation rate can be rapidly assessed with the results below.  
% For example,  to be detectable in a sequenced tumour of $10^9$ cells, the population of cells containing neoantigens must be present in $\approx10\%$; how the probability of this occurring varies as a function of $s_n$ and neoantigen-mutation rate can be easily assessed with the results below.  
% Signals of negative selection in cancer have been surprisingly difficult to detect and so whether cells containing neoantigens would ever be observed in sequencing data has been studied. Thus knowing how the distribution of cell number with $n$ neoantigens alters as a function of the evolutionary parameters is advantageous.

 % One question, explored in [3], is the rate of evolution along a specific path through genotype space. This question is in fact fully addressed by the simpler model of a linear evolutionary path discussed in the present paper. Because the model of population growth is a branching process, which by definition means that individuals do not interact with each other, the population's traversal along a specific path is characterised by a branching process whose state space is only the types along that path. Here, by determining the rate of evolution along a path, we set up the possibility to compare the average rates of evolution along separate paths through a directed graph structure
\resub{For a more general model that describes a population with the potential to traverse multiple evolutionary paths, genotype space can be represented as a directed graph. When the original cell type has the largest net growth rate, we recently derived simple formulas for the arrival time and cell number through the directed graph of genotypes \cite{Nicholson:2019}. \resubtwo{The results presented below, where the cell type with the largest net growth rate is unconstrained, hold only for a linear path through a genotype space.} While in this work we cannot compare arbitrary sets of paths to a target evolutionary genotype, one may focus on each evolutionary path to the target type separately as a single linear path and then compare the median time to traverse each evolutionary path using the results presented below. For example, two sets of driver mutations might be considered: mini-drivers which have a high mutation rate, but low selective advantage, and major-drivers which have a low mutation rate but large selective advantage \cite{Castro-Giner:2015}. We would then compare the median times of the evolutionary paths `Driver 1 $\rightarrow$ Mini-driver $\rightarrow$ Driver 3' and `Driver 1 $\rightarrow$ Major-driver $\rightarrow$ Driver 3' to determine which path is most likely to produce the first cell with three driver mutations.}

\resubtwo{The cancer evolution examples discussed above all assume that the type 1 cell has a driver mutation. In other settings, it may be more natural to consider the type 1 cells as wild type, for example when considering the emergence of drug resistance. We emphasise that in this paper the type one cells are always supercritical, that is they grow exponentially on average.}
 
\renewcommand{\arraystretch}{1.5}

% \begin{table}[h!]
	% \centering
	% \caption{\mn{Summary of notation}}.
 \begin{table}
 \centering
      \begin{tabular}{ | m{2cm} | m{12cm} | } 
	% \begin{tabular}{|| l | l ||} 
		\hline
		Notation & Description \\ [0.5ex] 
		\hline
    $\alpha_n,\beta_n$ & 
		    Division and death rate of type $n$ cells \\
      $\lambda_n$ & 
		   Net growth rate of type $n$ cells, i.e. $\alpha_n-\beta_n$ \\
           $\nu_n$ & 
		   Mutation rate of type $n$ cells\\
  $\delta_n$ & 
		    Running-max fitness, i.e. $\max_{i=1,\ldots,n}\{\lambda_i\}$ \\
		$r_n$ & Number of times the
   running-max fitness has been attained over types $1,\ldots,n$, i.e. $\#\{i=1,\,\ldots,\resub{n}: \lambda_i = \delta_n\}$\\ 
		  $Z_n(t)$  & 	Cell number of type $n$ at time $t$ 
		\\ 
		$\tau_n$ & Arrival time of type $n$ cells  \\
		$t_{1/2}^{(n)}$ & Median arrival time of type $n$ cells  \\
  $V_n$ & `Random amplitude' of approximate cell number of type $n$ (see Eq. \eqref{eqn_Zlimit_sum})  \\
    $\omega_n$ & Scale parameter of `random amplitude' (see Eq. \eqref{eqn_omegarec})  \\
		[1ex] 
		\hline
	\end{tabular}
   	\caption{ \resub{
Key notation used throughout this article.}}
	\label{table_keynotaton}
\end{table}
% In the general sequence of fitnesses considered in this paper, one can treat the different paths separately and compare the median times to determine the fastest path to a target evolutionary type.} \dc{\textit{Seems a bit self-indulgent to say too much about your paper. If we are mentioning that more general framing we should cite other papers too.}}

\section{Results}
\resubtwo{Our results are broken into three sections}. We first give an overview of our main mathematical results, stratified by whether they relate to the number of type $n$ cells or to their arrival time. We then highlight the main properties of the results as well as providing intuitive arguments for why these properties emerge. Finally, we compare our results to previously known special cases.

\subsection{Results overview}
\textbf{Population sizes.} Understanding the distribution of the number of cells of type $n$ at a fixed time $t$ (e.g. the probability that 5 cells exist of type 2 at time 2) can be complex \cite{Antal:2011}, however a surprising level of simplicity emerges at large times with small mutation rates\resub{.} The number of cells of type $n$ can be decomposed into the product of a time-independent random variable and a simple time-dependent deterministic function controlled by the running-max fitness \resub{$\delta_n$, and the number of times it has been attained $r_n$ up to type $n$}: 
\begin{equation}
\label{eqn_Zlimit_sum}
Z_n(t)\approx V_n t^{r_n-1}e^{\delta_n t}.
\end{equation}
\resub{The random variable $V_n$} has a Mittag-Leffler distribution with tail parameter $\lambda_1/\delta_n$, and scale parameter $\omega_n$. Its density has a particularly simple Laplace transform
$
 \ep e^{-\theta V_n}=(1+(\omega_n\theta)^{\lambda_1/\delta_n})^{-1}.
$
The parameter $\omega_n$ may be computed by the following recurrence relations: setting $\omega_1 = \alpha_1/\lambda_1$, then for \resubtwo{$n\geq 1$}, 
\begin{align}\label{eqn_omegarec}
    \omega_{n+1} = 
    \begin{cases}
    \frac{\nu_n}{\delta_n-\lambda_{n+1}}\omega_n \quad & \delta_n  >\lambda_{n+1} \quad \mbox{`stay below max fitness'}
    \\
    \frac{\nu_n}{r_n}\omega_n \quad & \delta_n = \lambda_{n+1} \quad \mbox{`equal to max fitness'}
    \\
    [c_n \nu_n(\log\nu_n^{-1})^{r_n-1} \omega_{n} ]^{\lambda_{n+1}/\delta_{n}}\quad &\delta_n< \lambda_{n+1} \quad \mbox{`increase max fitness'}
    \end{cases}
\end{align}
where  
$
 c_n = \pi \left(\frac{\alpha_{n+1}}{\lambda_{n+1}}\right)^{\delta_n/\lambda_{n+1}} \left(\alpha_{n+1}\delta_n^{r_n-1}\sin \frac{\pi\delta_n}{\lambda_{n+1}}\right)^{-1}.
$
Notably, when type 1 has the maximal growth rate of all types up to type $n$, that is $\delta_n = \lambda_1$, the Mittag-Leffler distribution collapses to an exponential distribution \resubtwo{with mean $\omega_n$}. Stochastic simulations of the scaled number of type $n$ cells for large times, $e^{-\delta_n t}t^{-(r_n-1)}Z_n(t)\approx V_n$, which according to Eq.~\eqref{eqn_Zlimit_sum} is Mittag-Leffler distributed, are compared with theory in Fig~\ref{fig_Mittag}. 

The variable $V_n/\omega_n$ is a single parameter Mittag-Leffler random variable with scale parameter one, and tail parameter $\gamma=\lambda_1/\delta_n$. \resub{For $\gamma=1$ its density is simply $e^{-x}$, and hence $V_n/\omega_n$ has mean 1, while for $\gamma<1$ the density has a $x^{\gamma-1}$ singularity at the origin and a $x^{-\gamma-1}$ tail, thus $V_n/\omega_n$ has infinite mean.} \resub{A further property is that, when the running-max fitness does not increase between $n$ and $n+1$, the random variables $V_{n}$ and $V_{n+1}$ are equal up to a constant factor (perfectly correlated), i.e. with probability 1
\begin{align}\label{eqn_Vcorr}
V_{n+1} = 
\begin{cases}
     \frac{\nu_n}{\delta_n-\lambda_{n+1}} V_{n} \quad & \delta_n  >\lambda_{n+1} ,
    \\
    \frac{\nu_n}{r_n}V_{n}\quad & \delta_n = \lambda_{n+1} .
\end{cases}
\end{align}
\resub{However, in the case $\delta_n<\lambda_{n+1}$, such simple rules do not apply}.}

% \ta{One can obtain a series expansion of the density as well $f(x)=\sum_{j\ge 1} (-1)^{j+1}(x/\omega_n)^{j\gamma-1}/(\omega_n\Gamma(j\gamma))$. Since in our model $0<\gamma<1$, the power law tail implies that the mean of the limit random variable $V_n$ is infinite \cite{}. Hence the study of the mean number of mutants in the large time, small mutation limit is problematic.}
% \ta{Simulations of the scaled number of type-$n$ cells for large times $V_n$ are plotted against theory in Fig~\ref{fig_Mittag}. The variable $V_n/\omega_n$ is a single parameter Mittag-Leffer function with scale parameter one, and tail parameter $\gamma=\lambda_1/\delta_n$. This function was plotted with Mathematica using $x^{\gamma - 1} \mathrm{MittagLefflerE}[\gamma, \gamma, -x^\gamma]$. Note that $\gamma=1$ corresponds to the exponential function, and describes cases where the initial cell type has the maximal fitness. For $\gamma<1$ the density has a $x^{\gamma-1}$ singularity at the origin and a $x^{-\gamma-1}$ tail.} 
% \ta{I moved to the chat.}
\begin{figure}[t!]
	\centering
	\includegraphics[width = .9 \linewidth]{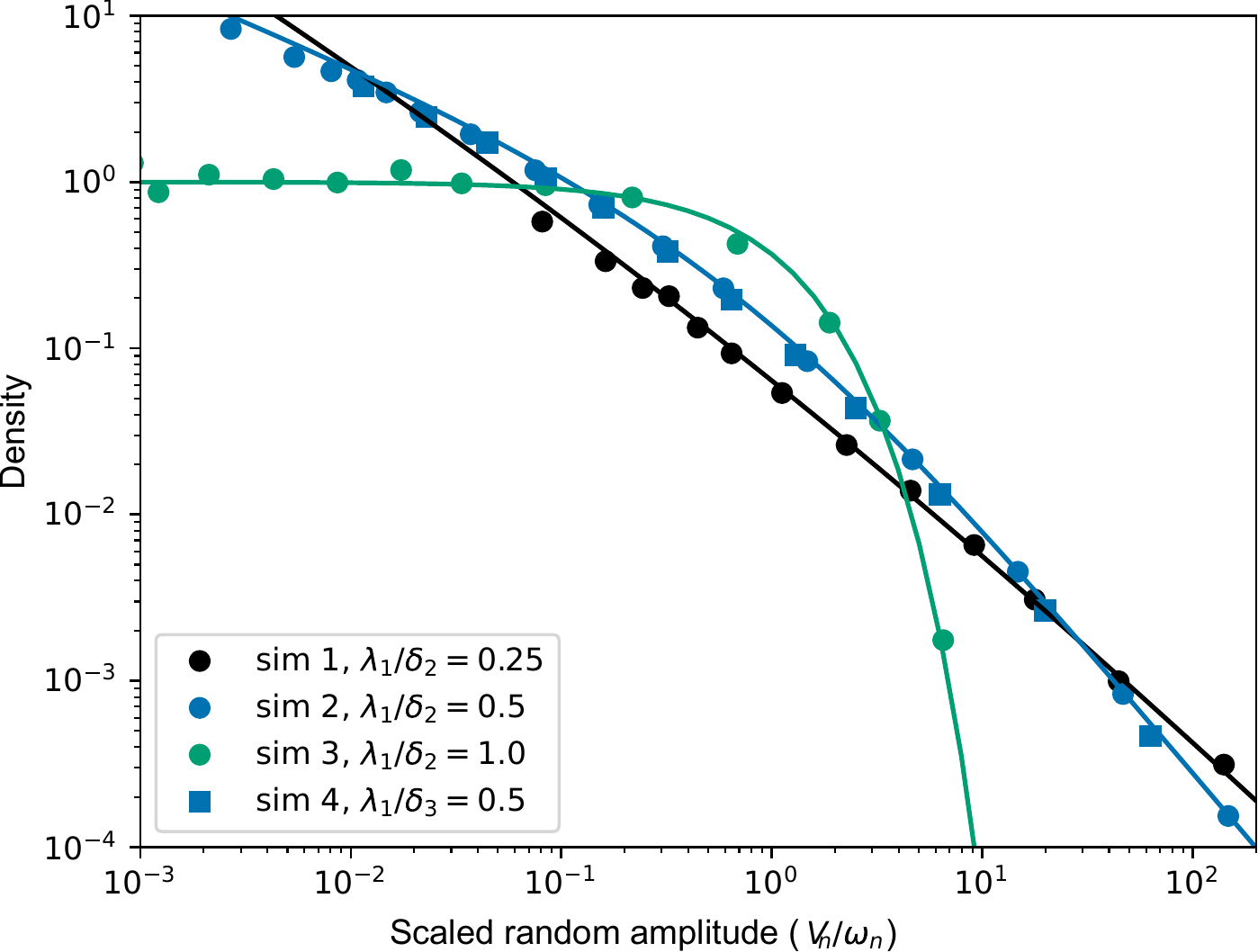}
	\caption{\textbf{Comparison of limiting Mittag-Leffler distribution for the number of type $n$ cells with stochastic simulations.} Eq.~\eqref{eqn_Zlimit_sum}, states that for large times and small mutation rates, the scaled number of type $n$ cells, $e^{-\delta_n t}t^{-(r_n-1)}Z_n(t)\approx V_n$, is approximately Mittag-Leffler distributed with scale $\omega_n$ and tail $\lambda_1/\delta_n$. Here, we compare simulations of the scaled number of type $n$ divided by $\omega_n$, to the density of $V_n/\omega_n$ which is Mittag-Leffler with scale parameter 1, and tail parameter $\lambda_1/\delta_n\in(0,1]$. We chose three tail parameter values $\lambda_1/\delta_n= 0.25, 0.5, 1.0$, and these curves are depicted with solid lines. The simulation parameter were always $\alpha_1=1.2,\, \beta_1=0.2, \,\nu_1=0.01$, $\beta_2=0.3$ and for $n=2$ types 
sim 1: $\alpha_2=4.3, \,t=5$; 
sim 2: $\alpha_2=2.3,\, t=7$; 
sim 3: $\alpha_2=1.0,\, t=12$.
Then for $n=3$ types 
sim 4: as in sim 3 plus $\alpha_3=2.4,\, \beta_3=0.4,\, \nu_3=0.001,\, t=12$. Density lines were created in Mathematica using $x^{\gamma - 1} \mathrm{MittagLefflerE}[\gamma, \gamma, -x^\gamma]$.}
\label{fig_Mittag}
\end{figure}

\resub{In general, the equation for asymptotic growth \eqref{eqn_Zlimit_sum} together with the formulas for $\omega_n$ in \eqref{eqn_omegarec} enables us to easily answer questions about the population of different cell types.}
One might ask, for example, whether the number of cells of type $n$ is greater than a given size $k$ and how the growth rates and mutation rates in the system influence this; this problem can be approached using 
$$
\pr(Z_n(t)>k) \approx \pr(V_n > k t^{1-r_n}e^{-\delta_n t}).
$$
Numerically evaluating the resulting distribution function is standard in scientific software (e.g. using the Mittag-Leffler package in R \cite{Gill:2018}).

\textbf{Arrival times.} Similarly to the population sizes, the exact distribution of the arrival time is analytically intractable outside of the simplest settings. \resubtwo{For example, the exact probability that type 3 cells arrive by \resubtwo{time} $t$ is given in Ref. \cite{Denes:1996} and requires the evaluation of 4 hypergeometric functions}. However, when the mutation rates are small simplicity again emerges; the time until the appearance of the first type \resubtwo{$n+1$} cell, $\tau_{n+1}$, has approximately a logistic distribution
\begin{equation}\label{eqn_logistic_dist}
\pr(\tau_{n+1} >t) \approx \left[1+ \exp\left(\lambda_1 (t-t_{1/2}^{(n+1)})\right)\right]^{-1}
\end{equation}
with scale given by $\lambda_{1}^{-1}$ and median given by 
\begin{equation}\label{eqn_mun}
t_{1/2}^{(n+1)}=\frac{1}{\delta_n}\log\frac{\delta_n}{\omega_n \nu_n [\delta_n^{-1}\log(\nu_n^{-1})]^{r_n-1}}
\end{equation}
where $\omega_n$ is the scale parameter defined in \eqref{eqn_omegarec}. \resub{Comparisons of the limiting logistic distribution with simulations are shown in Fig~\ref{fig_times_dens}, with further simulations provided in the supplementary figure S1 Fig}
\resub{The population initiated by the first cell of type $n+1$ could go extinct, and so we might wish to instead consider the waiting time until the first type $n+1$ cell whose lineage survives. All lineages of type $n+1$ will eventually go extinct unless $\lambda_{n+1}>0$. If $\lambda_{n+1}>0$ then the results given above hold also for the arrival time of the first surviving lineage if we replace $\nu_n$ by $\nu_n \lambda_{n+1}/\alpha_{n+1}$.}

 For the case where each running-max fitness is attained only by one type ($r_i=1$ for each $i$) then the medians satisfy the following recursion: 
with 
\begin{equation}\label{eqn_thalf_base}
t_{1/2}^{(2)} = \frac{1}{\lambda_1}\log \frac{\lambda_1^2}{\alpha_1\nu_1},
\end{equation}
then for $\resubtwo{n \ge 2}$ 
\begin{align}\label{eqn_medrecur}
    t_{1/2}^{(n+1)} = t_{1/2}^{(n)}+ 
    \begin{cases}
    \frac{1}{\delta_{n}}\log\frac{\delta_{n}-\lambda_n}{\nu_n}\quad & \delta_{n-1} >\lambda_{n} 
    \\
   \frac{1}{\delta_{n}}\log\frac{\delta_n}{\nu_n}
   -\frac{1}{\delta_{n-1}}\log (c_{n-1}\delta_{n-1})\quad &\delta_{n-1}<\lambda_{n}  ,
    \end{cases}
\end{align}
\resubtwo{where $c_n$ is defined immediately after Eq. \eqref{eqn_omegarec}}.
If the running-max fitness may be obtained multiple times, then a more detailed recursion also exists, given as Lemma \ref{lem_timerecur} in Methods. Note that since the distribution in Eq.~\eqref{eqn_logistic_dist} is symmetric, the median and the mean coincide.

\begin{figure}[t!]
	\centering
	\includegraphics[width = \linewidth]{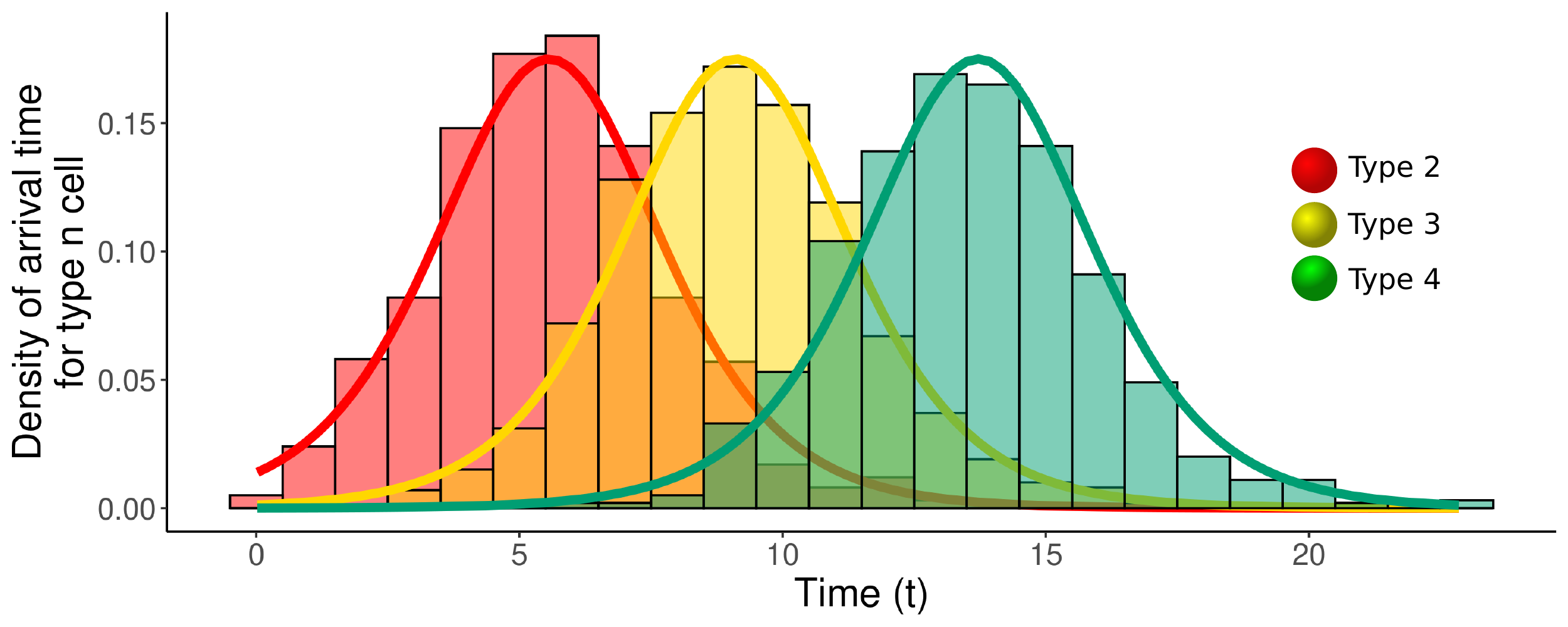}
	\caption{\textbf{Comparison of limiting logistic distribution for arrival times with stochastic simulations.}  Normalized histogram for the arrival times of types 1-3 obtained from 1000 simulations of the exact model versus the probability density corresponding to the logistic distribution of Eq.~\eqref{eqn_logistic_dist}. Note the shape of the distribution remains unchanged.  Parameters: $\alpha_1 = \alpha_3 = 1,\,\alpha_2 = 1.4,\,\nu_1=\nu_2 =\nu_3 = 0.01,\, \beta_1 = \beta_2= 0.3,\,\beta_3 = 1.5$.}
		\label{fig_times_dens}
\end{figure}

% \ta{
% To leading order the first mutation appears at approximately at $(1/\lambda_1)\log(1/\nu_1)$, which diverges as the mutation rate gets smaller, plus a next to leading constant term which doesn't depend on $\nu_1$. The constant term becomes negligible when $\log(1/\nu_1)\gg \log(\lambda_1^2/\alpha_1)$, that is when the mutation rate is much smaller than $\alpha_1/\lambda_1^2$. This is possible in applications where the division rate is of order one, yet the growth rate is small, yet the constant terms are often important too. We need two assumptions to get the leading order behavior: (i) mutations occur when the population size reaches $1/\nu_1$ and (ii) the population follows the mean behavior that is it grows as a deterministic exponential function. Then the solution of $e^{\lambda_1 t_{1/2}^{(1)}} = 1/\nu_1$ provides the leading order mean time. To add a subsequent type $n$, we start an exponential growth at the arrival time of the new type, at rate $\delta_n$ which is largest fitness of all types so far. Note that this means that a type $n$ with a largest fitness grows at rate $\lambda_n=\delta_n$, otherwise it grows at the rate of the previous largest fitness $\delta_n(>\lambda_n)$. This reproduces the leading order in \eqref{eqn_medrecur}.  }

\subsection{Properties of the results}
\textbf{Population sizes.} From Eq.~\eqref{eqn_Zlimit_sum}, we see that on a logarithmic scale (as in Fig~\ref{fig:model_schematic}C), at large times the number of cells approximately follows a straight line with gradient that increases only when the running-max fitness increases. \resub{When the running-max fitness does increase ($\delta_{n-1}<\lambda_n$), then the  type $n$ cell number grows exponentially with rate $\lambda_n$. Conversely, if} the type $n$ cells have net growth rate smaller than the running-max fitness ($\delta_{n-1}>\lambda_n$), then as the large time behaviour \resub{of the type $n$ cell number is exponential growth with rate  $\delta_{n-1}=\delta_{n}$, the flux from the type $n-1$ population eventually drives the cell growth.} One can observe this behaviour in Fig~\ref{fig:model_schematic}C: although the type 2 cells have lower fitness than type 1, the population sizes both eventually grow at the same rate of $\lambda_1$. However, the type 3 cells have the largest fitness so far, hence the cell number grows at its own rate $\lambda_3$. \resub{When the type $n$ cells have net growth rate equal to the running-max fitness ($\delta_{n-1}=\lambda_n$), relevant for a neutral mutations scenario, then exponential growth at rate $\delta_n$ occurs but with an additional geometric factor of $t^{r_n-1}$. The origin of this geometric factor is best understood by considering the mean growth for $n=2,\,\lambda_1=\lambda_2$ \cite{Cheek:2018}.} \resub{In this case mutations occur at rate proportional to $e^{\lambda_1 s}$ and the average number of descendants from a mutation which occurs at time $s$ is $e^{\lambda_1(t-s)}$ by time $t$. Hence, at time $t$, the mean number of mutants is $\propto\int_0^t  e^{\lambda_1 s} e^{\lambda_1(t-s)} ds =  t e^{\lambda_1 t}$, which is the same geometric factor that appeared as for the limit result Eq.~\eqref{eqn_Zlimit_sum}. Extending this argument to type $n$ explains the geometric factor.}

The random amplitude of the deterministic growth, $V_n$, has a Mittag-Leffler distribution, \resubtwo{with infinite mean if $\lambda_1<\delta_n$, which is driven by a power-law decay in its distribution}. Intuition for the tails can be gleaned from the case of $n=2$ \cite{Cheek:2018}. In the $\lambda_1<\lambda_2$ case, the power-law tail arises due to rare, early mutations from the type $1$ cells. The descendants of these early mutations make a considerable contribution to the total number of type $2$ cells even at large times (see discussion of Theorem 3.2 in \cite{Cheek:2018}). However, for $\lambda_1\geq \lambda_2$, the type $2$ descendants from any given mutation eventually make up zero proportion of the type $2$ population. \resubtwo{Instead, the sheer number of new mutations from the type $1$ cells drives the growth of the type $2$ population, and in this case the tail decays exponentially.} \resubtwo{To move to type $n$, from Eq. \eqref{eqn_Vcorr} we see that if $ \delta_{n}\geq \lambda_{n+1}$ then the randomness in the cell number is inherited from type $n$ to type $n+1$. Thus if the running-max fitness does not exceed the growth rate of the type 1 population, that is if $\delta_1=\delta_n$, then an exponential distribution will be propagated, i.e.\@ all $(V_i)_{i=1}^{n}$ follow an exponential distribution. However, if the running-max fitness does increase, then for the first $i$ such that $\delta_i<\lambda_{i+1}$, a power-law tail will emerge for $V_{i+1}$. For types that occur after the emergence of the power-law, that is for $j>i+1$, if the running-max fitness does not increase then the power-law with tail-exponent $\lambda_1/\lambda_{i+1}$ will be propagated, again due to the inheritance property of Eq. \eqref{eqn_Vcorr}. If instead the running-max fitness increases again, i.e. there is $j>i+1$ such that $\lambda_{i+1}<\lambda_j$, then the power-law tail remains but with the exponent decreased to $\lambda_{1}/\lambda_j$. Thus, if the running-max fitness ever rises above $\delta_1$, the tail of the random amplitude has a power-law decay with a monotone decreasing exponent $\lambda_1/\delta_n$.}

Our approximation \eqref{eqn_Zlimit_sum} for the cell number of the type $n$ cells is valid for large times. Additionally, small mutation rates are required when the running-max fitness increases, so $\lambda_1<\delta_n$. Heuristically, we expect the approximation to be valid at large enough times such that the type $n$ cells have been seeded with high probability, that is for $t\gg t_{1/2}^{(n)}$. Around the arrival time for the type $n$ cells, $t\approx t_{1/2}^{(n)}$, fluctuations in the cell number can be greater, which can be seen even in the two-type setting. In the two-type neutral case ($\lambda_1=\lambda_2$), from Eq.~\eqref{eqn_Zlimit_sum} we expect that, for $t\gg t_{1/2}^{(2)}$, 
$
Z_2(t)\approx V_2 \resub{t e^{\lambda_1 t} }
$
where $V_2$ is exponentially distributed, and therefore has an exponentially decaying tail. However, for $t\approx t_{1/2}^{(2)}$ (or $e^{\lambda_1 t} \approx \nu_1^{-1}$), it is known that $Z_2(t)$ has a heavy-tailed distribution, commonly known as the Luria-Delbr\"{u}ck distribution \cite{Zheng:1999, Kessler:2013,Cheek:2018}. On the other hand, for $\lambda_1<\lambda_2$, we found that $V_2$ does have a power-law heavy-tail as for the Luria-Delbr\"{u}ck distribution.
Therefore, at times around the arrival time for type $n$ cells, the fluctuations in cell number may exceed the characterisation given in Eq.~\eqref{eqn_Zlimit_sum}, but at larger times they are described by the Mittag-Leffler random variable $V_n$. We also note that, in the scale parameter recursion of Eq.~\ref{eqn_omegarec}, when mutations are mildly deleterious ($0<\delta_n-\lambda_{\resub{n+1}}\ll 1$), the scale parameter can take large values. Therefore, caution should be adopted when using our approximation in this case.

    % \ta{In your argument you have the same problem: of the clone sizes are finite, the mean number of mutants is finite, yet $V_n$ has infinite mean. But also the mean clone sizes is not explained.} \mn{In this case, at time $t$, of order $ e^{\lambda_1 t}$ mutations have occurred. From each mutation, a clone of type $2$ grows (the type 2 cell number is therefore the sum over the clone sizes) and the expected size of a randomly sampled clone is $\approx t$ (Theorem 2 in \cite{Nicholson:2016}), thus resulting in $\approx  t e^{\lambda_1 t}$ cells of type 2. Extending to type $n$, if order of $t^{r_{n-1}-1}e^{\delta_{n-1} t}$ clones have occurred due to mutations from the type $n-1$ population,  with the clones having expected size $\approx t$, then the number of type $n$ cells is $\approx t^{r_{n-1}}e^{\delta_{n-1} t}=t^{r_n-1}e^{\delta_n t}$.}
\textbf{Arrival times.} The arrival time density has a general shape centred at $t^{(n)}_{1/2}$ (Fig~\ref{fig_times_dens}). As expected, the median arrival time increases with $n$ or as the mutation rates decreases, and the recursion of Eq. \ref{eqn_medrecur} explicitly details how these parameters interact. In contrast, the variance of the arrival time is always $\approx \pi^2/(3\lambda_1^2)$. Moreover, the entire shape of the distribution, which is centered around $t_{1/2}^{(n)}$, is determined only by $\lambda_1$. Thus due to the constant variance, for $t_{1/2}^{(n+1)} \gg \pi^2/(3\lambda_1^2)$, modellers may safely ignore the stochastic nature of waiting times and treat the arrival time of the type $n$ cells as deterministic. However, our result raises questions for statistical identifiability; aiming to distinguish between models, e.g.\ does a phenotype of interest require 2 or 3 mutations, based on fluctuations may be difficult due to the common logistic distribution.

 The formulas for the arrival times \eqref{eqn_medrecur} are valid for small mutation rates, \resub{and to leading order the increase in the median arrival time for each new type (i.e. $t_{1/2}^{(n+1)} - t_{1/2}^{(n)}$) is $\delta_n^{-1}\log(\nu_n ^{-1})$. An intuitive understanding can be gained by assuming that}: (i) the arrival time for the type $n+1$ cells approximately occurs when the type $n$ population size reaches $1/\nu_n$ and (ii) we can ignore fluctuations in population size such that \resub{the type $n$ population grows exponentially as in the deterministic factor of Eq.~\eqref{eqn_Zlimit_sum}. Then, \resubtwo{for the case $n=1$,} we simply find $t^{(2)}_{1/2}$ as the time it takes an exponentially growing population to grow from one cell to $1/\nu_1$, that is we solve $e^{\lambda_1 t_{1/2}^{(2)}} = 1/\nu_1$, which reproduces the leading order of Eq.~\eqref{eqn_thalf_base} as $\nu_1\rightarrow 0$}. Similarly, for the arrival times for type $n+1$, suppose we start an exponential function at $t_{1/2}^{(n)}$ with net growth rate $\delta_n$; this growth will take $\delta_n^{-1}\log(\nu_n\resub{^{-1}})$ time to reach the threshold of $\nu_n^{-1}$ \resub{from one cell}. To leading order in small mutation rates, this reproduces the recursion of Eq.~\eqref{eqn_medrecur}.

\resub{\textbf{Comparison with prior special cases.}} 
\resub{Special cases of our results have been obtained previously. Durrett and Moseley \cite{Durrett:2010} obtained the formulas for the arrival time in the special case $\lambda_1<\lambda_2<\dots<\lambda_n$ in the context of accumulation of driver mutations in cancer, and the leading order was also derived in \cite{Bozic:2010}. A key conclusion of \cite{Bozic:2010,Durrett:2010} follows directly from the representation of the difference in median arrival times given in Eq. \eqref{eqn_medrecur}:
Assuming a constant driver mutation rate $(\nu_1=\ldots =\nu_n$), the median waiting time between the $n$th and $(n+1)$th driver mutation is approximately
 $$
 t_{1/2}^{(n+1)}- t_{1/2}^{(n)} = \frac{1}{\lambda_{n}}\log\frac{\lambda_n}{\nu_n}
 -\frac{1}{\lambda_{n-1}}\log c_{n-1}\lambda_{n-1}
 $$
 which decreases as a function of $n$. Hence, under this model, tumor evolution accelerates during its growth \cite{Bozic:2010, Durrett:2010}.  \resubtwo{For a comparison with the formulas of \cite{Durrett:2010}, note that in this case the running-max fitness for type $j$ is always $\lambda_j$, that is $\delta_j=\lambda_j$, and so $r_j=1$ for all $j$. Further, the cell types in \cite{Durrett:2010} are numbered from zero.} Then the quantity $\omega_{n+1}^{\lambda_1/\lambda_{n+1}}$ as defined in this paper corresponds and agrees with $c_{\theta,n}\mu_n$ of \cite{Durrett:2010} (the formulas in \cite{Durrett:2010} contain some misprints, but they are corrected in \cite{Durrett:2015}). 
Durrett and Moseley \cite{Durrett:2010} also pointed out that the shapes of the distributions of both the arrival time and the population size were independent of $n$. These distributions were also observed for the special case $\lambda_1>\lambda_i$ for $1<i\le n$ in \cite{Nicholson:2019}; this case was studied under the motivation of mutations that confer drug resistance but at a fitness cost. In the present paper we have found that even for a general sequence of net growth rates the distribution shapes remain independent of $n$ and their dependence on the rate parameters can be written in relatively simple terms.}

\subsection{\resub{An application: $n$-mutation fluctuation assays. }}
%\textbf{\resub{An application: $n$-mutation fluctuation assays. }}
\resub{ Pairing mathematical models for the emergence of drug resistance during exponential population growth with experimental fluctuation assays enables the inference of mutation rates \cite{Luria:1943,Ma:1992}. In the classic fluctuation assay, replicates are initiated by a small number of drug sensitive cells, which are then grown for either a fixed time period or until the total population reaches a given size. The cells are then exposed to the drug, killing non-resistant cells, which allows the number of replicates without resistance, and the mutant number in those replicates with resistance, to be measured. These experimental quantities are then combined with an appropriate statistical model to infer the mutation rate of acquiring resistance \cite{Rosche:2000}. Originally, only wild type and mutated cells were considered in fluctuation assays. However, including multiple types is required when assessing multidrug resistance, investigating resistant-intermediates such as persistor cells \cite{Russo:2022}, or if multiple gene amplifications are needed for therapy resistance.}
\resub{Gene amplifications are a prevalent resistance mechanism in cancer \cite{Borst:1991} and amplification rates have been previously reported using fluctuation assays \cite{Tlsty:1989}, under the standard assumption of a single mutational transition to resistance. However,
\resubtwo{the modelling assumption of a single mutation imbuing therapy tolerance may be invalid if multiple amplifications are required for resistance.} For example, the drug resistant WB$_{20}$ rat epithelial cell line in Tlsty \textit{et al} \cite{Tlsty:1989} contained 4 gene copies, compared to the wild type having only 1 copy of the resistance gene. In such settings, to meaningfully infer amplification rates, an inference framework that describes sequential mutation acquisition is needed. With our results such a modified inference scheme can be constructed.}

\resub{For simplicity, and as is typical for mutation rate inference, assume mutations are modelled as neutral ($\lambda_1=\lambda_2=\ldots$) and that mutations occur at rate $\nu$ ($\nu = \nu_1= \nu_2=\ldots)$. Suppose $k$ replicates of a fluctuation assay are performed and the number of replicates without resistance, and/or the distribution of mutant numbers over replicates is recorded (Fig~\ref{fig_fluctassay}A). If the mutation rate $\nu$ is known, the distribution of replicates without resistance is binomial with $k$ trials and success probability given by the logistic distribution of Eq. \eqref{eqn_logistic_dist} (further details on inference methodology is given in the supplementary material S1 Text). In this setting the median arrival time of the $(n+1)$th type is 
\begin{equation*}\label{eqn_median_neutral}
 t_{1/2}^{(n+1)}=\frac{1}{\lambda_1}\log\frac{\lambda_1^2 (n-1)!}{\alpha_1[\lambda_1^{-1}\log(\nu^{-1})]^{n-1}\nu^n}.
\end{equation*}
Hence, given the number of replicates without resistance, the unknown mutation rate $\nu$ may be inferred by maximum likelihood ($p_0$ method). Similarly, the mutant count distribution over replicates would be characterised by Eq. \eqref{eqn_Zlimit_sum}, which in this setting take the simple form of 
$$
Z_n(t)\approx V_n t^{n-1} e^{\lambda_1 t},
$$
with $V_n$ an exponential random variable with mean 
$
\omega_n = \frac{\alpha_1}{\lambda_1}\frac{\nu^{n-1}}{(n-1)!}.
$ Maximum likelihood for the mutant counts under this distribution provides a secondary approach to infer $\nu$.}
\begin{figure}[t!]
	\centering
	\includegraphics[width = \linewidth]{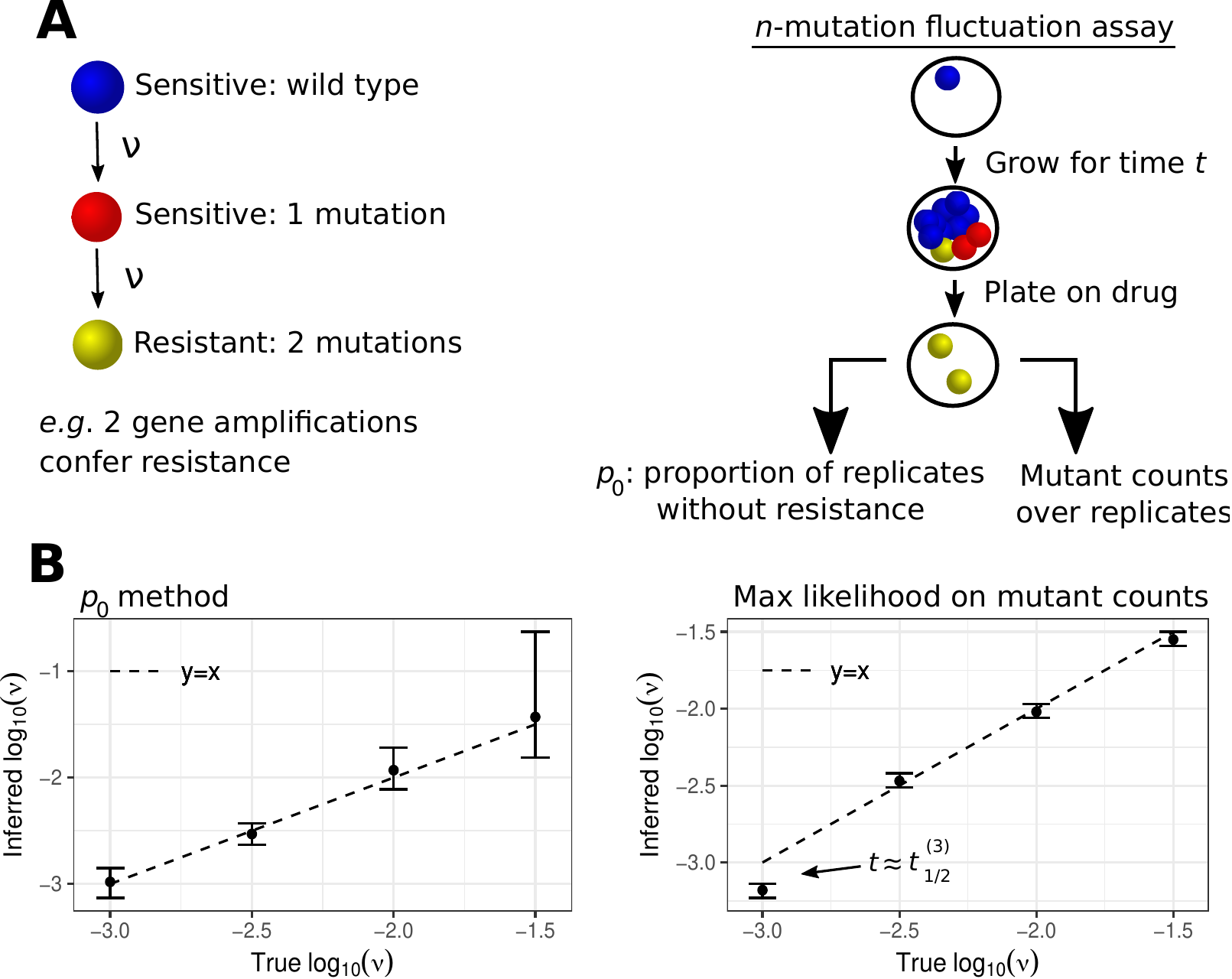}
	\caption{\resub{\textbf{Statistical inference for an $n$-mutation fluctuation assay} A. Schematic of a fluctuation assay for the measurement of mutation rates when $n$ mutations are required for resistance. Drug sensitive cells are initially cultured, and after growth for a given time $t$, the cells are exposed to a selective medium. Non-resistant cells are killed, revealing the number of mutants. This experiment is conducted over replicates, and the number of replicates without resistance and the mutant numbers are recorded. B. Likelihood inference on a simulated fluctuation assay assuming: 2 mutations are required for resistance, 100 replicates, no death, $\alpha_i=1$ for each $i$, $t=10$, and the mutation rate $\nu$ stated on the $x$-axis.
 Wide error bars are expected when using the $p_0$ method for $t\gg t_{1/2}^{(n)}$ as only a small number of replicates have no resistant cells; in such a setting using the mutant counts (right panel) provides superior inference. Likewise, if $t\approx t_{1/2}^{(n)}$ the approximation of Eq. \eqref{eqn_Zlimit_sum} is not appropriate, which explains the inaccurate inference for $\log_{10}(\nu)=-3$ when using the mutant counts; the $p_0$ method provides improved inference in this scenario.}}
		\label{fig_fluctassay}
\end{figure}

Fig~\ref{fig_fluctassay}B shows likelihood inference for the mutation rate using both approaches assuming 100 simulated replicates and that 2 mutations (e.g. amplifications) confer resistance. The two inference approaches have strengths and weaknesses depending on the underlying mutation rate and the time $t$ for which the cells are grown before being exposed to the drug. If $t$ is too large $(t\gg t_{1/2}^{(n)})$ the majority, or all, replicates will have resistant cells, and hence the number without resistance carries limited information on the mutation rate (e.g. the wide error bars for $\log_{10}(\nu) = -1.5$ in the left plot of Fig~\ref{fig_fluctassay}B). Instead, the long-time limit approximation of the mutant count distribution, Eq.\eqref{eqn_Zlimit_sum}, is appropriate, and here our simulated inference for the mutation rate closely matches the true parameter value (Figure 4B). However, if $t$ isn't large enough ($t\approx t_{1/2}^{(n)})$ then Eq.\eqref{eqn_Zlimit_sum} poorly characterises the distribution of resistant cells (e.g. the incorrect inference for $\log_{10}(\nu) = -3$ in the right plot of Fig~\ref{fig_fluctassay}B); instead, the $p_0$ method enables accurate inference of the mutation rate. Hence, similar to the advice for the classic fluctuation assay \cite{Rosche:2000}, if only some replicates show resistance the $p_0$ method is preferred, whereas if all replicates have sizeable mutant numbers, inference using the mutant counts is advisable. \resubtwo{Note that our inference here has assumed known birth rates and no death. These rates could be measured by standard experimental protocols, for example using growth curve assays.} Kimmel and Axelrod \cite{Kimmel:1994} also gave statistical consideration to a fluctuation assay where two mutations are needed. However, in principle (neglecting experimental complexities), our results hold for any $n$, include death, and \resubtwo{allow for} varied growth rates between the cell types, extending the work of Ref.~\cite{Kimmel:1994}.

 \section{Discussion}

Due to their simplicity and ability to model fundamental biology such as cell division, death, and mutation, multitype branching processes have become a standard tool for quantitative researchers investigating evolutionary dynamics in exponentially growing populations. Further, these models are able to link detailed microscopic molecular processes to explain macroscopic experimental, clinical, and epidemiological data \cite{Altrock:2015, Bozic:2020}. Despite the importance of this framework, even simple questions are often challenging to examine. Whilst numerical and simulation based methods have proven powerful for both model exploration and statistical inference, the computational expense of simulating to plausible scales can lead to challenges; e.g. simulating to tumour sizes orders of magnitude smaller than reality, which provides obstacles for biological interpretation of inferred parameters. Moreover, it is often unclear how to precisely summarise the manner in which a large number of parameters interact to influence quantities of interest, such as the time until a triply resistant cell emerges. In this study, we analysed the regimes of large times, and small mutation rates, in order to develop limiting formulas that can be used to quickly gain intuition or for approximate statistical inference

We have focused on the number, and arrival time, of cells with $n$ mutations. While this problem dates back at least to the work of Luria and Delbr\"{u}ck - where a mutation resulted in phage resistant bacteria - specific instances of the problem are commonly used to study a variety of biological phenomena \cite{Rosche:2000, Ford:2013, Haeno:2012, Reiter:2018, Lakatos:2020, Bozic:2013, Fu:2015, Bozic:2010, Avanzini:2020,Leder:2011, Zhang:2022}. The time of first mutation is well known, however the arrival time of cells with $n$ alterations is unclear outside of specific fitness landscapes \cite{Durrett:2010,Nicholson:2019}. Here, we developed approximations for the cell number and arrival time regardless of whether mutations increase, decrease, or have no effect on the growth rate of the cells carrying the alterations. We showed that, within relevant limiting regimes, the number of type $n$ cells can be decoupled into the product of a deterministic time-dependent function and a time-independent Mittag-Leffler random variable; meanwhile the arrival time of type $n$ cells follows a logistic distribution with a shape that depends only on the net growth of the type 1 cells. The features of these distributions, such as median arrival time, can be exactly mapped to the underlying model parameters, that is the division, death, and mutation rates. These results illuminate the effects of mutation and selection, and can be readily numerically evaluated to explore particular biological hypotheses. We highlighted the \resub{utility} of our results on mutation rate inference in fluctuation assays.
% in several biological settings including: multitype fluctuation assays for mutation rate measurement, the effect of driver mutations and deleterious variants on the speed of cancer evolution, and how the selective effects of mutations influences the detection probability for blood cancers.

 As the biological processes studied become increasingly complex, so too will the mathematical models constructed to describe such processes. We hope that the results of the present paper will enable researchers to find simplicity in an arbitrarily complex parameter landscape for a fundamental class of mathematical models.

%%%%%%%%%%%%%%%%%%%%%%%%%%%%%%%%%%%%%%%%%%%%%%%%%%%%%%%%%%%%
%%%%%%%%%%%%%%%%%%%%%%%%%%%%%%%%%%%%%%%%%%%%%%%%%%%%%%%%%%%%

\section{Methods}\label{sec_general_results}
In this section we provide detailed results and proofs in their general form. 

\subsection{Branching process: population growth}\label{sec_popsize1}
We first look to understand the number of cells of type $n$ at time $t$, that is $Z_n (t)$, at large times.

\begin{prop}\label{prop_cellnum1}
 \resub{Assume non-extinction of the type 1 population, that is that $Z_1(t)>0$ for all $t\geq0$. Then, f}or each $n\in\mathbb{N}$, there exists a $(0,\infty)$-valued random variable $V_n$ such that
\[
\lim_{t\rightarrow\infty}t^{-r_n+1}e^{-\delta_nt}Z_n(t)=V_n
\]
almost surely.
\end{prop}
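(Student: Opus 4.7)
The plan is to proceed by induction on $n$. For the base case $n=1$, the type 1 population viewed in isolation is a linear birth-death process with parameters $(\alpha_1,\beta_1)$, because mutations leave the parent cell intact. The classical Kesten--Stigum theorem then gives $e^{-\lambda_1 t}Z_1(t)\to V_1$ almost surely, with $V_1>0$ on the event of non-extinction; since $\delta_1=\lambda_1$ and $r_1=1$, this matches the claimed scaling.

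For the inductive step, the key device is to decompose $Z_n(t)$ through the mutation events from type $n-1$ to type $n$. Conditional on the history of $Z_{n-1}$, the mutation times $\{\sigma_i\}$ form a Cox process with intensity $\nu_{n-1}Z_{n-1}(s)$, and each mutation seeds an independent copy $\xi_i$ of a linear birth-death process with rates $(\alpha_n,\beta_n)$, yielding
\[
Z_n(t)=\sum_i \xi_i(t-\sigma_i)\mathbf{1}\{\sigma_i\le t\}.
\]
A further application of Kesten--Stigum gives $e^{-\lambda_n u}\xi_i(u)\to\omega_i$ almost surely, where the $\omega_i$ are i.i.d.\ non-negative random variables (identically zero when $\lambda_n\le 0$). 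The asymptotic behavior of $Z_n(t)$ is governed by a competition between the subprocess growth rate $\lambda_n$ and the flux growth rate $\delta_{n-1}$, splitting into the same three sub-cases that appear in Eq.~\eqref{eqn_omegarec}.

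If $\lambda_n>\delta_{n-1}$ (new maximum), then $\delta_n=\lambda_n$, $r_n=1$, and writing
\[
e^{-\lambda_n t}Z_n(t)=\sum_{i:\sigma_i\le t} e^{-\lambda_n\sigma_i}\bigl[e^{-\lambda_n(t-\sigma_i)}\xi_i(t-\sigma_i)\bigr],
\]
the inductive hypothesis forces $\sum_i e^{-\lambda_n\sigma_i}<\infty$ almost surely (mutations arrive at rate $\sim e^{\delta_{n-1}s}\ll e^{\lambda_n s}$), and a dominated-convergence argument identifies the a.s.\ limit $V_n:=\sum_i e^{-\lambda_n\sigma_i}\omega_i$, which is finite and positive. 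If $\lambda_n=\delta_{n-1}$ (tied maximum), the analogous sum diverges at rate $t^{r_{n-1}}$ and one extracts the limit after dividing by $t^{r_{n-1}}$, using a strong law for the weighted $\omega_i$'s. If $\lambda_n<\delta_{n-1}$ (below maximum), the per-mutation contributions are negligible relative to the flux; one evaluates $\mathbb{E}[Z_n(t)\mid Z_{n-1}]=\int_0^t\nu_{n-1}Z_{n-1}(s)e^{\lambda_n(t-s)}\,ds$ asymptotically using the inductive hypothesis to extract a leading term proportional to $V_{n-1}\,t^{r_{n-1}-1}e^{\delta_{n-1}t}$, then upgrades convergence in expectation to a.s.\ convergence via a conditional second-moment bound.

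The main obstacle is establishing almost-sure (rather than in-probability or in-distribution) convergence, particularly in the tied-maximum and below-maximum cases where the number of effective mutations diverges. In the tied case, a careful truncation combined with a strong law for i.i.d.\ $\omega_i$'s weighted by $e^{-\lambda_n\sigma_i}$ is required. In the below-maximum case when $\lambda_n\le 0$, the per-subprocess Kesten--Stigum limit degenerates and one must instead control the conditional variance of $Z_n(t)$ directly from the generator of the multitype process. Positivity of $V_n$ is inherited from $V_{n-1}>0$ via the relation anticipated in Eq.~\eqref{eqn_Vcorr} in the tied and below-maximum cases, and in the new-maximum case follows from the a.s.\ survival of at least one of the infinitely many seeded subprocesses.
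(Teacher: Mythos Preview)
Your inductive skeleton---condition on the trajectory of the previous type, decompose the new type as a Cox-seeded sum of independent birth-death processes, and split into the three cases $\lambda_n\lessgtr\delta_{n-1}$---is exactly the paper's. The technical execution, however, is different. The paper does not work with the per-subprocess Kesten--Stigum limits $\omega_i$ at all. Instead, having fixed the intensity $f(s)=\nu_{n-1}Z_{n-1}(s)$, it defines the single martingale
\[
M(t)=e^{-\lambda_n t}Z_n(t)-\int_0^t e^{-\lambda_n s}f(s)\,ds
\]
and bounds $\mathbb{E}[M(t)^2]$ directly using the explicit second moment of a linear birth-death process. For $\delta_{n-1}\le\lambda_n$ the bound is uniform, so the martingale convergence theorem applies and one reads off the limit from the deterministic integral; for $\delta_{n-1}>\lambda_n$ the second moment grows but slowly enough that Doob's maximal inequality plus Borel--Cantelli forces $t^{-r_{n-1}+1}e^{(\lambda_n-\delta_{n-1})t}M(t)\to 0$ almost surely. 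This handles all three cases, including $\lambda_n\le 0$, with one device.

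Your route is viable but heavier where it matters most. In the tied-maximum case the replacement of $e^{-\lambda_n(t-\sigma_i)}\xi_i(t-\sigma_i)$ by $\omega_i$ is not uniform over $i$: recently seeded subprocesses have not converged, and you must show their total contribution is $o(t^{r_{n-1}})$ before invoking any law of large numbers on the $\omega_i$. Your ``strong law for weighted $\omega_i$'s'' is also nonstandard, since the weights $e^{-\lambda_n\sigma_i}$ are themselves random and correlated with the number of summands. Both issues can be handled by the truncation you mention, but the paper's martingale bypasses them entirely: because $M(t)$ converges almost surely, dividing by $t^{r_{n-1}}$ kills it, and the limit is simply $t^{-r_{n-1}}\int_0^t e^{-\lambda_n s}f(s)\,ds$, computed by dominated convergence. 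In the below-maximum case your ``conditional second-moment bound'' is essentially the paper's argument already; and your observation that for $\lambda_n\le 0$ one must abandon the $\omega_i$ decomposition and work directly with variances is precisely why the paper never introduces the $\omega_i$ in the first place.
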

As our branching process is reducible this result is not considered classical \cite{Athreya:2004}. 
Heuristically, the result says that for large $t$, $Z_n(t)\approx V_n t^{r_n-1}e^{\delta_nt}$ and so at large times all the stochasticity of $Z_n(t)$ is bundled into the variable $V_n$.

Towards proving Proposition \ref{prop_cellnum1}, we first consider a model of a deterministically growing population which seeds mutants as a Poisson process, the mutants growing as a branching process. The next result defines the model and describes the large-time number of mutants, generalising a result of \cite{Keller:2015}.

\begin{mylem}\label{lemma_genlim}
Let $(f(t))_{t\geq0}$ be a non-negative cadlag function, $x,\delta>0$, and $r\geq0$, with $$\lim_{t\rightarrow\infty}t^{-r}e^{-\delta t}f(t)=x.$$ Suppose that $(T_i)_{i\in\mathbb{N}}$ come from a Poisson process on $[0,\infty)$ with intensity $f(\cdot)$. Suppose that $(Y_i(t))_{t\geq0}$, $i\in\mathbb{N}$, are i.i.d. birth-death branching processes \resub{initiating from a single cell, that is $Y_i(0)=1$, with birth and death rates $\alpha$ and $\beta$}. Let $\lambda=\alpha-\beta$. Define
\[
Z(t)=\sum_{i:T_i\leq t}Y_i(t-T_i).
\]
Then
\ba
\begin{cases}
\lim_{t\rightarrow\infty}t^{-r}e^{-\delta t}Z(t)=\frac{x}{\delta-\lambda},\quad&\text{for }\delta>\lambda;\\
\lim_{t\rightarrow\infty}t^{-r-1}e^{-\delta t}Z(t)=\frac{x}{r+1},\quad&\text{for }\delta=\lambda;\\
\lim_{t\rightarrow\infty}e^{-\lambda t}Z(t)=V,\quad&\text{for }\delta<\lambda;
\end{cases}
\ea
almost surely. Here $V$ is some positive random variable with mean $\int_0^\infty e^{-\lambda s}f(s)ds$.

\begin{proof}
\resubtwo{We first give the argument assuming $\lambda\neq 0$, and provide a comment at the end of the proof indicating modifications needed for the $\lambda=0$ case}.

First we claim that
\[
M(t)=e^{-\lambda t}Z(t)-\int_0^te^{-\lambda s}f(s)ds,\quad t\geq0;
\]
is a martingale with respect to the natural filtration. Indeed, \resub{for $s\leq t$,}
\ba
\mathbb{E}[M(t)|\mathcal{F}_s]&=&e^{-\lambda t}\mathbb{E}[Z(t)|\mathcal{F}_s]-\int_0^te^{-\lambda u}f(u)du\\
&=&e^{-\lambda  t}\left(Z(s)e^{\lambda (t-s)}+\int_s^tf(u)e^{\lambda (t-u)}du\right)-\int_0^te^{-\lambda  u}f(u)du\\
&=&M(s),
\ea
as required.

Next we look to bound the second moment of $M(t)$. To this end, observe that $Z(t)=\sum_{i:T_i\leq t}Y_i(t-T_i)$ is a compound Poisson distribution which is a Poisson$\left(\int_0^tf(s)ds\right)$ sum of i.i.d. random variables distributed as $Y_1(t-\xi)$, where $\xi$ is a $[0,t]$-valued random variable with density proportional to $f$ \resub{(see, e.g., Section 2 of \cite{Keller:2015})}. Using the already-known second moment for a birth-death branching process \cite{Harris:1963} (see Theorem 6.1 on page 103),
\[
\mathbb{E}\left[Y_i(t)^2\right]=\frac{2\alpha}{\lambda }e^{2\lambda t}-\frac{\alpha+\beta}{\lambda }e^{\lambda t},
\]
we have that
\[
\mathbb{E}\left[Y_1(t-\xi)^2\right]=\frac{\int_0^tf(s)\left(\frac{2\alpha}{\lambda }e^{2\lambda (t-s)}-\frac{\alpha+\beta}{\lambda }e^{\lambda (t-s)}\right)ds}{\int_0^tf(s)ds}.
\]
It follows that
\ba
\text{Var}Z(t)&=&\mathbb{E}\left[Y_1(t-\xi)^2\right]\int_0^tf(s)ds\\
&=&\int_0^tf(s)\left(\frac{2\alpha}{\lambda }e^{2\lambda (t-s)}-\frac{\alpha+\beta}{\lambda }e^{\lambda (t-s)}\right)ds,
\ea
and since $\mathbb{E}M(t)=0$, we find that
\ba
\mathbb{E}[M(t)^2]&=&\text{Var}[M(t)]
 ~=e^{-2\lambda  t}\text{Var}Z(t)\\
&=&e^{-2\lambda  t}\int_0^tf(s)\left(\frac{2\alpha}{\lambda }e^{2\lambda (t-s)}-\frac{\alpha+\beta}{\lambda }e^{\lambda (t-s)}\right)ds\nonumber\\
&\leq&\frac{2\alpha}{\lambda }\int_0^te^{-2\lambda  s}f(s)ds\nonumber\\
&=&\frac{2\alpha}{\lambda }\int_0^t(s\vee1)^re^{(\delta-2\lambda ) s}(s\vee1)^{-r}e^{-\delta s}f(s)ds\nonumber\\
&\leq&\frac{2\alpha}{\lambda }\sup_{s\geq0}\left[(s\vee1)^{-r}e^{-\delta s}f(s)\right]\int_0^t(s\vee1)^re^{(\delta-2\lambda ) s}ds\nonumber.\ea
Therefore
\begin{align}\label{eqn_mbound}
\mathbb{E}[M(t)^2]\leq\begin{cases}Ct^re^{(\delta-2\lambda ) t},\quad&\text{for } \delta>2\lambda;\\
 Dt^{r+1},\quad&\text{for } \delta=2\lambda ;\\
 E,\quad&\text{for } \delta<2\lambda,
\end{cases}
\end{align}
where $C$, $D$ and $E$ are positive constants.

To conclude the proof, we will separately consider the three cases listed in the Lemma's statement: $\delta<\lambda$, $\delta=\lambda$, and $\delta>\lambda$.

\resub{We begin with the  case $\delta<\lambda$.} Here the martingale $M(t)$ has a bounded second moment. By the martingale convergence theorem, $M(t)$ converges to some random variable $V'$ with mean zero. Rearranging the limit of $M(t)$,
\[
\lim_{t\rightarrow\infty}e^{-\lambda t}Z(t)=\int_0^\infty e^{-\lambda s}f(s)ds+V'=:V,
\]
almost surely, where the integral converges because the integrand has an exponentially decaying tail. \resub{The positivity of $V$ can be seen by Fatou's lemma:
\begin{align*}
\lim_{t\rightarrow\infty}e^{-\lambda t}Z(t)&\geq  \liminf_{t\rightarrow\infty}e^{-\lambda t}Z(t)
\\
&\geq  \sum_{i\geq1}e^{-\lambda T_i}\liminf_{t\rightarrow\infty}1_{\{T_i<t\}}e^{-\lambda (t-T_i)}Y_i(t-T_i)  \hspace{1cm} \mbox{(Fatou's lemma)}
\\
& = \sum_{i\geq1}e^{-\lambda T_i}W_i
\end{align*}
% \lim_{t\rightarrow\infty}e^{-\lambda t}Z(t)\geq\sum_{i\geq1}e^{-\lambda T_i}\liminf_{t\rightarrow\infty}1_{\{T_i<t\}}e^{-\lambda (t-T_i)}Y_i(t-T_i)=\sum_{i\geq1}e^{-\lambda T_i}W_i,
% $$
where the $W_i=\liminf_{t\rightarrow\infty}1_{\{T_i<t\}}e^{-\lambda (t-T_i)}Y_i(t-T_i)$ are i.i.d. random variables on $[0,\infty)$ that are each non-zero with positive probability \cite{Athreya:2004,Durrett:2015} (recall this case assumes that $\lambda>\delta>0$ so that each $Y_i(\cdot)$ is supercritical). Hence, with probability one at least one of the $W_i$ is positive.} This gives the result for $\delta<\lambda$.

The second case is $\delta=\lambda$. Here the second moment of $M(t)$ is still bounded and so we can again apply the martingale convergence theorem to see that $M(t)$ converges almost surely. It follows that $$t^{-r-1}M(t)=t^{-r-1}e^{-\delta t}Z(t)-t^{-r-1}\int_0^te^{-\delta s}f(s)ds$$ converges to zero almost surely. Thus, using dominated convergence,
\ba
\lim_{t\rightarrow\infty}t^{-r-1}\int_0^te^{-\delta s}f(s)ds
&=&\lim_{t\rightarrow\infty}\int_0^1u^r(tu)^{-r}e^{-\delta tu}f(tu)du\\
&=&x\int_0^1u^rdu\\
&=&\frac{x}{r+1}
\ea
is the almost sure limit of $t^{-r-1}e^{-\delta t}Z(t)$.

The third and final case is $\delta>\lambda$. This case requires a new perspective because the second moment of $M(t)$ may not be bounded, disallowing the martingale convergence theorem. Instead we appeal to Borel-Cantelli. For $\epsilon>0$ and $n\in\mathbb{N}$, consider the events $$B^\epsilon_n:=\left\{\sup_{t\in[n,n+1]}\left(t^{-r}e^{(\lambda-\delta)t}M(t)\right)^2>\epsilon\right\}.$$
Then
\ba
\pr[B^\epsilon_n]&\leq&\pr\left[\sup_{t\in[n,n+1]}M(t)^2>\epsilon n^{2r}e^{2(\delta-\lambda)n}\right]\\
&\leq&\frac{\mathbb{E}[M(n+1)^2]}{\epsilon n^{2r}e^{2(\delta-\lambda)n}}\\
&\leq&Ge^{-\gamma n},
\ea
by Doob's martingale inequality and then Eq.~\eqref{eqn_mbound}; here $G$ and $\gamma$ are positive numbers which do not depend on $n$. By Borel-Cantelli, the probability that only finitely many of $(B^\epsilon_n)_{n\in\mathbb{N}}$ occur is one. Equivalently,
\[
t^{-r}e^{(\lambda-\delta)t}M(t)=t^{-r}e^{-\delta t}Z(t)-t^{-r}e^{(\lambda-\delta)t}\int_0^te^{-\lambda s}f(s)ds
\]
converges to zero
almost surely. Thus, using dominated convergence,
\ba
\lim_{t\rightarrow\infty}t^{-r}e^{(\lambda-\delta)t}\int_0^te^{-\lambda s}f(s)ds&=&\lim_{t\rightarrow\infty}\int_0^t\left(t^{-r}(t-s)^r(t-s)^{-r}e^{-\delta(t-s)}f(t-s)\right)e^{(\lambda-\delta)s}ds\\
&=&\int_0^\infty xe^{(\lambda-\delta)s}ds\\
&=&\frac{x}{\delta-\lambda}
\ea
is the almost sure limit of $t^{-r}e^{-\delta t}Z(t)$.

\resubtwo{For the case of $\lambda=0$, minor modifications are required. Firstly, the second-moment has the form
\[
\mathbb{E}\left[Y_i(t)^2\right]=1+2\alpha t,
\] and hence
\begin{align*}
\mathbb{E}[M(t)^2]= &
\int_0^t f(s) (1+2\alpha(t-s))\,ds
\\
\leq &  \int_0^t f(s) (1+2\alpha t)\,ds
\\
= & (1+2\alpha t) \int_0^t (1\vee s)^r e^{\delta s} (1\vee s)^{-r} e^{-\delta s} f(s) \,ds
\\
\leq & (1+2\alpha t)\sup_{s\geq 0}\left[(1\vee s)^{-r} e^{-\delta s} f(s) \right]\int_0^t (1\vee s)^r e^{\delta s}\, ds 
\\
\leq & C'(1+t) e^{\delta t} t^{r},
\end{align*}
with $C'$ a positive constant.
When $\lambda=0$, then $\delta>\lambda$. Thus, the above bound should be used in the Borel-Cantelli centred argument, which leads to the same result.}
\end{proof}
\end{mylem}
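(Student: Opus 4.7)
The plan is to analyze the centered process
\[
M(t) = e^{-\lambda t} Z(t) - \int_0^t e^{-\lambda s} f(s)\,ds,
\]
which is natural because $\mathbb{E}[Y_i(u)] = e^{\lambda u}$ makes $M$ a martingale with respect to the natural filtration. First I would verify the martingale property by conditioning on $\mathcal{F}_s$: the descendants of cells already born by time $s$ contribute $Z(s) e^{\lambda(t-s)}$ in mean, new cells born during $(s,t]$ contribute $\int_s^t f(u) e^{\lambda(t-u)}\,du$, and after multiplying by $e^{-\lambda t}$ and subtracting the integral term everything collapses to $M(s)$.

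Next I would bound the second moment of $M(t)$. Since $Z(t)$ is a compound Poisson random variable — a $\mathrm{Poisson}(\int_0^t f(s)\,ds)$-indexed sum of i.i.d.\ copies of $Y_1(t-\xi)$ where $\xi$ has density proportional to $f$ on $[0,t]$ — its variance follows from the classical formula $\mathbb{E}[Y_i(u)^2] = (2\alpha/\lambda) e^{2\lambda u} - ((\alpha+\beta)/\lambda) e^{\lambda u}$. A short calculation reduces $\mathbb{E}[M(t)^2]$ to at most $(2\alpha/\lambda) \int_0^t e^{-2\lambda s} f(s)\,ds$, and the hypothesis $t^{-r} e^{-\delta t} f(t) \to x$ lets me convert this into a bound of order $t^r e^{(\delta - 2\lambda) t}$ when $\delta > 2\lambda$, of order $t^{r+1}$ when $\delta = 2\lambda$, and uniformly bounded when $\delta < 2\lambda$. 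Which regime applies dictates the conclusion.

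The three cases then split as follows. When $\delta < \lambda$ the second moment is bounded, so martingale convergence yields $M(t) \to V'$ almost surely and hence $e^{-\lambda t} Z(t) \to V := \int_0^\infty e^{-\lambda s} f(s)\,ds + V'$, with mean equal to the integral; positivity of $V$ I would obtain by Fatou applied termwise — since each $Y_i$ is supercritical, $e^{-\lambda(t - T_i)} Y_i(t - T_i)$ has a nonzero liminf with positive probability, and independence of the summands propagates positivity with probability one. When $\delta = \lambda$ the martingale is still $L^2$-bounded and a.s.\ convergent, so dividing by $t^{r+1}$ kills $M(t)/t^{r+1}$; a change of variable $s = tu$ and dominated convergence then give $t^{-r-1} \int_0^t e^{-\delta s} f(s)\,ds \to x \int_0^1 u^r\,du = x/(r+1)$, which is the stated limit.

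The delicate case is $\delta > \lambda$, where the $L^2$-bound on $M$ blows up and direct martingale convergence is unavailable. Here my plan is a Borel–Cantelli argument: on integer intervals $[n, n+1]$, use Doob's maximal inequality to bound $\mathbb{P}(\sup_{t \in [n,n+1]} (t^{-r} e^{(\lambda - \delta)t} M(t))^2 > \epsilon)$ by $\mathbb{E}[M(n+1)^2] / (\epsilon n^{2r} e^{2(\delta - \lambda) n})$, which the preceding moment estimate makes exponentially summable in $n$. Therefore $t^{-r} e^{-\delta t} M(t) \to 0$ a.s., and another dominated-convergence computation shows $t^{-r} e^{(\lambda - \delta) t} \int_0^t e^{-\lambda s} f(s)\,ds \to x/(\delta - \lambda)$. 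The main obstacle is precisely this summability step — tracking how the polynomial factor $t^{2r}$ is controlled by the exponential gap $2(\delta - \lambda)$ — together with the degenerate case $\lambda = 0$, where the birth–death second-moment formula collapses to $1 + 2\alpha t$; I would handle this by re-deriving the bound as $\mathbb{E}[M(t)^2] \lesssim (1 + t)\, t^r e^{\delta t}$, which (since $\delta > 0 = \lambda$ in that regime) still produces an exponentially summable Borel–Cantelli tail and feeds into the same argument.
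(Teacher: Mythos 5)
Your proposal is correct and follows essentially the same route as the paper's own proof: the same martingale $M(t)$, the same compound-Poisson second-moment bound split by the sign of $\delta-2\lambda$, martingale convergence for $\delta\leq\lambda$ with the Fatou positivity argument and the $s=tu$ substitution, the Doob--Borel--Cantelli argument on $[n,n+1]$ for $\delta>\lambda$, and the same modified bound for $\lambda=0$. No gaps to report.
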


We can now give the proof of Proposition \ref{prop_cellnum1} on the convergence of cell numbers.

\begin{proof}[Proof of Proposition \ref{prop_cellnum1}]
We prove the result by induction. Clearly it is true for $n=1$. Now suppose that
\[
\lim_{t\rightarrow\infty}t^{-\resub{(r_n-1)}}e^{-\delta_nt}Z_n(t)=V_n\in(0,\infty)
\]
almost surely. Condition on the trajectory of $Z_n(\cdot)$, and apply Lemma \ref{lemma_genlim} to see that 
\[
\lim_{t\rightarrow\infty}t^{-\resub{(r_{n+1}-1)}}e^{-\delta_{n+1}t}Z_{n+1}(t)=V_{n+1}\in(0,\infty)
\]
 almost surely.
\end{proof}

Having proven that the cell numbers grow asymptotically as a deterministic function of time multiplied by a time-independent random amplitude $V_n$, our next aim is to determine the distribution of this random amplitude. We shall proceed via induction. 
To establish the base case we restate a classic result \cite{Athreya:2004,Durrett:2015}:

\begin{mylem}\label{lemma_limBD} 
	The random variable $V_1$ from Proposition \ref{prop_cellnum1} has exponential distribution with parameter $\lambda_1/\alpha_1=1-\beta_1/\alpha_1$.
\end{mylem}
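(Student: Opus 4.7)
The plan is to compute the Laplace transform of $V_1$ explicitly and recognize it as that of an exponential. This uses the fact that the birth-death branching process $Z_1(\cdot)$ has a closed-form generating function, and reduces the result to a short algebraic manipulation together with conditioning on non-extinction.

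First, I would recall (or derive by solving the Kolmogorov backward/forward equation) the generating function
\begin{equation*}
F(s,t) := \mathbb{E}\bigl[s^{Z_1(t)}\bigr] = \frac{\beta_1(1-s) - (\beta_1 - \alpha_1 s)e^{-\lambda_1 t}}{\alpha_1(1-s) - (\beta_1 - \alpha_1 s)e^{-\lambda_1 t}}.
\end{equation*}
Since $r_1 = 1$ and $\delta_1 = \lambda_1$, Proposition \ref{prop_cellnum1} gives $e^{-\lambda_1 t}Z_1(t) \to W$ almost surely, where unconditionally $W = 0$ on the extinction event and $W = V_1$ on the non-extinction event. Plugging $s = e^{-\theta e^{-\lambda_1 t}}$ into $F(\cdot,t)$ computes $\mathbb{E}\bigl[\exp(-\theta e^{-\lambda_1 t}Z_1(t))\bigr]$, and by bounded convergence the $t\to\infty$ limit is $\mathbb{E}\bigl[e^{-\theta W}\bigr]$.

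Next I would Taylor expand $s = 1 - \theta e^{-\lambda_1 t} + O(e^{-2\lambda_1 t})$ and track the leading orders in numerator and denominator. Both collapse to terms of order $e^{-\lambda_1 t}$, and after cancellation one finds
\begin{equation*}
\mathbb{E}\bigl[e^{-\theta W}\bigr] = \frac{\beta_1 \theta + \lambda_1}{\alpha_1 \theta + \lambda_1}.
\end{equation*}
The extinction probability is the classical value $\beta_1/\alpha_1$, so the decomposition $\mathbb{E}[e^{-\theta W}] = (\beta_1/\alpha_1) + (\lambda_1/\alpha_1)\,\mathbb{E}[e^{-\theta V_1}]$ gives
\begin{equation*}
\mathbb{E}\bigl[e^{-\theta V_1}\bigr] = \frac{\alpha_1}{\lambda_1}\left(\frac{\beta_1\theta + \lambda_1}{\alpha_1\theta + \lambda_1} - \frac{\beta_1}{\alpha_1}\right) = \frac{\lambda_1/\alpha_1}{\theta + \lambda_1/\alpha_1},
\end{equation*}
which is the Laplace transform of the $\mathrm{Exp}(\lambda_1/\alpha_1)$ distribution. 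Uniqueness of Laplace transforms then gives the claim.

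Nothing here is genuinely hard — this is the standard computation underlying the Yule/Kendall exponential limit law for supercritical linear birth-death processes. The only step requiring care is the conditioning: one must distinguish the unconditional limit $W$ (which assigns mass $\beta_1/\alpha_1$ to zero) from $V_1$, which by construction in Proposition \ref{prop_cellnum1} is the limit on the survival event. An alternative route, avoiding explicit generating functions, is to use the branching decomposition at the first jump to set up a functional equation for $\psi(\theta) = \mathbb{E}[e^{-\theta W}]$ and verify that the claimed mixture of a point mass at $0$ and an exponential solves it; I would mention this as a sanity check but the direct generating-function calculation is the shortest route.
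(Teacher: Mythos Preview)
Your argument is correct and is one of the standard derivations of the Yule--Kendall exponential limit law. The generating-function formula, the Taylor expansion with $s=e^{-\theta e^{-\lambda_1 t}}$, and the conditioning on non-extinction are all carried out accurately, and the final Laplace transform identifies $V_1$ as $\mathrm{Exp}(\lambda_1/\alpha_1)$.

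There is nothing to compare against in the paper itself: the authors do not prove this lemma but simply state it as a classical result, citing Athreya--Ney and Durrett. Your proof is exactly the kind of computation those references contain, so in spirit you are following the same route the paper defers to. The only minor remark is that the paper also uses (in the proof of Corollary~\ref{cor_randamp_approx}) the related fact that unconditionally $e^{-\lambda_{n+1}t}Y(t)\xrightarrow{d} B\times E$ with $B$ Bernoulli and $E$ exponential, which is precisely your decomposition of $W$ into a point mass at zero and an exponential part; so your presentation dovetails cleanly with how the result is used later.
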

Since the type $n$ population seeds the type $n+1$ population, one might expect that the random amplitudes $V_n$ and $V_{n+1}$ of the two populations are related. The next result says that this is indeed the case for a part of parameter space - when the type $n+1$ fitness is no greater than the fitnesses of previous types.
 
\begin{mycor}
    \label{cor_randamp}
Let $n\geq 1$. If $\delta_n>\lambda_{n+1}$
$$
V_{n+1}= \frac{\nu_n V_n}{\delta_n-\lambda_{n+1}}\quad \mbox{a.s.},
$$
while for $\delta_n=\lambda_{n+1}$
$$
V_{n+1}= \frac{\nu_n V_n}{r_n}\quad \mbox{a.s.}
$$
\end{mycor}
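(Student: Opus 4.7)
The plan is to apply Lemma \ref{lemma_genlim} pathwise after conditioning on the trajectory of $Z_n(\cdot)$, so this is really a straightforward consequence of the inductive framework already developed. The key observation is that, given the path of $Z_n(\cdot)$, the type $n+1$ cells are generated by a Poisson process of mutations with intensity $\nu_n Z_n(t)$, and each such mutation founds an independent birth-death branching process with rates $\alpha_{n+1},\beta_{n+1}$. This is precisely the setup of Lemma \ref{lemma_genlim} with $f(t)=\nu_n Z_n(t)$, $\alpha=\alpha_{n+1}$, $\beta=\beta_{n+1}$, so $\lambda=\lambda_{n+1}$.

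By Proposition \ref{prop_cellnum1}, almost surely $t^{-(r_n-1)}e^{-\delta_n t}Z_n(t)\to V_n\in(0,\infty)$, and therefore almost surely $t^{-(r_n-1)}e^{-\delta_n t}f(t)\to \nu_n V_n$. So the hypothesis of Lemma \ref{lemma_genlim} is satisfied with $x=\nu_n V_n$, $r=r_n-1$, and $\delta=\delta_n$. The argument then splits into the two cases.

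If $\delta_n>\lambda_{n+1}$, the first case of Lemma \ref{lemma_genlim} gives
\[
\lim_{t\to\infty} t^{-(r_n-1)}e^{-\delta_n t}Z_{n+1}(t)=\frac{\nu_n V_n}{\delta_n-\lambda_{n+1}}
\]
almost surely. In this regime $\delta_{n+1}=\delta_n$ (because $\lambda_{n+1}<\delta_n$) and the running-max is not attained again, so $r_{n+1}=r_n$. The left-hand side is therefore exactly $V_{n+1}$ by its defining limit, giving the claim. If $\delta_n=\lambda_{n+1}$, the second case of Lemma \ref{lemma_genlim} applies (since $\delta=\lambda$ in the notation of that lemma), and yields
\[
\lim_{t\to\infty} t^{-r_n}e^{-\delta_n t}Z_{n+1}(t)=\frac{\nu_n V_n}{(r_n-1)+1}=\frac{\nu_n V_n}{r_n}
\]
almost surely. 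Here $\delta_{n+1}=\delta_n$ and $r_{n+1}=r_n+1$, so $r_{n+1}-1=r_n$ and the left-hand side is again the defining limit of $V_{n+1}$.

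The only subtle point is the conditioning step: Lemma \ref{lemma_genlim} is stated for a deterministic cadlag function $f$, whereas here $f(t)=\nu_n Z_n(t)$ is random. This is handled by conditioning on $Z_n(\cdot)$ (whose sample paths are cadlag by construction) and applying the lemma to almost every realisation; since the event of Proposition \ref{prop_cellnum1} has full probability, this yields the stated a.s. identities on the full probability space. I expect no real obstacle beyond making this conditioning argument precise, which is also the pattern already used in the proof of Proposition \ref{prop_cellnum1}.
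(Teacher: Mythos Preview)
Your proposal is correct and follows precisely the route the paper takes: the paper's proof simply reads ``Immediate from Lemma \ref{lemma_genlim},'' and you have accurately spelled out what that means (condition on $Z_n(\cdot)$, apply the lemma with $f=\nu_n Z_n$, $x=\nu_n V_n$, $r=r_n-1$, $\delta=\delta_n$, $\lambda=\lambda_{n+1}$, and check that $\delta_{n+1},r_{n+1}$ match the left-hand limit). The conditioning issue you flag is handled exactly as in the proof of Proposition \ref{prop_cellnum1}, so there is nothing further to worry about.
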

\begin{proof}
Immediate from Lemma \ref{lemma_genlim}.
\end{proof}
Corollary \ref{cor_randamp} focuses on the case that the fitness of type $n+1$ does not dominate the fitnesses of types $1$ to $n$; here it says that the random amplitude $V_{n+1}$ is simply a constant multiple of $V_n$, meaning that the large-time stochasticity of the type $n+1$ population size is perfectly inherited from the type $n$ population. A special example is that type $1$ has a larger fitness than all subsequent types, in which case $V_n$ is a constant multiple of $V_1$ and thus all random amplitudes are exponentially distributed, recovering a result of \cite{Nicholson:2019}. Corollary \ref{cor_randamp} is also a generalisation of Theorem 3.2 parts 1 and 2 of \cite{Cheek:2018} which provided the distribution of $V_2$ in terms of $V_1$.

The remaining region of parameter space - where a new type may have a fitness greater than the fitness of all previous types is our next focus. Here, contrasting with the region considered in Corollary \ref{cor_randamp}, the random amplitudes seem to be rather complex. The distribution of $V_2$ takes an intricate form, which is calculated in \cite{Antal:2011} (Eq.~56) and we do not restate it here for brevity. The distribution of $V_n$ for $n>2$ apparently are unknown. We aim to find simple approximations for the $V_n$ in Sections \ref{sec_approx_model} and \ref{sec_popsize2}.

%%%%%%%%%%%%%%%%%%%%%%%%%%%%%%%%%%%%%%%%%%%%%%%%%%%%%%%%%%%%
%%%%%%%%%%%%%%%%%%%%%%%%%%%%%%%%%%%%%%%%%%%%%%%%%%%%%%%%%%%%

\subsection{Approximate model introduction}\label{sec_approx_model}

The exact distribution of the random amplitude $V_n$ for a generic sequence of birth and death rates appears to be analytically intractable. Thus we look to approximate $V_n$ in the limit of small mutation rates. Towards such an approximation, we choose to follow a method inspired by Durrett and Moseley \cite{Durrett:2010} which simplifies calculations by introducing an approximate model. The approximate model is motivated by the following heuristic argument: mutations to create cells of type $(n+1)$ occur at rate $\nu_n Z_n(t)$; when the mutation rates are small it will take some time for the first cell of type $(n+1)$ to appear; at large times $Z_n(t)\sim V_n e^{\delta_n t}t^{r_n-1}$ (Proposition \ref{prop_cellnum1}); therefore for small mutation rates, mutations to create cells of type $(n+1)$ should occur at rate $\approx \nu_n V_n e^{\delta_n t}t^{r_n-1}$. We carefully define the approximate model momentarily, but briefly it arises by assuming the type $(n+1)$ arrive at rate $\nu_n V_n e^{\delta_n t}t^{r_n-1}$ and then letting the type $(n+1)$ cells follow the dynamics we've already been assuming. 

% Such an approximate model was successfully used in Ref. \cite{Durrett:2010}, to study the setting where each mutation increases the growth rate. 
%  The authors also opted to start the process at $t=-\infty$, instead of $t=0$; this allowed certain integrals to be analytically tractable and is expected to make little change to any results as the expected number of mutations occurring at negative times is of the order of $\nu_1$. However, in the current setting where mutations do not necessarily increase growth rates, $\nu_{n}W_n e^{\delta_n s} s^{r_n -1}$ can take negative values for $s<0$ when $r_n>1$, which precludes use of this expression as a rate function if the process starts from $t=-\infty$. To proceed, we adopt a mixed approach for our approximate model, starting the process at $t=-\infty$ unless $r_n>1$; we expand on possible strategies to avoid this piecewise approach in Section XXX. The approximate model inherits all parameters of the original model: birth rates $\alpha_n$, death rates $\beta_n$, and mutation rates $\nu_n$. 

Formally, we define the approximate model iteratively. We let $Z_n^*(t)$ be the size of the type $n$ population at time $t$, set $Z_1^*(t) = V_1 e^{\lambda_1 t}$ for $t\geq 0$, and fix $V_1^*=V_1$. Then, given $V_n^*$, let $(T_{n+1,i}^*)$ be the times from a \resub{Poisson} process with rate
\begin{align*}\label{eqn_approx_intens}
t^{r_n-1}e^{\delta_n t}\nu_n V_n^*.
\end{align*}
Then, we set
\begin{equation}\label{eqn_zn_cpprep}
    Z_{n+1}^*(t)=\sum_{i:T_{n+1,i}^*\leq t}Y_{n+1,i}(t-T_{n+1,i}^*)
\end{equation}
where the $Y_{n,i}(\cdot)$ are independent birth-death processes \resub{initiated from a single cell} with birth and death rates $\alpha_n$ and $\beta_n$,
and
\bq\label{approxamp}
V_{n+1}^*=\lim_{t\rightarrow\infty}t^{-r_{n+1}+1}e^{-\delta_{n+1}t}Z_{n+1}^*(t).
\eq
We hypothesise but do not prove that the distribution of the random amplitudes $V^*_n$ and $V_n$ for the approximate and original models respectively coincide in the limit of small mutation rates; this is known to be true in the two-type setting (Section 4.4 of \cite{Antal:2011}).
% The piecewise nature of the intensity of the Cox process ensures the intensity is non-negative.
% To clarify that this model is well-defined we note the following lemma.
% \begin{mylem}The limit $W_{n+1}^*$ in (\ref{approxamp}) exists almost surely.
% \end{mylem}

%%%%%%%%%%%%%%%%%%%%%%%%%%%%%%%%%%%%%%%%%%%%%%%%%%%%%%%%%%%%
%%%%%%%%%%%%%%%%%%%%%%%%%%%%%%%%%%%%%%%%%%%%%%%%%%%%%%%%%%%%

\subsection{Approximate model: population growth}\label{sec_popsize2}
 First we have the counterpart to Proposition \ref{prop_cellnum1}, clarifying that the approximate model is well defined.
\begin{prop}\label{prop_as_approx}
For $n\geq1$, there exists a $(0,\infty)$-valued random variable $V_n^*$ such that
\[
\lim_{t\rightarrow\infty}t^{-\resub{(r_n-1)}}e^{-\delta_nt}Z^*_n(t)=V_n^*
\]
almost surely.
 
\end{prop}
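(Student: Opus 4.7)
The plan is to proceed by induction on $n$, exactly paralleling the argument for Proposition \ref{prop_cellnum1}, but exploiting the fact that in the approximate model the seeding times are by construction an inhomogeneous Poisson process with a tractable rate. The key observation is that, once we condition on $V_n^*$, the dynamics defining $Z_{n+1}^*$ fit the framework of Lemma \ref{lemma_genlim} exactly, so the lemma can be invoked essentially off the shelf.

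For the base case $n=1$, the definition $Z_1^*(t)=V_1 e^{\lambda_1 t}$ together with $r_1=1$ and $\delta_1=\lambda_1$ gives $t^{-(r_1-1)}e^{-\delta_1 t}Z_1^*(t)=V_1$ identically in $t$, so $V_1^*:=V_1$ is positive and finite almost surely by Lemma \ref{lemma_limBD}. For the inductive step, assume $\lim_{t\to\infty}t^{-(r_n-1)}e^{-\delta_n t}Z_n^*(t)=V_n^*\in(0,\infty)$ almost surely, and condition on the value of $V_n^*$. The mutation times $(T_{n+1,i}^*)$ then form a Poisson process with \emph{deterministic} intensity $f(t)=\nu_n V_n^*\,t^{r_n-1}e^{\delta_n t}$, which by construction satisfies $\lim_{t\to\infty}t^{-(r_n-1)}e^{-\delta_n t}f(t)=\nu_n V_n^*$. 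The offspring processes $Y_{n+1,i}(\cdot)$ are i.i.d.\ birth--death processes with rates $\alpha_{n+1},\beta_{n+1}$, so the representation \eqref{eqn_zn_cpprep} places $Z_{n+1}^*$ in exactly the setup of Lemma \ref{lemma_genlim} with parameters $x=\nu_n V_n^*$, $\delta=\delta_n$, $r=r_n-1$, $\lambda=\lambda_{n+1}$.

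Applying Lemma \ref{lemma_genlim} in its three cases and comparing with the recursive definitions $\delta_{n+1}=\max(\delta_n,\lambda_{n+1})$ and $r_{n+1}=r_n+\mathbf{1}_{\{\lambda_{n+1}=\delta_n\}}$ (with $r_{n+1}=1$ when $\lambda_{n+1}>\delta_n$) yields the claim: when $\delta_n>\lambda_{n+1}$ one obtains $V_{n+1}^*=\nu_n V_n^*/(\delta_n-\lambda_{n+1})$; when $\delta_n=\lambda_{n+1}$ one obtains $V_{n+1}^*=\nu_n V_n^*/r_n$; and when $\delta_n<\lambda_{n+1}$ the third case of the lemma produces a strictly positive random variable $V_{n+1}^*$ directly. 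Positivity of $V_{n+1}^*$ is automatic in the first two cases from $V_n^*>0$ and in the third from the positivity statement built into Lemma \ref{lemma_genlim}.

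The only subtle point is that Lemma \ref{lemma_genlim} is stated for a \emph{deterministic} rate $f$, whereas here $f$ is random through the factor $V_n^*$. This is resolved by a standard conditioning/Fubini argument: the convergence asserted by Lemma \ref{lemma_genlim} holds almost surely for each fixed positive value of $V_n^*$, and integrating against the law of $V_n^*$ promotes this to an unconditional almost sure convergence. This is really the only place where some care is needed; everything else is a direct transcription of the induction already used for Proposition \ref{prop_cellnum1}.
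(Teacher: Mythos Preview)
Your proposal is correct and follows exactly the approach of the paper, which simply states ``Identical to the proof of Proposition \ref{prop_cellnum1}.'' You have in fact spelled out the induction in more detail than the paper does, correctly adapting the conditioning step (on $V_n^*$ rather than on the full trajectory of $Z_n$, since in the approximate model the seeding intensity depends only on $V_n^*$), and your case analysis matching the three branches of Lemma \ref{lemma_genlim} to the recursions for $\delta_{n+1}$ and $r_{n+1}$ is accurate.
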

 \begin{proof}
Identical to the proof of Proposition \ref{prop_cellnum1}.
\end{proof}

Analogously to Corollary \ref{cor_randamp} we can relate the random amplitudes of type $n+1$ with that of type $n$ for the approximate process - now we include also the case where type $n$ has a larger growth rate than the type $(n-1)$ cells. We give the results at the level of the Laplace transform, as it turns out this function will dictate the distribution of the arrival times, to be seen in Section \ref{sec_times_results}
\begin{mycor}\label{cor_randamp_approx}
Let $n\geq 1$. Then 
$$
\ep[\exp(-\theta  V^*_{n+1})] = \ep\left[\exp\left(-h_n(\theta)V_n^*\right)\right],
$$
where $h_n(\theta)$ is defined by
\begin{align*}
    h_n(\theta) = 
    \begin{cases}
    \frac{\nu_n\theta  }{\delta_n-\lambda_{n+1}} \quad & \delta_n>\lambda_{n+1}
    \\
    \frac{\nu_n \theta}{r_n}\quad & \delta_n= \lambda_{n+1}
    \\
    \nu_n \frac{\theta \resub{(r_n-1)!}}{\lambda_{n+1}^{r_n}}\Phi(-\theta \alpha_{n+1}/\lambda_{n+1},r_n,1-\delta_n/\lambda_{n+1})\quad &\delta_n<\lambda_{n+1} ,
    \end{cases}
\end{align*}
where
  $\Phi$ is the Lerch transcendent function (see~25.14.1 in \cite{dlmf}).
\end{mycor}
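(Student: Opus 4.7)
The plan is to exploit the compound-Poisson structure of the approximate model. Conditionally on $V_n^\ast$, the sequence $(T_{n+1,i}^\ast)$ forms a Poisson process of deterministic intensity $f(s):=\nu_n V_n^\ast s^{r_n-1}e^{\delta_n s}$, and the birth--death families $Y_{n+1,i}(\cdot)$ are independent of it. The Laplace functional of a Poisson process then gives
\begin{equation*}
\ep\!\left[e^{-\theta Z_{n+1}^\ast(t)}\,\big|\,V_n^\ast\right]
=\exp\!\left(-\int_0^t f(s)\bigl(1-\phi(\theta,t-s)\bigr)\,ds\right),
\end{equation*}
where $\phi(\theta,u):=\ep[e^{-\theta Y(u)}]$ and $Y$ is a birth--death process with rates $\alpha_{n+1},\beta_{n+1}$ starting from one cell. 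Substituting $\theta\mapsto \theta_t:=\theta\, t^{-(r_{n+1}-1)}e^{-\delta_{n+1}t}$ and passing $t\to\infty$, bounded convergence together with Proposition \ref{prop_as_approx} produces $\ep[e^{-\theta V_{n+1}^\ast}\mid V_n^\ast]$; an outer expectation then delivers the two sides of the claim.

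In the two regimes $\delta_n\ge \lambda_{n+1}$, Lemma \ref{lemma_genlim} applied to this very $f$ (with the polynomial exponent $r_n-1$ playing the role of $r$ and $\nu_n V_n^\ast$ playing the role of $x$) directly pins down the conditional limit: $V_{n+1}^\ast=\nu_n V_n^\ast/(\delta_n-\lambda_{n+1})$ when $\delta_n>\lambda_{n+1}$, and $V_{n+1}^\ast=\nu_n V_n^\ast/r_n$ when $\delta_n=\lambda_{n+1}$ (here $r_{n+1}=r_n$ in the first sub-case and $r_{n+1}=r_n+1$ in the second, so the scaling exponents in the two definitions match). Taking Laplace transforms of these pointwise identities yields the first two branches of $h_n$.

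When $\delta_n<\lambda_{n+1}$ the limit $V_{n+1}^\ast$ is genuinely random given $V_n^\ast$, so the Laplace transform must be computed directly. Here $r_{n+1}=1$ and $\delta_{n+1}=\lambda_{n+1}$, hence $\theta_t=\theta e^{-\lambda_{n+1}t}$. Starting from the standard closed form
\begin{equation*}
1-\phi(\tilde\theta,u)=\frac{\lambda_{n+1}(1-e^{-\tilde\theta})}{\alpha_{n+1}(1-e^{-\tilde\theta})(1-e^{-\lambda_{n+1}u})+\lambda_{n+1}e^{-\lambda_{n+1}u}},
\end{equation*}
a change of variable $u\mapsto t-s$, multiplication of numerator and denominator by $e^{\lambda_{n+1}t}$, and expansion $1-e^{-\theta_t}=\theta_t+O(\theta_t^2)$ drive the integrand to the pointwise limit
\begin{equation*}
\frac{\lambda_{n+1}\theta\, s^{r_n-1}e^{\delta_n s}}{\alpha_{n+1}\theta+\lambda_{n+1}e^{\lambda_{n+1}s}},
\end{equation*}
which is integrable on $[0,\infty)$ because $\lambda_{n+1}>\delta_n$. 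Dominated convergence then delivers
\begin{equation*}
h_n(\theta)=\nu_n\lambda_{n+1}\theta\int_0^\infty\frac{s^{r_n-1}e^{\delta_n s}}{\alpha_{n+1}\theta+\lambda_{n+1}e^{\lambda_{n+1}s}}\,ds,
\end{equation*}
and the substitution $y=\lambda_{n+1}s$ combined with the integral representation $\Phi(z,\sigma,a)=\Gamma(\sigma)^{-1}\int_0^\infty y^{\sigma-1}e^{-ay}/(1-ze^{-y})\,dy$ (applied with $\sigma=r_n$, $a=1-\delta_n/\lambda_{n+1}$, $z=-\alpha_{n+1}\theta/\lambda_{n+1}$) recasts this as the third branch of $h_n$.

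The delicate step will be justifying the interchange of limit and integral in case three: one must exhibit a $t$-uniform integrable dominant for the integrand inside the exponential and verify that the $O(\theta_t^2)$ remainder in the expansion of $1-e^{-\theta_t}$ contributes only $o(1)$ after integration against $f$. Both controls should follow from the monotonicity in $u$ of the denominator of $1-\phi(\theta_t,u)$ together with the strictly positive gap $\lambda_{n+1}-\delta_n$, which produces an $s$-dependent, $t$-independent exponential envelope for the rescaled integrand.
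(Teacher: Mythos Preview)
Your proposal is correct and follows essentially the same route as the paper. For $\delta_n\ge\lambda_{n+1}$ both you and the paper invoke Lemma~\ref{lemma_genlim} (the paper does so via Corollary~\ref{cor_randamp}); for $\delta_n<\lambda_{n+1}$ both condition on $V_n^\ast$, use the compound-Poisson Laplace functional, pass to the limit inside the integral, and identify the resulting integral with the Lerch transcendent via the substitution $y=\lambda_{n+1}s$.

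The one genuine difference is how the pointwise limit of $1-\phi(\theta_t,t-s)$ is obtained in the third case. The paper appeals to the classical distributional limit $e^{-\lambda_{n+1}t}Y(t)\xrightarrow{d}B\cdot E$ with $B\sim\mathrm{Bernoulli}(\phi_{n+1})$, $E\sim\mathrm{Exp}(\phi_{n+1})$, and reads off the limiting Laplace transform from there; you instead start from the exact closed form of $1-\phi$ and take the limit by hand. Your route is slightly more elementary and self-contained (no auxiliary distributional result needed), while the paper's route is quicker once that limit is quoted. Both yield the same integrand, and your flagging of the dominated-convergence justification is in fact more careful than the paper, which simply asserts the interchange.
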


\begin{proof}
For the cases of $\delta_{n}>\lambda_{n+1}$ or $\delta_n=\lambda_{n+1}$ we can appeal directly to \resub{Corollary \ref{cor_randamp}}. 

 For $\delta_n<\lambda_{n+1}$, \resub{we expand upon the argument of Durrett and Moseley \cite{Durrett:2010}, who considered $\lambda_1<\lambda_2<\ldots$}. Let $\zeta_{n+1}(t,z)=\mathbb{E}e^{-zY_{\resub{n+1},1}(t)}$ which is the Laplace transform for a linear birth-death process initiated with a single cell, at time $t$ with division and death rates $\alpha_n,\,\beta_n$. \resub{Note that when $\delta_n<\lambda_{n+1}$, necessarily $\lambda_{n+1}>0$ as $\delta_n\geq\delta_1>0$, due to the type 1 population being assumed supercritical.}   If we fix $V_n^*$, then the arrivals to the type $n+1$ population occur as a Poisson process, so by the definition of $Z^*_{n+1}(t)$ given in Eq.~\eqref{eqn_zn_cpprep}, $Z^*_{n+1}(t)$ is a compound Poisson random variable. Generally, if we have a compound Poisson variable, defined by the sum of $N\sim \mathrm{Poisson}(\lambda)$ i.i.d. random variables $X_i$, then its Laplace transform follows
 $$
  \mathbb{E}\exp\left(-\theta \sum_{\resub{i=1}}^N X_i\right)
  = \exp[-\lambda(1-\mathbb{E}e^{-\theta X_1})].
 $$
 \resub{In our case, with $V_n^*$ fixed, $Z^*_{n+1}(t)$ is a Poisson$\left(\int_0^t\nu_n V_n^* s^{r_n-1}e^{\delta_n s}ds\right)$ sum of i.i.d. random variables distributed as $Y_1(t-\xi)$, where $\xi$ is a $[0,t]$-valued random variable with density proportional to $\nu_n V_n^* s^{r_n-1}e^{\delta_n s}$ (see, e.g. , Section 2 of \cite{Keller:2015})}.
 Applying this to $Z^*_{n+1}(t)$ we have
\begin{align}\label{eqn_fixWk_Lt}
\ep[\exp(-e^{-\lambda_{n+1} t}Z^*_{n+1}(t)\theta )|V_n^{*}]=\exp\left(-\nu_n V_n^{*}\int_{0}^{t} s^{r_n-1}e^{\delta_n s}[1-\zeta_{n+1}(t-s,\theta e^{-\lambda_{n+1}t})] \, ds  \right).
\end{align}
To obtain the limit of the integrand we use the well known result (see Ref. \cite{Durrett:2010} Section 2) that if  $Y(\cdot)$ is a linear birth-death process with division, and death rates $\alpha_{n+1},\,\beta_{n+1}$, \resub{initiated from a single cell ($Y(0)=1$)}, and with $\phi_{n+1}= \lambda_{n+1}/\alpha_{n+1}$, then as $t\rightarrow\infty$, $e^{-\lambda_{n+1}t}Y(t)\xrightarrow{d} B \times E$ where $B\sim \mathrm{Bernoulli}(\phi_{n+1})$,  $E\sim \mathrm{Expo}(\phi_{n+1})$, and both random variables are independent from each other. Hence its Laplace transform converges to
$$
 \mathbb{E} \exp\left( -\theta Y(t)e^{-\lambda_{n+1}t} \right)
 \to 1-\phi_{n+1} + \phi_{n+1} \int_0^\infty e^{-\theta x} \phi{_{n+1}} e^{-\phi_{n+1}x} dx 
 = 1- \phi_{n+1} \left( 1-\frac{1}{1+\theta/\phi_{n+1}} \right)
$$
Then 
$$
1-\zeta\resub{_{n+1}}(t-s, \theta e^{-\lambda{_{n+1}} t}) =
1-\mathbb{E} \exp\left(-\theta e^{-\lambda_{n+1} s}e^{-\lambda_{n+1}(t-s)}Y(t-s)\right)\to \phi_{n+1}\left(1-\frac{1}{1+\theta e^{-\lambda\resub{_{n+1}} s}/\phi_{n+1}} \right) 
$$
as $t\rightarrow \infty$.
Using this and taking the $t\rightarrow\infty$ limit over Eq.~\ref{eqn_fixWk_Lt} results in 
\begin{gather*}
\lim_{t\rightarrow\infty}\ep[\exp(-e^{-\lambda_{n+1} t}Z^*_{n+1}(t)\theta )|V_n^{*}]=
\\\exp\left(-\nu_n V_n^{*}\phi_{n+1}\int_{0}^{\infty} s^{r_n-1}e^{\delta_n s}\left(1-\frac{1}{1+e^{-\lambda_{n+1}s}\theta/\phi_{n+1}}\right) \, ds  \right)
\end{gather*}
Let $\gamma_n=\delta_n/\delta_{n+1}$ and recall the Lerch transcendent has integral representation for  $\mathcal{R}s>0$, and $\mathcal{R}a>0$ (see~25.14.5 in \cite{dlmf})
$$
\Phi(z,s,a) = \frac{1}{\Gamma(s)}\int_0^{\infty}\frac{t^{s-1}e^{-a t}}{1-z e^{-t}}\,dt
$$
which converges for $z\in \mathbb{C}\setminus[1,\infty)$. Upon the substitution $t=\lambda_{n+1}s$ we see 
\begin{align*}
 h_n(\theta) &= \nu_n\phi_{n+1}\int_{0}^{\infty} s^{r_n-1}e^{\delta_n s}\left(1-\frac{1}{1+e^{-\lambda_{n+1}s}\theta/\phi_{n+1}}\right) \, ds
 \\
 &=\frac{\nu_n\theta}{\lambda_{n+1}^{r_n}}\int_0^{\infty}\frac{t^{r_n-1} e^{-(1-\gamma_n)t}}{1+\theta e^{-t}/\phi_{n+1}}\,dt
\\
& =  \frac{\nu_n\theta \Gamma(r_n)}{\lambda_{n+1}^{r_n}}\Phi(-\theta/\phi_{n+1},r_n,1-\gamma_n)
\\
& \resub{=  \frac{\nu_n\theta (r_n-1)!}{\lambda_{n+1}^{r_n}}\Phi(-\theta/\phi_{n+1},r_n,1-\gamma_n)}.
\end{align*} 

\end{proof}
Corollary \ref{cor_randamp_approx} implies that 
\begin{align}
\ep\left[ \exp\left( - V_n^* \theta \right)\right] &=  \ep\left[ \exp\left( - V_{1}^* h_1\circ \ldots \circ h_{n-1}(\theta) \right)\right]\nonumber
\\
& = \left( 1+h_1\circ \ldots \circ h_{n-1}(\theta)\alpha_1/\lambda_1\right)^{-1},\label{Wnstardist}
\end{align}
which means that the distribution of the random amplitude $V_n^*$ is possible to numerically evaluate. Such numerical computation for the approximate model is already a step beyond what we could do for the original model.

Recall that it was heuristically argued that the random amplitudes of the approximate and original models coincide in the limit of small mutation rates. Therefore the exact distribution of $V_n^*$ seen in (\ref{Wnstardist}) is not so much our interest as is its limit for small mutation rates. Our task for the remainder of this section is thus to take the small mutation rate limit of (\ref{Wnstardist}).

To state the limit we now introduce some notation.

Let
\begin{align}\label{eqn_fi_def}
    f_i(\nu_i) =
    \begin{cases}
    \nu_i^{-1} \quad & \lambda_{i+1}\leq \delta_i
    \\
\nu_{i}^{-1}\log (\nu_{i}^{-1}) ^{-(r_i-1)} \quad & \lambda_{i+1}>\delta_i
    \end{cases}
\end{align}
% and we introduce 
% $$ g_{i,j}= \prod_{k=i}^{j}\gamma_k = \delta_i/\delta_{j+1}.$$
Then, writing $\boldsymbol{\nu}=(\nu_1,\nu_2,..)$, we define 
\begin{equation}\label{def_mathcalF}
    \mathcal{F}_n(\boldsymbol{\nu}) = \prod_{i=1}^{n}f_i(\nu_i)^{\delta_{n+1}/\delta_i}.
\end{equation}
This function satisfies
\begin{equation}\label{eqn_Fcal_recur}
\mathcal{F}_n(\boldsymbol{\nu})  = \left(f_{n}(\nu_n)\mathcal{F}_{n-1}(\boldsymbol{\nu})\right)^{\delta_{n+1}/\delta_n} 
\end{equation}
Further let $\gamma_n=\delta_n/\delta_{n+1}$, and
\begin{align}\label{def_kappan}
    \kappa_n = 
    \begin{cases}
    (\delta_n-\lambda_{n+1})^{-1} \quad & \delta_n>\lambda_{n+1}
    \\
    r_n^{-1} \quad & \delta_n = \lambda_{n+1}
    \\
       \frac{\phi_{n+1}^{1-\gamma_n}}{\lambda_{n+1}^{r_n}\gamma_n^{r_n-1}}\frac{\pi}{\sin \gamma_n\pi} \quad & \delta_n<\lambda_{n+1}.
    \end{cases}
\end{align}
Note that $c_n$ from Section \ref{sec_summary} is $\kappa_n$ when $
\delta_n<\lambda_{n+1}$.
Then, for small mutation rates, the distribution of $V_n^*$ may be related to $V_1^*$:
\begin{prop}\label{prop_Wsmallnu}
\begin{align*}
   \lim_{\nu_1\rightarrow 0}\ldots \lim_{\nu_{n}\rightarrow 0}\ep\left[ \exp\left( - V_{n+1}^*\theta\mathcal{F}_n(\boldsymbol\nu)\right)\right] &= \ep\left[\exp\left(- V_1^* \theta^{\delta_1/\delta_{n+1}} \prod_{i=1}^{n} \kappa_i^{\delta_1/\delta_{i}}\right)\right]
   \\
   & = \left(1+\resubtwo{(\alpha_1/\lambda_1)}\theta^{\delta_1/\delta_{n+1}} \prod_{i=1}^{n} \kappa_i^{\delta_1/\delta_{i}} \right)^{-1}
\end{align*}
\end{prop}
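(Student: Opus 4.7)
The starting point is the iterated Laplace transform
\[
\ep[\exp(-\theta V_{n+1}^*)] = \ep[\exp(-h_1 \circ h_2 \circ \cdots \circ h_n(\theta)\, V_1^*)],
\]
which follows from Corollary \ref{cor_randamp_approx} by induction and is the identity recorded in Eq.~\eqref{Wnstardist}. Because $V_1^*$ is $\mathrm{Exp}(\lambda_1/\alpha_1)$ by Lemma \ref{lemma_limBD}, its Laplace transform $A \mapsto (1+A\alpha_1/\lambda_1)^{-1}$ is continuous and bounded on $[0,\infty)$, so the proposition reduces to computing the iterated pointwise limit of $h_1\circ\cdots\circ h_n\bigl(\theta\,\mathcal{F}_n(\boldsymbol{\nu})\bigr)$ and feeding it back into this Laplace transform.

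I would next isolate a single one-step identity: for every $n\ge 1$ and every constant $c>0$ that is independent of $\nu_n$,
\[
\lim_{\nu_n \to 0} h_n\bigl(c\,\mathcal{F}_n(\boldsymbol{\nu})\bigr) = c^{\gamma_n}\,\kappa_n\,\mathcal{F}_{n-1}(\boldsymbol{\nu}).
\]
The algebraic engine is the recursion $\mathcal{F}_n^{\gamma_n} = f_n(\nu_n)\,\mathcal{F}_{n-1}$, which is Eq.~\eqref{eqn_Fcal_recur}. In the two easy cases $\delta_n \ge \lambda_{n+1}$, one has $\gamma_n=1$, $f_n(\nu_n)=\nu_n^{-1}$, and $h_n$ is linear in its argument; the $\nu_n$ in $h_n$ cancels the $\nu_n^{-1}$ supplied by $\mathcal{F}_n$ exactly, so the identity is already an equality (no limit needed) with $\kappa_n$ equal to $(\delta_n-\lambda_{n+1})^{-1}$ or $r_n^{-1}$ as in Eq.~\eqref{def_kappan}.

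The main obstacle is the remaining case $\delta_n<\lambda_{n+1}$, where I would establish the $y\to\infty$ asymptotic of the Lerch transcendent $\Phi(-y, r_n, 1-\gamma_n)$. Starting from the integral representation 25.14.5 of \cite{dlmf}, the substitution $u = t - \log y$ together with the classical identity $\int_{-\infty}^{\infty} e^{-au}/(1+e^{-u})\, du = \pi/\sin\pi a$ for $a\in(0,1)$ yields
\[
\Phi(-y, s, a) \sim \frac{y^{-a}(\log y)^{s-1}}{\Gamma(s)}\cdot \frac{\pi}{\sin \pi a}, \qquad y\to\infty,
\]
with the error controlled by dominated convergence on the Lerch integrand. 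Plugging in $y = c\,\mathcal{F}_n/\phi_{n+1}$, using $\log\mathcal{F}_n \sim (\delta_{n+1}/\delta_n)\log\nu_n^{-1}$ as $\nu_n\to 0$, and expanding $\mathcal{F}_n^{\gamma_n} = \nu_n^{-1}(\log\nu_n^{-1})^{-(r_n-1)}\,\mathcal{F}_{n-1}$, the $\nu_n$ and the $(\log\nu_n^{-1})^{r_n-1}$ factors cancel and the surviving constants collapse precisely to $\kappa_n$ from Eq.~\eqref{def_kappan}.

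To conclude, I would iterate: after taking $\lim_{\nu_n}$ one lands in the same situation one level down with $c$ replaced by $c^{\gamma_n}\kappa_n$, a constant independent of $\nu_{n-1}$, so the one-step identity applies again. Induction gives
\[
\lim_{\nu_1\to 0}\cdots\lim_{\nu_n\to 0} h_1\circ\cdots\circ h_n\bigl(\theta\,\mathcal{F}_n(\boldsymbol{\nu})\bigr) = \theta^{\gamma_1\cdots\gamma_n}\prod_{i=1}^{n}\kappa_i^{\gamma_1\cdots\gamma_{i-1}},
\]
with empty products equal to $1$. The telescoping $\gamma_1\gamma_2\cdots\gamma_k = \delta_1/\delta_{k+1}$ converts the exponent of $\theta$ into $\delta_1/\delta_{n+1}$ and the exponent of $\kappa_i$ into $\delta_1/\delta_i$, so substituting into the Laplace transform of $V_1^*$ produces the claimed $(1+(\alpha_1/\lambda_1)\,\theta^{\delta_1/\delta_{n+1}}\prod_i \kappa_i^{\delta_1/\delta_i})^{-1}$.
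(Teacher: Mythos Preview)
Your argument is correct and follows essentially the same skeleton as the paper: both rely on the one-step relation from Corollary~\ref{cor_randamp_approx}, the recursion \eqref{eqn_Fcal_recur} for $\mathcal{F}_n$, and the key limit $\lim_{\nu_n\to 0} h_n(f_n(\nu_n)^{1/\gamma_n}\theta)=\kappa_n\theta^{\gamma_n}$, then iterate. The paper packages the iteration as an induction on the Laplace-transform statement itself, whereas you unfold the full composition \eqref{Wnstardist} and peel off one $h_n$ at a time; these are organizationally equivalent, and your telescoping of the $\gamma_i$ is exactly what the paper's induction hypothesis encodes. One small point worth making explicit in your write-up is that pulling $\lim_{\nu_n}$ inside $h_1\circ\cdots\circ h_{n-1}$ uses continuity of each $h_i$ in its argument, which is clear from their definitions but not stated.

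The one place your route genuinely diverges is the Lerch asymptotic. The paper (Lemma~\ref{lem_lerch}) goes through the hypergeometric/Mellin--Barnes representation and collects residues, whereas your substitution $u=t-\log y$ in the integral representation plus the Beta-type identity $\int_{-\infty}^{\infty} e^{-au}(1+e^{-u})^{-1}\,du=\pi/\sin\pi a$ is a purely real-variable argument. This is more elementary and arguably cleaner for the leading term; the ``dominated convergence'' step is easiest to justify by expanding $(u+\log y)^{s-1}$ binomially for integer $s$ (each $\int u^k e^{-au}(1+e^{-u})^{-1}\,du$ is finite for $0<a<1$, so the $k\ge 1$ terms are $O((\log y)^{-k})$), which you may want to spell out. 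The paper's residue approach, by contrast, gives the full asymptotic expansion in one stroke.
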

Before proving this proposition we give two required lemmas in order to understand the limit behaviour of the function $h_n(\theta)$ \resub{(defined in Corollary \ref{cor_randamp_approx})}. \resub{Recall the Lerch transcendant function appeared in the definition of $h_n(\theta)$, which motivates considering the following lemma.}

\begin{mylem}\label{lem_lerch}
With $\Phi$ as the Lerch transcendent function with $0<a<1$ \resub{and positive integer $s$}, as $z\rightarrow -\infty$
\begin{align*}
 \Phi(z,s,a) &\sim \frac{\pi}{\sin a\pi} \frac{1}{(-z)^a}  \frac{(\log-z)^{s-1}}{(s-1)! }  .
\end{align*}
\end{mylem}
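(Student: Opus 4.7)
The plan is to work directly from the integral representation of the Lerch transcendent invoked just above the lemma, namely
\[
\Phi(z,s,a) = \frac{1}{\Gamma(s)}\int_0^{\infty} \frac{t^{s-1} e^{-at}}{1 - z e^{-t}}\,dt,
\]
specialized to $z = -w$ with $w > 0$ large. I would make the change of variable $t = \log w + u$, which translates the region where the denominator is of order one (namely $t \approx \log w$) to the origin in the new variable. This substitution yields
\[
\Phi(-w,s,a) = \frac{w^{-a}(\log w)^{s-1}}{(s-1)!}\int_{-\log w}^{\infty} \left(1 + \frac{u}{\log w}\right)^{s-1} \frac{e^{-au}}{1 + e^{-u}}\,du,
\]
which already exposes the conjectured leading-order factor $(-z)^{-a}(\log(-z))^{s-1}/(s-1)!$ outside the integral. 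It then remains to show the remaining integral converges to $\pi/\sin(a\pi)$.

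The core of the proof is a dominated convergence argument. The integrand (extended by zero for $u < -\log w$) converges pointwise to $e^{-au}/(1+e^{-u})$ on all of $\mathbb{R}$ as $w \to \infty$, so the goal is to produce a $w$-independent integrable majorant. I would split at $u=0$. For $u \geq 0$ and $\log w \geq 1$ one has $0 \leq 1 + u/\log w \leq 1 + u$, so the integrand is bounded by $(1+u)^{s-1} e^{-au}$, which is integrable on $[0,\infty)$. For $-\log w \leq u < 0$ the factor $(1+u/\log w)^{s-1}$ lies in $[0,1]$, and the remaining factor $e^{-au}/(1+e^{-u})$ behaves like $e^{(1-a)u}$ as $u \to -\infty$, which is integrable on $(-\infty,0]$ precisely because $a < 1$. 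Dominated convergence then gives
\[
\lim_{w\to\infty}\int_{-\log w}^{\infty} \left(1 + \frac{u}{\log w}\right)^{s-1}\frac{e^{-au}}{1+e^{-u}}\,du = \int_{-\infty}^{\infty} \frac{e^{-au}}{1+e^{-u}}\,du.
\]

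Finally, I would evaluate this limiting integral by the substitution $v = e^{-u}$, converting it to the classical beta integral $\int_0^{\infty} v^{a-1}/(1+v)\,dv = B(a,1-a) = \pi/\sin(a\pi)$, valid exactly for $0 < a < 1$. Combining with the prefactor recovers the claimed asymptotic. The main subtlety is producing a dominating function valid uniformly in large $w$; notice that the hypothesis $0 < a < 1$ enters in two distinct places — to make the beta integral finite, and to control the lower tail as $u \to -\infty$ — so the argument breaks at precisely the right endpoints.
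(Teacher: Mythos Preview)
Your proof is correct but takes a genuinely different route from the paper's. The paper rewrites $\Phi$ as a hypergeometric function and then uses a Mellin--Barnes contour integral
\[
\Phi(z,s,a)=\frac{1}{2\pi i}\int_{-i\infty}^{i\infty}\frac{\Gamma(1+x)\Gamma(-x)}{(a+x)^s}(-z)^x\,dx,
\]
closing the contour to the left and reading off the leading asymptotic from the residue of the pole of order $s$ at $x=-a$; the remaining poles at the negative integers furnish the lower-order terms. Your approach stays entirely on the real line: the substitution $t=\log w+u$ centres the integral at the transition point of the denominator, peels off the prefactor $w^{-a}(\log w)^{s-1}/(s-1)!$, and dominated convergence reduces the remainder to the classical beta integral $B(a,1-a)=\pi/\sin(a\pi)$. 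Your argument is more elementary and self-contained, needing only the real integral representation already quoted in the paper and no contour analysis; the paper's method, by contrast, yields in principle the full asymptotic expansion (all residues), not just the leading term, and makes the mechanism behind the $(\log(-z))^{s-1}$ factor transparent as the order of the pole at $-a$. Both arguments use $0<a<1$ in the same essential way: for you it controls both tails of the limiting integrand, for the paper it places the pole at $-a$ strictly between $0$ and $-1$ so that it dominates.
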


\begin{proof}
We first rewrite $\Phi$ in terms of the generalised hypergeometric function (see~16.2.1 in \cite{dlmf})
for \resub{positive} integer $s$  
$$
  \Phi(z,s,a) = a^{-s}
  {{}_{s+1}F_{s}}\left({1,a,\dots,a\atop a+1,\dots,a+1};z\right).
$$
This identity can be readily verified from the definitions of these special functions. Then we use its integral representation (Eq.~16.5.1 at \cite{dlmf}) 
$$
\Phi(z,s,a) = \frac{1}{2\pi i} \int_{-i\infty}^{i\infty} \frac{\Gamma(1+x)\Gamma(-x)}{(a+x)^s}
(-z)^x  dx
$$
The integrand has poles at $-a$ (where $0<a<1$) and at all real integers due to the Gamma functions. 
The contour of integration separates the poles at $-a$ and 0.
From the residue theorem for $z<0$ we can rewrite the integral as the sum of the residues coming from all poles on the left of the contour 
$$
\Phi(z,s,a) = \mathrm{Res}_{x=-a} \left( \frac{\Gamma(1+x)\Gamma(-x)}{(a+x)^s} (-z)^x\right) + 
(-1)^s \sum_{n=1}^\infty \frac{z^{-n}}{(n-a)^s}.
$$
The first term on the right hand side is the contribution from the pole at $-a$, while the sum goes over the contributions from all other poles at $-n=-1,-2\dots$.
The leading order term comes from the residue of closest pole to the origin at $x=-a$, which can be written as a finite sum of terms including powers of $\log-z$. The leading order of these terms is
$$
 \Phi(z,s,a) \sim \frac{\pi}{\sin a\pi} \frac{(\log-z)^{s-1}}{(s-1)! (-z)^a}
 + O\left(\frac{(\log-z)^{s-2}}{(-z)^a}\right)
$$
\end{proof}

Before giving the next lemma we recall $h_n$ for convenience 
\begin{align*}
    h_n(\theta) = 
    \begin{cases}
    \frac{\nu_n\theta  }{\delta_n-\lambda_{n+1}} \quad & \delta_n>\lambda_{n+1}
    \\
    \frac{\nu_n \theta}{r_n}\quad & \delta_n= \lambda_{n+1}
    \\
    \nu_n \frac{\theta \resub{(r_n-1)!}}{\lambda_{n+1}^{r_n}}\Phi(-\theta \alpha_{n+1}/\lambda_{n+1},r_n,1-\delta_n/\lambda_{n+1})\quad &\delta_n<\lambda_{n+1} .
    \end{cases}
\end{align*}
Then the following lemma will be of use.
\begin{mylem}\label{lemma_hn_lim}
\resub{ With $f_n$ as in Eq. \eqref{eqn_fi_def} and $\kappa_n$ as in Eq. \eqref{def_kappan},
$$
\lim_{\nu_{n}\rightarrow 0} h_n\left(f_n(\nu_n)^{1/\gamma_n}\theta\right) =  \resubtwo{\kappa_n} \theta^{\gamma_n}
$$
which implies that for}
 $\delta_n> \lambda_{n+1}$
$$
\lim_{\nu_{n}\rightarrow 0} h_n(\nu_n^{-1}\theta) = \frac{ \theta}{\delta_n-\lambda_{n+1}},
$$
for $\lambda_{n+1}=\delta_n$,
$$
\lim_{\nu_{n}\rightarrow 0} h_n(\nu_n^{-1}\theta) = \frac{ \theta}{r_n},
$$
while for $\delta_n<\lambda_{n+1}$
$$
\lim_{\nu_{n}\rightarrow 0} h_n(\nu_n^{-1/\gamma_n}\log (\nu_{n}^{-1}) ^{-(r_n-1)/\gamma_n}\theta) = \frac{\phi_{n+1}^{1-\gamma_n}}{\lambda_{n+1}^{r_n}\gamma_n^{r_n-1}}\frac{\pi}{\sin \gamma_n\pi} \theta^{\gamma_n}\resubtwo{.}
$$
\end{mylem}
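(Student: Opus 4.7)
The plan is to verify the unified claim $\lim_{\nu_n \to 0} h_n(f_n(\nu_n)^{1/\gamma_n}\theta) = \kappa_n\theta^{\gamma_n}$ stated at the top of the lemma, observing that it specialises to each of the three displayed equations because: in cases 1 and 2 we have $\delta_{n+1} = \max(\delta_n,\lambda_{n+1}) = \delta_n$, hence $\gamma_n = 1$ and $f_n(\nu_n) = \nu_n^{-1}$; in case 3 we have $\delta_{n+1} = \lambda_{n+1}$, hence $\gamma_n = \delta_n/\lambda_{n+1} \in (0,1)$ and $f_n(\nu_n) = \nu_n^{-1}(\log \nu_n^{-1})^{-(r_n-1)}$. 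In the first two cases, $h_n$ is linear in $\theta$, so plugging in $\theta \mapsto \nu_n^{-1}\theta$ makes the $\nu_n$ factors cancel exactly and no limit is needed: we read off $\theta/(\delta_n-\lambda_{n+1})$ and $\theta/r_n$ directly from the definition of $h_n$, matching $\kappa_n\theta$.

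For the genuine case $\delta_n < \lambda_{n+1}$, write $\theta' = f_n(\nu_n)^{1/\gamma_n}\theta = \nu_n^{-1/\gamma_n}(\log\nu_n^{-1})^{-(r_n-1)/\gamma_n}\theta$, which tends to $+\infty$ as $\nu_n \to 0$. Apply Lemma \ref{lem_lerch} with $z = -\theta'/\phi_{n+1}$, $s = r_n$, $a = 1-\gamma_n \in (0,1)$ and use $\sin((1-\gamma_n)\pi) = \sin(\gamma_n\pi)$ to obtain
\[
\Phi\!\left(-\theta'/\phi_{n+1},\, r_n,\, 1-\gamma_n\right) \;\sim\; \frac{\pi}{\sin(\gamma_n\pi)}\,\frac{\phi_{n+1}^{1-\gamma_n}}{(\theta')^{1-\gamma_n}}\,\frac{(\log(\theta'/\phi_{n+1}))^{r_n-1}}{(r_n-1)!}.
\]
Substituting this into the defining formula for $h_n(\theta')$ and combining the prefactor $\nu_n\theta'(r_n-1)!/\lambda_{n+1}^{r_n}$ with the $(\theta')^{-(1-\gamma_n)}$ from the Lerch asymptotic gives
\[
h_n(\theta') \;\sim\; \nu_n\,\frac{(\theta')^{\gamma_n}}{\lambda_{n+1}^{r_n}}\,\frac{\pi\,\phi_{n+1}^{1-\gamma_n}}{\sin(\gamma_n\pi)}\,\bigl(\log(\theta'/\phi_{n+1})\bigr)^{r_n-1}.
\]

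Two final bookkeeping steps finish the argument. First, $(\theta')^{\gamma_n} = \nu_n^{-1}(\log\nu_n^{-1})^{-(r_n-1)}\theta^{\gamma_n}$, so the prefactor $\nu_n$ cancels the $\nu_n^{-1}$, leaving a factor $(\log\nu_n^{-1})^{-(r_n-1)}\theta^{\gamma_n}$. Second, since $\log(\theta'/\phi_{n+1}) = \gamma_n^{-1}\log\nu_n^{-1} + O(\log\log\nu_n^{-1})$, raising to the $(r_n-1)$-th power and multiplying by the residual $(\log\nu_n^{-1})^{-(r_n-1)}$ yields the finite constant $\gamma_n^{-(r_n-1)}$ in the limit. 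Collecting factors gives
\[
\lim_{\nu_n\to 0} h_n(\theta') \;=\; \frac{\phi_{n+1}^{1-\gamma_n}}{\lambda_{n+1}^{r_n}\gamma_n^{r_n-1}}\,\frac{\pi}{\sin(\gamma_n\pi)}\,\theta^{\gamma_n} \;=\; \kappa_n\theta^{\gamma_n},
\]
which is the third specialisation.

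The one point requiring a little care — the nearest thing to an obstacle — is ensuring that the \emph{leading} subdominant term of $\log(\theta'/\phi_{n+1})$ really is negligible after being raised to the $(r_n-1)$-th power: for this it suffices that $(\log\log\nu_n^{-1})/(\log\nu_n^{-1}) \to 0$, which is immediate. The definition of $f_n$ is engineered precisely so that the leading $(\log\nu_n^{-1})^{r_n-1}$ factor produced by the Lerch asymptotic is killed, leaving only the scale-invariant constant $\gamma_n^{-(r_n-1)}$, and this is what makes the unified statement work across all three parameter regimes.
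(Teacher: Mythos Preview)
Your proof is correct and follows essentially the same approach as the paper: the first two cases are immediate from the linearity of $h_n$, and for $\delta_n<\lambda_{n+1}$ you invoke Lemma~\ref{lem_lerch} with $a=1-\gamma_n$, use $\sin((1-\gamma_n)\pi)=\sin(\gamma_n\pi)$, and carry out the same bookkeeping with the $\nu_n$ and logarithmic factors. Your use of the shorthand $\theta'$ and the explicit observation that $\log(\theta'/\phi_{n+1})=\gamma_n^{-1}\log\nu_n^{-1}+O(\log\log\nu_n^{-1})$ is slightly more streamlined than the paper's version, but the argument is identical in substance.
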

\begin{proof}
Recall $\gamma_n = \delta_{n}/\resub{\delta_{n+1}}$, $\phi_{n+1}= \lambda_{n+1}/\alpha_{n+1}$. The lemma is clearly true by the definition of $h_n(\theta)$ for $\delta_n> \lambda_{n+1}$ and $\delta_n = \lambda_{n+1}$.

We turn to the case of $\delta_n<\lambda_{n+1}$. For ease of notation we drop `$n$' subscripts and introduce 
$l_\nu = \log(\nu^{-1})$. From the definition of $h(\theta)$ in this case we see we require the limit of the Lerch transcendent for large first argument given in Lemma \ref{lem_lerch}. Further,
observe that for $a\in[0,1]$, $\sin a\pi = \sin (1-a)\pi$. Hence, as $\nu \rightarrow 0$,
\begin{gather*}
\Phi(-\theta\nu^{-1/\gamma}l_{\nu} ^{-(r-1)/\gamma} \phi^{-1},r,1-\gamma)\sim 
\frac{\pi}{\sin \gamma\pi} \frac{1}{(\theta\nu^{-1/\gamma}l_{\nu} ^{-(r-1)/\gamma} \phi^{-1})^{1-\gamma}} 
 \frac{(\log[\theta\nu^{-1/\gamma}l_{\nu}^{-(r-1)/\gamma} \phi^{-1}])^{r-1}}{(r-1)! } 
\end{gather*}
and so 
\begin{align*}
    h(\nu^{-1/\gamma}l_{\nu} ^{-(r-1)/\gamma}\theta)  &\sim  
     \nu^{1-1/\gamma}l_{\nu} ^{-(r-1)/\gamma}\frac{\theta \Gamma(r)}{\lambda^{r}}
    \\
    &\times\,\frac{\pi}{\sin \gamma\pi} \frac{1}{(\theta\nu^{-1/\gamma}l_{\nu} ^{-(r-1)/\gamma} \phi^{-1})^{1-\gamma}} 
 \frac{(\log[\theta\nu^{-1/\gamma}l_{\nu}^{-(r-1)/\gamma} \phi^{-1}])^{r-1}}{(r-1)! } .
\end{align*}
The $\nu$ \resub{factors} outside of the logarithms immediately cancel, leaving the logarithmic \resub{factors}. Collecting the logarithmic \resub{factors} together, and recalling that $\Gamma(r_n)=(r_n-1)!$, we have 
\begin{align*}
   h(\nu^{-1/\gamma}l_{\nu} ^{-(r-1)/\gamma}\theta) &\sim  
    \frac{\phi^{1-\gamma}\theta^{\gamma}}{\lambda^{r}}\frac{\pi}{\sin \gamma\pi}
    \\
    &\times l_{\nu} ^{-(r-1)/\gamma}
    \frac{1}{(l_{\nu} ^{-(r-1)/\gamma} )^{1-\gamma}} 
 [\log(\theta\nu^{-1/\gamma}l_{\nu} ^{-(r-1)/\gamma} )]^{r-1}.
\end{align*}
Notice that 
\begin{align*}
 [\log(\theta\nu^{-1/\gamma}l_{\nu} ^{-(r-1)/\gamma} )]^{r-1} &= (\log(\nu^{-1/\gamma})+\log(l_{\nu} ^{-(r-1)/\gamma} \theta))^{r-1}
\\
&\sim [\gamma^{-1} l_{\nu}]^{r-1}.
\end{align*}
Hence 
$$
l_{\nu} ^{-(r-1)/\gamma}
    \frac{1}{\resub{(}l_{\nu} ^{-(r-1)/\gamma} )^{1-\gamma}} 
 [\log(\theta\nu^{-1/\gamma}l_{\nu} ^{-(r-1)/\gamma} )]^{r-1}\rightarrow \gamma^{-(r-1)}.
$$
This leaves 
$$
h(\nu^{-1/\gamma}l_{\nu} ^{-(r-1)/\gamma}\theta) \rightarrow 
    \frac{\phi^{1-\gamma}\theta^{\gamma}}{\lambda^{r}\gamma^{r-1}}\frac{\pi}{\sin \gamma\pi}
$$
as required.
\end{proof}

We can now give the proof of Proposition \ref{prop_Wsmallnu}:
\begin{proof}[Proof of Proposition \ref{prop_Wsmallnu}]

The base case is clear, we now argue by induction. 
We recall that
$$
\ep\left[ \exp\left( - V_{n+1}^*\theta\right)\right] = \ep\left[ \exp\left( - V_{n}^*h_{n}(\theta)\right)\right].
$$
Hence
\begin{align*}
\ep\left[ \exp\left( - V_{n+1}^*\theta\mathcal{F}_n(\boldsymbol\nu)\right)\right] &= \ep\left[ \exp\left( - V_{n}^*h_{n}(\theta\mathcal{F}_n(\boldsymbol\nu))\right)\right] 
\\
&=\ep\left[ \exp\left( - V_{n}^*h_{n}(\theta f_{n}(\nu_n)^{1/\gamma_n}\mathcal{F}_{n-1}(\boldsymbol\nu)^{1/\gamma_n})\right)\right],
\end{align*}
where the relation between $\mathcal{F}_{n-1}(\boldsymbol\nu)$ and $\mathcal{F}_{n}(\boldsymbol\nu)$ given in Eq.~\eqref{eqn_Fcal_recur} was used.
Thus
$$
\lim_{\nu_1\rightarrow 0}\ldots \lim_{\nu_{n}\rightarrow 0}\ep\left[ \exp\left( - V_{n+1}^*\theta\mathcal{F}_n(\boldsymbol\nu)\right)\right] = \lim_{\nu_1\rightarrow 0}\ldots \lim_{\nu_{n}\rightarrow 0}\ep\left[ \exp\left( - V_{n}^*h_{n}(\theta f_{n}(\nu_n)^{1/\gamma_n}\mathcal{F}_{n-1}(\boldsymbol\nu)^{1/\gamma_n})\right)\right]. 
$$
Using Lemma \ref{lemma_hn_lim}, we have
\begin{align*}
&\lim_{\nu_1\rightarrow 0}\ldots \lim_{\nu_{n}\rightarrow 0}\ep\left[ \exp\left( - V_{n}^*h_{n}(\theta f_{n}(\nu_n)^{1/\gamma_{n}}\mathcal{F}_{n-1}(\boldsymbol\nu)^{1/\gamma_n})\right)\right]
\\ &=\lim_{\nu_1\rightarrow 0}\ldots \lim_{\nu_{n-1}\rightarrow 0}\ep\left[ \exp\left( - V_{n}^*\kappa_n [\theta\mathcal{F}_{n-1}(\boldsymbol\nu)^{1/\gamma_n}]^{\gamma_n}\right)\right] 
\\ 
&= \lim_{\nu_1\rightarrow 0}\ldots \lim_{\nu_{n-1}\rightarrow 0}\ep\left[ \exp\left( - V_{n}^*\kappa_n \mathcal{F}_{n-1}(\boldsymbol\nu)\theta^{\gamma_n}\right)\right] .
\end{align*}
Using the induction hypothesis
\begin{align*}
    \lim_{\nu_1\rightarrow 0}\ldots \lim_{\nu_{n-1}\rightarrow 0}\ep\left[ \exp\left( - V_{n}^*\kappa_n \mathcal{F}_{n-1}(\boldsymbol\nu)\theta^{\gamma_n}\right)\right]  &= \ep\left[\exp\left(- V_1^* (\kappa_n \theta^{\gamma_n})^{\delta_1/\delta_{n}} \prod_{i=1}^{n-1}\kappa_i^{\delta_1/\delta_i}\right)\right]
    \\
    & = \ep\left[\exp\left(- V_1^*  \theta^{\delta_1/\delta_{n+1}} \prod_{i=1}^{n}\kappa_i^{\delta_1/\delta_i}\right)\right].
\end{align*}
\end{proof}
We remark that when $\lambda_{i+1}\leq \delta_{i}$ (a fitness increase does not occur), we are not required to take the limit above on $\nu_i$ - that is the statement of Proposition \ref{prop_Wsmallnu} is true without applying these limits.

Summarising thus far, we see 
$$
  \lim_{\nu_1\rightarrow 0}\ldots \lim_{\nu_{n}\rightarrow 0}\lim_{t\rightarrow\infty}\mathcal{F}_n(\boldsymbol\nu)e^{-\delta_{n+1} t}t^{-(r_{n+1}-1)}Z^{\resub{*}}_{n+1}(t)
$$
has a Mittag-Leffler distribution with tail parameter $\delta_1/\delta_{n+1}$ and scale parameter 
$$
\left(\resubtwo{(\alpha_1/\lambda_1)} \prod_{i=1}^{n} \kappa_i^{\delta_1/\delta_{i}} \right)^{\delta_{n+1}/\delta_1}=(\resubtwo{\alpha_1/\lambda_1})^{\delta_{n+1}/\delta_1} \prod_{i=1}^{n} \kappa_i^{\delta_{n+1}/\delta_{i}} .
$$
Separating into a time-dependent component this implies that
\begin{equation}\label{approxmodel_approxres}
    Z^{\resub{*}}_{n+1}(t)\approx V^{\resub{*}}_{n+1} e^{\delta_{n+1} t}t^{r_{n+1}-1}
\end{equation}
with $V^{\resub{*}}_{n+1}$ being Mittag-Leffler with tail parameter $\delta_1/\delta_{n+1}$ and scale parameter 
\begin{equation}\label{def_omega}
  \omega_{n+1}=(\resubtwo{\alpha_1/\lambda_1})^{\delta_{n+1}/\delta_1} \mathcal{F}_n(\boldsymbol \nu)^{-1}\prod_{i=1}^{n} \kappa_i^{\delta_{n+1}/\delta_{i}}.  
\end{equation}
If we consider the family of random variables $V^{\resub{*}}_{n+1}$ then the scale parameters $\omega_{n+1}$ satisfy the following recursion
\begin{mylem}\label{lem_omegarecur}
Set $\omega_1= \resubtwo{\alpha_1/\lambda_1}$, then for $n\resub{\geq}1$, 
\begin{align}
    \omega_{n+1} = 
    \begin{cases}
    \frac{\nu_n}{\delta_n-\lambda_{n+1}}\omega_n \quad & \delta_n >\lambda_{n+1} 
    \\
    \frac{\nu_n}{r_n}\omega_n \quad & \delta_n = \lambda_{n+1}
    \\
    ( \nu_n\log(\nu_n^{-1})^{r_n-1} \kappa_n \omega_{n})^{\lambda_{n+1}/\delta_{n}}\quad &\delta_n<\lambda_{n+1} ,
    \end{cases}
\end{align}
   \resub{where $\kappa_n$ is defined in Eq.~\eqref{def_kappan}}.
\end{mylem}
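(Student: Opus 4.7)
The plan is to verify the recursion directly from the closed-form definition of $\omega_{n+1}$ given in Eq. \eqref{def_omega}, namely
\[
\omega_{n+1}=(\alpha_1/\lambda_1)^{\delta_{n+1}/\delta_1}\mathcal{F}_n(\boldsymbol\nu)^{-1}\prod_{i=1}^{n}\kappa_i^{\delta_{n+1}/\delta_i},
\]
together with the analogous expression for $\omega_n$. The base case $\omega_1=\alpha_1/\lambda_1$ holds since the empty product equals $1$ and $\mathcal{F}_0\equiv 1$. For the inductive step, the key algebraic tool is the recursive identity for $\mathcal{F}_n$ in Eq. \eqref{eqn_Fcal_recur}, which isolates the dependence on $\nu_n$ and $f_n(\nu_n)$.

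First I would handle the two cases in which the running-max fitness is not exceeded ($\delta_n\geq\lambda_{n+1}$). Here $\delta_{n+1}=\delta_n$, so all ratios $\delta_{n+1}/\delta_i=\delta_n/\delta_i$, and the recursion gives $\mathcal{F}_n(\boldsymbol\nu)=f_n(\nu_n)\mathcal{F}_{n-1}(\boldsymbol\nu)$. Since $f_n(\nu_n)=\nu_n^{-1}$ in this regime, one obtains
\[
\omega_{n+1}=\nu_n\kappa_n\,(\alpha_1/\lambda_1)^{\delta_n/\delta_1}\mathcal{F}_{n-1}(\boldsymbol\nu)^{-1}\prod_{i=1}^{n-1}\kappa_i^{\delta_n/\delta_i}=\nu_n\kappa_n\omega_n.
\]
Plugging in the values of $\kappa_n$ from Eq. \eqref{def_kappan} recovers the first two branches of the recursion.

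The slightly more delicate case is $\delta_n<\lambda_{n+1}$, where $\delta_{n+1}=\lambda_{n+1}$ and the exponent $\delta_{n+1}/\delta_n=\lambda_{n+1}/\delta_n$ is nontrivial. Using $\mathcal{F}_n(\boldsymbol\nu)=(f_n(\nu_n)\mathcal{F}_{n-1}(\boldsymbol\nu))^{\lambda_{n+1}/\delta_n}$ and factoring the ratios $\delta_{n+1}/\delta_i=(\lambda_{n+1}/\delta_n)(\delta_n/\delta_i)$, the expression for $\omega_{n+1}$ rearranges as
\[
\omega_{n+1}=\left[(\alpha_1/\lambda_1)^{\delta_n/\delta_1}\mathcal{F}_{n-1}(\boldsymbol\nu)^{-1}\prod_{i=1}^{n-1}\kappa_i^{\delta_n/\delta_i}\cdot f_n(\nu_n)^{-1}\kappa_n\right]^{\lambda_{n+1}/\delta_n}=\bigl[f_n(\nu_n)^{-1}\kappa_n\omega_n\bigr]^{\lambda_{n+1}/\delta_n}.
\]
Substituting $f_n(\nu_n)^{-1}=\nu_n\log(\nu_n^{-1})^{r_n-1}$ yields the third branch.

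There is no substantive obstacle: the proof is purely algebraic bookkeeping. The only place requiring care is to keep track of how $\delta_{n+1}$, $r_{n+1}$ and the exponents $\delta_{n+1}/\delta_i$ update across the three regimes, and to use the $\mathcal{F}$-recursion \eqref{eqn_Fcal_recur} to strip off the last factor in $\mathcal{F}_n$ so that the remaining product reassembles into $\omega_n$.
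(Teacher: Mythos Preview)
Your proposal is correct and follows essentially the same approach as the paper: both verify the recursion directly from the closed-form expression \eqref{def_omega} for $\omega_{n+1}$ (and its analogue for $\omega_n$), using the $\mathcal{F}$-recursion \eqref{eqn_Fcal_recur} together with the case-dependent values of $f_n(\nu_n)$ and $\kappa_n$. The only cosmetic difference is direction---the paper starts from the right-hand side of the claimed recursion and simplifies to $\omega_{n+1}$, whereas you start from $\omega_{n+1}$ and peel off the last factor to expose $\omega_n$---but the algebra is identical.
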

\begin{proof}
\resubtwo{
By Eq. \eqref{def_omega}, 
\begin{equation}\label{eqn_omegan_form}
  \omega_n =
(\resubtwo{\alpha_1/\lambda_1})^{\delta_{n}/\delta_1} \mathcal{F}_{n-1}(\boldsymbol \nu)^{-1} \prod_{i=1}^{n-1} \kappa_i^{\delta_{n}/\delta_{i}}.  
\end{equation}
We now demonstrate that multiplying $\omega_n$ as given above, by the factors stated in Lemma \ref{lem_omegarecur} results in $\omega_{n+1}$ as expressed in Eq. \eqref{def_omega}.
}

\resubtwo{
For the case of $\delta_n\geq \lambda_{n+1}$, $\kappa_n$ is either $(\delta_n-\lambda_{n+1})^{-1}$ for $\delta_n>\lambda_{n+1}$ or $r_n^{-1}$ for $\delta_n=\lambda_{n+1}$ (see the definition of $\kappa_n$ in Eq. \eqref{def_kappan}). Hence, comprising both the cases of $\delta_n>\lambda_{n+1}$ and $\delta_n=\lambda_{n+1}$, we desire to show $\nu_n \kappa_n\omega_{n} =\omega_{n+1}$. Using Eq.~\eqref{eqn_omegan_form}
\begin{align}
\nu_n \kappa_n\omega_{n} =\nu_n \kappa_n(\resubtwo{\alpha_1/\lambda_1})^{\delta_{n}/\delta_1} \mathcal{F}_{n-1}(\boldsymbol \nu)^{-1} \prod_{i=1}^{n-1} \kappa_i^{\delta_{n}/\delta_{i}}\resubtwo{.}
\label{eqn_omega_recur_proof1}
\end{align}
}

\resubtwo{
For $\delta_n\geq \lambda_{n+1}$,
$\delta_n = \delta_{n+1}$. Moreover,
$f_n(\nu_n)=\nu_n^{-1}$ (Eq. \eqref{eqn_fi_def}) and from Eq. \ref{eqn_Fcal_recur}
$$
\mathcal{F}_{n}(\boldsymbol \nu)^{-1}= (f_n(\nu_n)\mathcal{F}_{n-1}(\boldsymbol \nu))^{-1}=\nu_n\mathcal{F}_{n-1}(\boldsymbol \nu)^{-1}\resubtwo{.}
$$
Thus, taking Eq. \eqref{eqn_omega_recur_proof1}, replacing each $\delta_n$ with $\delta_{n+1}$, and using the representation of $\mathcal{F}_{n}(\boldsymbol \nu)^{-1}$,
\begin{align*}
\nu_n \kappa_n\omega_{n} &=  \kappa_n(\resubtwo{\alpha_1/\lambda_1})^{\delta_{n+1}/\delta_1} \mathcal{F}_{n}(\boldsymbol \nu)^{-1} \prod_{i=1}^{n-1} \kappa_i^{\delta_{n+1}/\delta_{i}}.
\end{align*}
Recognising that $\kappa_n = \kappa_n^{\delta_{n+1}/\delta_{n}}$ leads us to the desired form of $\omega_{n+1}$ as in Eq. \eqref{def_omega}.
}

\resubtwo{
In the case of $\delta_n<\lambda_{n+1}=\delta_{n+1}$, we aim to demonstrate that $(\nu_n\log(\nu_n^{-1})^{r_n-1} \kappa_n \omega_{n} )^{\lambda_{n+1}/\delta_{n}}$ matches the expression for $\omega_{n+1}$ given in Eq.~\eqref{def_omega}. Again, using Eq.~\eqref{eqn_omegan_form},
\begin{align}
\notag
 (\nu_n\log(\nu_n^{-1})^{r_n-1} \kappa_n \omega_{n} )^{\lambda_{n+1}/\delta_{n}}
    &= \left[\nu_n\log(\nu_n^{-1})^{r_n-1} \kappa_n(\resubtwo{\alpha_1/\lambda_1})^{\delta_{n}/\delta_1} \mathcal{F}_{n-1}(\boldsymbol \nu)^{-1} \prod_{i=1}^{n-1} \kappa_i^{\delta_{n}/\delta_{i}}\right]^{\lambda_{n+1}/\delta_{n}}
    \\
    &= \left[(\nu_n\log(\nu_n^{-1})^{r_n-1})^{\delta_{n+1}/\delta_n} (\resubtwo{\alpha_1/\lambda_1})^{\delta_{n+1}/\delta_1} \mathcal{F}_{n-1}(\boldsymbol \nu)^{-\delta_{n+1}/\delta_n} \prod_{i=1}^{n} \kappa_i^{\delta_{n+1}/\delta_{i}}\right].
    \label{eqn_lemmarecur_case2}
\end{align}
For $\delta_n<\lambda_{n+1}$, $f_n(\nu_n) = \nu_n^{-1}\log(\nu_n^{-1})^{-(r_n-1)}$ (Eq. \eqref{eqn_fi_def}) and from Eq. \ref{eqn_Fcal_recur},  
$$
\mathcal{F}_{n}(\boldsymbol \nu)^{-1}= (f_n(\nu_n)\mathcal{F}_{n-1}(\boldsymbol \nu))^{-\delta_{n+1}/\delta_{n}}=(\nu_n\log(\nu_n^{-1})^{r_n-1})^{\delta_{n+1}/\delta_n}\mathcal{F}_{n-1}(\boldsymbol \nu)^{-\delta_{n+1}/\delta_n},
$$
which combined with Eq.~\eqref{eqn_lemmarecur_case2} brings us to the desired form of $\omega_{n+1}$ as in Eq. \eqref{def_omega}.
}
\end{proof}

% \resub{Note that, if $\delta_{n+1}=\delta_1$, then with $V^*_{n+1}$ replaced by $V^{n+1}$, the approximation of Eq.~\eqref{approxmodel_approxres} for $Z^{*}_{n+1}(t)$ holds also for $Z_{n+1}(t)$  by Proposition \ref{prop_cellnum1}. Furthermore the random variables $V^{n+1}$ are Mittag-Leffler distributed, have tail parameters $\delta_1/\delta_{n+1}$ (equal to 1 when $\delta_{n+1}=\delta_1$) and scale parameters that satisfy Lemma \ref{lem_omegarecur} (by Corollary \ref{cor_randamp}). }

We summarise this approximate form of $Z^{\resub{*}}_{n+1}(t)$ as a theorem, to emphasise that it is the culmination of the results in this section. Note while we 
\begin{mythe}
For $t$ large, and all $\nu_i$ small
$$
Z^{\resub{*}}_{n+1}(t)\approx V^{\resub{*}}_{n+1} e^{\delta_{n+1} t}t^{r_{n+1}-1}
$$
where $V^{\resub{*}}_{n+1}$ is Mittag-Leffler distributed with tail parameter $\delta_1/\delta_{n+1}$ and scale parameter $\omega_{n+1}$ which satisfies the recurrence of Lemma \ref{lem_omegarecur}.
\end{mythe}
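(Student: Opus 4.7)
The plan is to assemble the theorem directly from the three ingredients already established: Proposition \ref{prop_as_approx} (almost-sure growth), Proposition \ref{prop_Wsmallnu} (Laplace transform of the rescaled amplitude in the small-mutation-rate limit), and Lemma \ref{lem_omegarecur} (recurrence for the scale parameter). No new estimate is needed; the task is to glue these results together and to interpret the stated limits as the ``for large $t$ and small $\nu_i$'' approximation written in the theorem.

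First I would invoke Proposition \ref{prop_as_approx} to write, for each fixed choice of $\nu_i$,
\[
V^{*}_{n+1}=\lim_{t\to\infty} t^{-(r_{n+1}-1)} e^{-\delta_{n+1}t}\, Z^{*}_{n+1}(t) \quad \text{a.s.,}
\]
so that for large $t$ one has the asserted approximate identity $Z^{*}_{n+1}(t)\approx V^{*}_{n+1}\, e^{\delta_{n+1}t}\, t^{r_{n+1}-1}$. This handles the time-asymptotic part of the statement.

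Next I would identify the distribution of $V^{*}_{n+1}$. Proposition \ref{prop_Wsmallnu} computes
\[
\lim_{\nu_1\to0}\cdots\lim_{\nu_n\to0}\mathbb{E}\bigl[\exp(-\theta\,\mathcal{F}_n(\boldsymbol\nu) V^{*}_{n+1})\bigr] = \Bigl(1+(\alpha_1/\lambda_1)\,\theta^{\delta_1/\delta_{n+1}}\prod_{i=1}^{n}\kappa_i^{\delta_1/\delta_i}\Bigr)^{-1}.
\]
The right-hand side is, by a routine identification, the Laplace transform of a Mittag-Leffler random variable with tail parameter $\gamma=\delta_1/\delta_{n+1}$ and a scale parameter determined by the prefactor of $\theta^\gamma$ inside the denominator. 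By the scaling property of Mittag-Leffler laws, rescaling by $\mathcal{F}_n(\boldsymbol\nu)^{-1}$ identifies the limiting distribution of $V^{*}_{n+1}$ itself (for small mutation rates) as Mittag-Leffler with the same tail parameter and scale
\[
\omega_{n+1}=(\alpha_1/\lambda_1)^{\delta_{n+1}/\delta_1}\mathcal{F}_n(\boldsymbol\nu)^{-1}\prod_{i=1}^{n}\kappa_i^{\delta_{n+1}/\delta_i},
\]
matching the definition in Eq.~\eqref{def_omega}. Finally, applying Lemma \ref{lem_omegarecur} replaces the closed-form expression for $\omega_{n+1}$ by the stated three-case recursion, completing the statement.

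The only real subtlety is interpretive rather than computational: Proposition \ref{prop_Wsmallnu} is a successive-limits statement, whereas the theorem advertises a simultaneous ``large $t$, small $\nu_i$'' approximation. The sign ``$\approx$'' in the theorem is therefore essential, and the proof is honest only as an identification of the leading-order behaviour. One should also note that, as remarked after Proposition \ref{prop_Wsmallnu}, the limits on those $\nu_i$ for which $\lambda_{i+1}\le\delta_i$ need not actually be taken, so the approximation is sharp in the cases ``stay below max fitness'' and ``equal to max fitness'' and is asymptotic only in the ``increase max fitness'' case; this is exactly where the $\log\nu_n^{-1}$ factor in the recurrence for $\omega_{n+1}$ appears, as an unavoidable consequence of the Lerch-transcendent asymptotics in Lemma \ref{lem_lerch}.
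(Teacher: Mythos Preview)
Your proposal is correct and matches the paper's approach exactly: the paper presents this theorem explicitly as a summary (``the culmination of the results in this section'') without a separate proof, and the text immediately preceding it already carries out precisely the assembly you describe---invoking Proposition~\ref{prop_as_approx} for the time-asymptotic factorisation, then Proposition~\ref{prop_Wsmallnu} to identify the Mittag-Leffler law and the explicit scale parameter in Eq.~\eqref{def_omega}, and finally Lemma~\ref{lem_omegarecur} for the recurrence. Your interpretive remarks about the successive-limits issue and about which $\nu_i$-limits are actually required are also in line with the paper's own commentary following Proposition~\ref{prop_Wsmallnu}.
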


% \mn{Should we delete tail discussion? If keep need to add asterix} The tail of $Z_{n+1}(t)$ can be understood with use of the following lemma:
% \begin{mylem}
% For $c>0, 0<\alpha<1$, let $X$ have Laplace transform
% $$
% \ep\left[\exp\left(-\theta X\right)\right] = (1+(c\theta)^{\alpha})^{-1}
% $$
% then
% $$
% \pr(X>x)\sim \frac{(c/x)^{\alpha}}{\Gamma(1-\alpha)}
% $$
% \end{mylem}
% \begin{proof}
% See Durrett and Moseley \cite{Durrett:2010}, proof of Theorem 3.
% \end{proof}
% Hence \mn{tail result only holds when $\delta_1<\delta_{n+1}$ else it's exponentially decaying, need to add conditions and exponential decay case. Or delete.}
% \begin{mycor}
% For $t$ large, and all $\nu_i$ small
% $$
% \pr(Z_{n+1}(t)>x) \approx \pr(V_{n+1}> x e^{-\delta_{n+1}t} t^{1-r_{n+1}})\sim \frac{(\omega_{n+1}/x)^{\delta_1/\delta_{n+1}} e^{\delta_{1}t} t^{(r_{n+1}-1)\delta_1/\delta_{n+1}}}{\Gamma(1-\delta_1/\delta_{n+1})}
% $$
% as $x e^{-\delta_{n+1}t} t^{1-r_{n+1}} \rightarrow\infty$.
% \end{mycor}

%%%%%%%%%%%%%%%%%%%%%%%%%%%%%%%%%%%%%%%%%%%%%%%%%%%%%%%%%%%%
%%%%%%%%%%%%%%%%%%%%%%%%%%%%%%%%%%%%%%%%%%%%%%%%%%%%%%%%%%%%

\subsection{Arrival times} \label{sec_times_results}
We now turn to the time at which the type $n$ population arrives. Our limit results concerning this question are identical for both the original and approximate model, with only the parameters in the limit expressions changing.  To avoid repeating results we introduce the superscript $\circ$, such that statements with variables with $\circ$ superscript are true for both models. Here, the first time a cell arrives of type $n+1$ is
$$
\tau_{n+1}^\circ = \min\{t\geq 0: \, Z_{n+1}^\circ(t) > 0 \}.
$$

It turns out $\tau_{n+1}^\circ$ can be appropriately centered using the following variables 
\begin{equation}\label{eqn_def_mn}
     \sigma_{n} = \delta^{-1}_{n}\log(\nu_{n}^{-1}),
 \quad
 m_n=\delta_n^{-1}\log\left(\nu_n^{-1}\sigma_n^{1-r_n}\right) 
\end{equation}
such that its distribution simplifies for small final seeding rates.
 \begin{prop}\label{prop_times}
As $\nu_{n}\rightarrow 0$,
 $$
   \pr(\tau_{n+1}^\circ- m_{n}>t) \rightarrow \ep[\exp(-V_{n}^\circ e^{\delta_{n} t}/\delta_{n})].
   $$
 \end{prop}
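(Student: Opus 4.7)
The plan is to exploit the fact that, conditional on the trajectory of the type $n$ population, the mutations producing type $n+1$ cells form an inhomogeneous Poisson process (with intensity $\nu_n Z_n(\cdot)$ in the original model, and $\nu_n V_n^{*} s^{r_n-1} e^{\delta_n s}$ in the approximate model). I would begin by writing
$$
\pr(\tau_{n+1}^\circ > s \mid Z_n^\circ) = \exp\left(-\nu_n \int_0^s Z_n^\circ(u)\, du\right),
$$
with the analogous expression (using the deterministic integrand) for the approximate case. Taking expectations, it suffices to show that when $s = m_n + t$ the exponent converges almost surely to $V_n^\circ e^{\delta_n t}/\delta_n$ as $\nu_n \to 0$; bounded convergence (the integrand lies in $[0,1]$) then delivers the statement.

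For the almost-sure convergence, I would combine two ingredients. The first is Proposition~\ref{prop_cellnum1} (and Proposition~\ref{prop_as_approx} in the approximate case), which says $u^{-(r_n-1)}e^{-\delta_n u}Z_n^\circ(u)\to V_n^\circ$ a.s. The second is the elementary integration-by-parts estimate
$$
\int_0^T u^{r_n-1}e^{\delta_n u}\, du \sim \frac{T^{r_n-1}e^{\delta_n T}}{\delta_n}\qquad\text{as } T\to\infty.
$$
Together these imply $\int_0^T Z_n^\circ(u)\, du \sim V_n^\circ\, T^{r_n-1}e^{\delta_n T}/\delta_n$ almost surely.

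The final step is to substitute $T = m_n + t$ and read off the scaling. From \eqref{eqn_def_mn} one has $\nu_n e^{\delta_n m_n} = \sigma_n^{1-r_n}$, and since $m_n \sim \sigma_n \to \infty$ one also has $(m_n+t)^{r_n-1} \sim \sigma_n^{r_n-1}$, so that
$$
\nu_n(m_n+t)^{r_n-1}e^{\delta_n(m_n+t)} \longrightarrow e^{\delta_n t}
\qquad\text{as }\nu_n\to 0.
$$
Hence $\nu_n\int_0^{m_n+t}Z_n^\circ(u)\, du \to V_n^\circ e^{\delta_n t}/\delta_n$ almost surely, and the dominated convergence theorem produces the claimed limit.

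The main technical obstacle is that the a.s.\ asymptotic of the integrand must be promoted to an asymptotic of the integral over a range $[0, m_n+t]$ whose upper limit simultaneously grows as $\nu_n\to 0$. The cleanest way I see is to split the interval into $[0,T_0]$, whose contribution is killed by the $\nu_n$ prefactor once multiplied out, and $[T_0, m_n+t]$, on which the integrand differs from $V_n^\circ u^{r_n-1}e^{\delta_n u}$ by a random multiplicative factor that can be made as close to one as desired by taking $T_0$ large (using the a.s.\ limit). For the approximate model this step is immediate since the integrand is already deterministic up to the factor $V_n^{*}$; for the original model it requires a little care but is routine once the split is in place.
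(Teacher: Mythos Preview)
Your proposal is correct and follows essentially the same route as the paper: condition on the type~$n$ trajectory to get the Poisson survival formula, invoke the almost-sure convergence of the rescaled $Z_n^\circ$ from Propositions~\ref{prop_cellnum1}/\ref{prop_as_approx}, and pass the limit through the expectation by bounded convergence. The only difference is in how the integral is handled: the paper substitutes $u\mapsto u+m_n$ so that the integrand becomes $\nu_n Z_n^\circ(u+m_n)$ on a domain growing to $(-\infty,t]$, then shows this integrand converges pointwise to $V_n^\circ e^{\delta_n u}$ and applies dominated convergence directly to the inner integral (with the explicit dominator $2V_n^\circ e^{\delta_n u}u^{r_n-1}$); you instead establish the asymptotic $\int_0^T Z_n^\circ \sim V_n^\circ T^{r_n-1}e^{\delta_n T}/\delta_n$ via a $[0,T_0]\cup[T_0,T]$ split and then substitute $T=m_n+t$. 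Both arguments are standard and yield the same conclusion; the paper's change of variables is marginally cleaner because it avoids the split and packages the two limiting operations (integrand and endpoint) into a single dominated-convergence step.
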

 \begin{proof}[Proof of Proposition \ref{prop_times}]
 We introduce $\rho_{n}=\delta_{n}^{-1}\log(\sigma_{n}^{r_{n}-1})$ 
 so that $m_{n} = \sigma_{n}-\rho_{n}$. First let's condition on $\mathcal{Z}_{n}=(Z_{n}^\circ(s))_{s\in \mathbb{R}}$
\begin{align*}
    \pr(\tau_{n+1}^{\circ}- (\sigma_{n}-\rho_{n})>t|\mathcal{Z}_{n}) 
& = \exp\left(-\nu_{n}\int_{0}^{t+\sigma_{n}-\rho_{n}}Z_{n}^\circ(s)\,ds\right)
 \\
 &=\exp\left(-\nu_{n}\int_{-(\sigma_n-\rho_n)}^{t}Z_{n}^\circ(u+\sigma_{n}-\rho_{n})\,du\right)
\end{align*}
 Observe that $\nu_{n}Z_{n}^\circ(u+\sigma_{n}-\rho_{n}) $ can be expressed as
\begin{gather*}
 \frac{Z_{n}^\circ(u+\sigma_{n}-\rho_{n})}{\exp(\delta_{n} (u+\sigma_{n}-\rho_{n}))\left(u+\sigma_{n}-\rho_{n}\right)^{r_n-1}}
 \\
 \times \,
\nu_{n}\exp(\delta_{n} (u+\sigma_{n}-\rho_{n}))\left(u+\sigma_{n}-\rho_{n}\right)^{r_{n}-1}.
\end{gather*}
As $\nu_{n}\rightarrow 0$ the first factor above converges to $V_{n}^\circ$. The second factor may be expressed as 
$$
e^{\delta_{n}u} \frac{\left(u+\sigma_{n}-\rho_{n}\right)^{r_{n}-1}}{\sigma_{n}^{r_{n}-1}}
$$
which converges to $e^{\delta_{n} u}$ as $\nu_{n}\rightarrow 0$.
Hence $\nu_{n}Z_{n}^\circ(u+\sigma_{n}-\rho_{n})\to V_{n}^\circ e^{\delta_{n} u}$.

Propositions \ref{prop_cellnum1} and \ref{prop_as_approx} imply that for any realisation we may find small enough $x$ such that for $\nu_{n}\leq x$
$$
Z_{n}^\circ(u+\sigma_{n}-\rho_{n})\leq 2V^\circ_{n}e^{\delta_{n}u}u^{r_{n}-1}
$$
which is integrable over $(-\infty,t]$. Using dominated convergence we have the claimed result.
 \end{proof}
 
 We know that with $\delta_n=\lambda_1$, $V_n^{\circ}$ has an exponential distribution, and so the limit distribution for $\tau_{n+1}^{\circ}$ may be immediately obtained \cite{Nicholson:2019}. If there are fitness increases, we turn to our small mutation results for the approximate model.

For the remainder of this section we discuss only results for the approximate model. The below results also hold for the original branching processes if the running-max fitness does not increase, i.e. $\delta_n=\lambda_1$.

 Thus with $\mathcal{F}_{n-1}(\boldsymbol \nu)$ as in Eq.~\ref{def_mathcalF}, and using Proposition \ref{prop_Wsmallnu}, we see that:
 \begin{mycor}
 \begin{align*}
     \lim_{\nu_1\rightarrow 0}\ldots \lim_{\nu_{n}\rightarrow 0}\pr(\tau_{n+1}^*- m_{n} - \delta_n^{-1}\log \mathcal{F}_{n-1}(\boldsymbol\nu)>t) &= 
     \ep\left[\exp\left(- V_1^* e^{\delta_{1} t}\delta_{n}^{-\delta_1/\delta_{n}} \prod_{i=1}^{n-1} \kappa_i^{\delta_1/\delta_{i}}\right)\right]
     \\
     & = \left(1+[(\resubtwo{\lambda_1/\alpha_1})\delta_n^{\delta_1/\delta_n}]^{-1}e^{\delta_1 t} \prod_{i=1}^{n-1} \kappa_i^{\delta_1/\delta_{i}} \right)^{-1}
 \end{align*}
 \end{mycor}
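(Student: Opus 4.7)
The statement is a direct composition of Proposition \ref{prop_times} and Proposition \ref{prop_Wsmallnu} (the latter applied one index lower). The key observation is that $\mathcal{F}_{n-1}(\boldsymbol{\nu})$ depends only on $\nu_1,\ldots,\nu_{n-1}$, so when we take the innermost limit $\nu_n\to 0$ it behaves as a fixed constant that can be absorbed into the translation of $t$.

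\textbf{Step 1: absorb the $\mathcal{F}_{n-1}$ shift.} Write
\[
\pr(\tau_{n+1}^* - m_n - \delta_n^{-1}\log\mathcal{F}_{n-1}(\boldsymbol\nu) > t) = \pr\bigl(\tau_{n+1}^* - m_n > t + \delta_n^{-1}\log\mathcal{F}_{n-1}(\boldsymbol\nu)\bigr).
\]
Since $\mathcal{F}_{n-1}(\boldsymbol\nu)$ is constant in $\nu_n$, Proposition \ref{prop_times} applies with the shifted argument $t + \delta_n^{-1}\log\mathcal{F}_{n-1}(\boldsymbol\nu)$, yielding
\[
\lim_{\nu_n\to 0}\pr(\tau_{n+1}^* - m_n - \delta_n^{-1}\log\mathcal{F}_{n-1}(\boldsymbol\nu) > t) = \ep\Bigl[\exp\Bigl(-V_n^*\,\delta_n^{-1}\,e^{\delta_n t}\,\mathcal{F}_{n-1}(\boldsymbol\nu)\Bigr)\Bigr],
\]
after using $e^{\delta_n(t+\delta_n^{-1}\log\mathcal{F}_{n-1}(\boldsymbol\nu))} = e^{\delta_n t}\mathcal{F}_{n-1}(\boldsymbol\nu)$.

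\textbf{Step 2: apply Proposition \ref{prop_Wsmallnu} at one index lower.} Set $\theta = e^{\delta_n t}/\delta_n$, a quantity independent of $\nu_1,\ldots,\nu_{n-1}$. The proof of Proposition \ref{prop_Wsmallnu} proceeds by induction and equally yields (replacing $n+1$ with $n$)
\[
\lim_{\nu_1\to 0}\ldots\lim_{\nu_{n-1}\to 0}\ep\bigl[\exp(-V_n^*\,\theta\,\mathcal{F}_{n-1}(\boldsymbol\nu))\bigr] = \ep\Bigl[\exp\Bigl(-V_1^*\,\theta^{\delta_1/\delta_n}\prod_{i=1}^{n-1}\kappa_i^{\delta_1/\delta_i}\Bigr)\Bigr].
\]
Substituting $\theta^{\delta_1/\delta_n} = e^{\delta_1 t}\delta_n^{-\delta_1/\delta_n}$ gives the first equality in the corollary.

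\textbf{Step 3: evaluate the Laplace transform of $V_1^*$.} By Lemma \ref{lemma_limBD} (and the definition $V_1^* = V_1$), $V_1^*$ is exponential with rate $\lambda_1/\alpha_1$, so $\ep[\exp(-V_1^* x)] = (1 + (\alpha_1/\lambda_1)x)^{-1}$. Substituting $x = e^{\delta_1 t}\delta_n^{-\delta_1/\delta_n}\prod_{i=1}^{n-1}\kappa_i^{\delta_1/\delta_i}$ and rewriting $(\alpha_1/\lambda_1)\delta_n^{-\delta_1/\delta_n} = [(\lambda_1/\alpha_1)\delta_n^{\delta_1/\delta_n}]^{-1}$ produces the second equality.

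\textbf{Expected obstacles.} There is no real analytical difficulty here — the work has been done in Propositions \ref{prop_times} and \ref{prop_Wsmallnu}. The only point requiring care is the order of limits: Proposition \ref{prop_times} must be applied first (innermost limit $\nu_n\to 0$) with everything else held fixed, and only afterwards are the outer limits $\nu_1,\ldots,\nu_{n-1}\to 0$ taken via Proposition \ref{prop_Wsmallnu}. One should also briefly note that $\mathcal{F}_{n-1}(\boldsymbol\nu)$ diverges as $\nu_1,\ldots,\nu_{n-1}\to 0$, which is precisely why the centering $\delta_n^{-1}\log\mathcal{F}_{n-1}(\boldsymbol\nu)$ is required to obtain a non-degenerate limit law.
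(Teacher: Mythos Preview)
Your proposal is correct and follows essentially the same route as the paper: first apply Proposition~\ref{prop_times} (innermost limit $\nu_n\to0$, with the $\mathcal{F}_{n-1}$ shift absorbed into $t$) to obtain $\ep[\exp(-V_n^*\mathcal{F}_{n-1}(\boldsymbol\nu)e^{\delta_n t}/\delta_n)]$, then apply Proposition~\ref{prop_Wsmallnu} at index $n$ with $\theta=e^{\delta_n t}/\delta_n$ for the remaining limits, and finally use the exponential law of $V_1^*$ from Lemma~\ref{lemma_limBD}. Your remark on the order of limits and the independence of $\mathcal{F}_{n-1}$ from $\nu_n$ is exactly the point that makes the two-step composition legitimate.
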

 \begin{proof}
 From Proposition \ref{prop_times}
 $$
 \lim_{\nu_1\rightarrow 0}\ldots \lim_{\nu_{n}\rightarrow 0}\pr(\tau_{n+1}^*- m_{n} - \delta_n^{-1}\log \mathcal{F}_{n-1}(\boldsymbol\nu)>t) = \lim_{\nu_1\rightarrow 0}\ldots \lim_{\nu_{n-1}\rightarrow 0}\ep[\exp(-V_{n}^* \mathcal{F}_{n-1}(\boldsymbol\nu) e^{\delta_{n} t}/\delta_{n})].
 $$
 While from Proposition \ref{prop_Wsmallnu},
 \begin{align*}
   \lim_{\nu_1\rightarrow 0}\ldots \lim_{\nu_{\resub{n-1}}\rightarrow 0}\ep\left[ \exp\left( - V_{n}^*\mathcal{F}_{n-1}(\boldsymbol\nu)e^{\delta_{n} t}/\delta_{n})\right)\right] &= \ep\left[\exp\left(- V_1^* (e^{\delta_{n} t}/\delta_{n})^{\delta_1/\delta_{n}} \prod_{i=1}^{n-1} \kappa_i^{\delta_1/\delta_{i}}\right)\right]
   \\
   & = \left(1+[(\resubtwo{\lambda_1/\alpha_1})\delta_n^{\delta_1/\delta_n}]^{-1}e^{\delta_1 t} \prod_{i=1}^{n-1} \kappa_i^{\delta_1/\delta_{i}} \right)^{-1}
\end{align*}
 \end{proof}
%  \mn{pretty sure can absorb $m_n$ into} $\mathcal{F}$ \mn{to make cleaner}

This implies that for small mutation rates 
\begin{align*}
    \pr(\tau_{n+1}^{*}>t) &= \pr(\tau_{n+1}^{*}- m_{n} - \delta_n^{-1}\log \mathcal{F}_{n-1}(\boldsymbol\nu)>t- m_{n} - \delta_n^{-1}\log \mathcal{F}_{n-1}(\boldsymbol\nu))
    \\
    & \approx  \ep\left[\exp\left(- V_1^* \delta_{n}^{-\delta_1/\delta_{n}} e^{\delta_{1} t}\mathcal{F}_{n-1}(\boldsymbol\nu)^{-\delta_1/\delta_n}e^{-\delta_1 m_n} \prod_{i=1}^{n-1} \kappa_i^{\delta_1/\delta_{i}}\right)\right]
    \\
    & = \left(1+\delta_{n}^{-\delta_1/\delta_{n}} e^{\delta_{1} t}(\resubtwo{\alpha_1/\lambda_1})\mathcal{F}_{n-1}(\boldsymbol\nu)^{-\delta_1/\delta_n}e^{-\delta_1 m_n} \prod_{i=1}^{n-1} \kappa_i^{\delta_1/\delta_{i}}\right)^{-1}
\end{align*}
Recall that 
$$
\omega_n =
(\resubtwo{\alpha_1/\lambda_1})^{\delta_{n}/\delta_1} \mathcal{F}_{n-1}(\boldsymbol \nu)^{-1} \prod_{i=1}^{n-1} \kappa_i^{\delta_{n}/\delta_{i}} ,
$$
and that by the definition of $m_n$,
\begin{align*}
    e^{-\delta_1 m_n}& =  \exp\left[-\frac{\delta_1 }{\delta_n}\log[\nu_n^{-1}(\delta^{-1}_{n}\log(\nu_{n}^{-1}))^{-(r_n-1)}]\right] = \nu_n^{\delta_1/\delta_n} (\delta^{-1}_{n}\log(\nu_{n}^{-1}))^{(r_n-1)\delta_1/\delta_n}.
\end{align*}
Hence 
$$
\pr(\tau_{n+1}^{*}>t) \approx \left[1+ e^{\delta_{1} t}\left(\frac{\omega_n \nu_n(\delta_n^{-1}\log(\nu_{n}^{-1}))^{(r_n-1)}}{\delta_n}\right)^{\delta_1/\delta_n}\right]^{-1}.
$$
Defining 
$$
t_{1/2}^{(n+1)}=\frac{1}{\delta_n}\log\frac{\delta_n}{\omega_n \nu_n [\delta_n^{-1} \log(\nu_n^{-1})]^{r_n-1}}
$$
we see that $\tau_{n+1}^{*}$ has a logistic distribution with scale parameter $\delta_1^{-1}$ and median $t_{1/2}^{(n+1)}$
\begin{equation}\label{eqn_logistic}
    \pr(\tau_{n+1}^{*}>t) \approx \left[1+ e^{\delta_{1} (t-t_{1/2}^{(n+1)})}\right]^{-1}
\end{equation}
The median times satisfy the following recurrence:
\begin{mylem}\label{lem_timerecur}
Set 
$$
t_{1/2}^{(2)} = \frac{1}{\delta_1}\log \frac{\delta_1^2}{\alpha_1 \nu_1}.
$$
Then for $n\geq 2$
\begin{align}
    t_{1/2}^{(n+1)} = t_{1/2}^{(n)}+ 
    \begin{cases}
   \frac{1}{\delta_{n}}\log\frac{(\resubtwo{\delta_{n}}-\lambda_{n})}{\nu_n }\left[\frac{ \log(\nu_{n-1}^{-1})}{\log(\nu_n^{-1})}\right]^{r_n-1}\quad & \delta_{n-1} >\lambda_{n} 
    \\
  \frac{1}{\delta_{n}}\log\frac{r_{n-1}\delta_{n-1}}{\nu_n }\frac{[ \log(\nu_{n-1}^{-1})]^{r_{n-1}-1}}{[ \log(\nu_n^{-1})]^{r_{n-1}}} \quad &  \delta_{n-1} = \lambda_{n} 
    \\
  \frac{1}{\delta_{n}}\log\frac{\delta_{n}}{\nu_n [\delta_n^{-1} \log(\nu_n^{-1})]^{r_n-1} } 
      - \frac{1}{\delta_{n-1}}\log(\delta_{n-1}^{r_{n-1}}\kappa_{n-1})\quad &\delta_{n-1}<\lambda_{n} 
    \end{cases}
\end{align}
\end{mylem}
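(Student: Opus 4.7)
The plan is to prove Lemma \ref{lem_timerecur} by direct calculation from the closed-form expression \eqref{eqn_mun} for the median arrival time together with the recurrence for $\omega_n$ given in Lemma \ref{lem_omegarecur}. No probabilistic input is needed beyond what has already been derived; the lemma is essentially a bookkeeping statement about how $t_{1/2}^{(n+1)}-t_{1/2}^{(n)}$ decomposes once the appropriate branch of the $\omega$-recurrence is substituted.

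For the base case $n=1$, I would note that $r_1=1$ and $\delta_1=\lambda_1$, so \eqref{eqn_mun} gives
\[
t_{1/2}^{(2)}=\frac{1}{\delta_1}\log\frac{\delta_1}{\omega_1\nu_1},
\]
and plugging in $\omega_1=\alpha_1/\lambda_1$ yields $t_{1/2}^{(2)}=\delta_1^{-1}\log(\delta_1^2/(\alpha_1\nu_1))$, as required.

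For the inductive step I would fix $n\ge 2$ and split into the three cases $\delta_{n-1}>\lambda_n$, $\delta_{n-1}=\lambda_n$, $\delta_{n-1}<\lambda_n$. In each case the recurrence of Lemma \ref{lem_omegarecur} determines $(\delta_n,r_n,\omega_n)$ in terms of $(\delta_{n-1},r_{n-1},\omega_{n-1})$: in the first two cases $\delta_n=\delta_{n-1}$ (with $r_n=r_{n-1}$ or $r_{n-1}+1$ respectively), while in the third $\delta_n=\lambda_n$ and $r_n=1$. Substituting the appropriate $\omega_n$ into the right-hand side of \eqref{eqn_mun} and subtracting the formula for $t_{1/2}^{(n)}$ (also written via \eqref{eqn_mun}), the factors involving $\omega_{n-1}$, $\nu_{n-1}$, and $[\log\nu_{n-1}^{-1}]^{r_{n-1}-1}$ cancel, and collecting the remaining logarithmic terms produces exactly the three piecewise expressions in the statement. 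The first two cases are straightforward, because both $t_{1/2}^{(n)}$ and $t_{1/2}^{(n+1)}$ have an overall factor $1/\delta_{n-1}=1/\delta_n$, and the appropriate powers of $\log\nu_n^{-1}$ and $\log\nu_{n-1}^{-1}$ line up directly.

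The main obstacle, and the only part requiring care, is the third case $\delta_{n-1}<\lambda_n$. Here the recurrence gives $\omega_n=(\nu_{n-1}\log(\nu_{n-1}^{-1})^{r_{n-1}-1}\kappa_{n-1}\omega_{n-1})^{\lambda_n/\delta_{n-1}}$, so when $\log\omega_n$ appears inside $t_{1/2}^{(n+1)}$ with prefactor $1/\lambda_n=1/\delta_n$, the exponent $\lambda_n/\delta_{n-1}$ converts that contribution into a term with prefactor $1/\delta_{n-1}$; combining with $t_{1/2}^{(n)}$ (which also carries prefactor $1/\delta_{n-1}$) causes all $\omega_{n-1}$, $\nu_{n-1}$, and $\log\nu_{n-1}^{-1}$ factors to cancel, leaving only $-\delta_{n-1}^{-1}\log(\delta_{n-1}^{r_{n-1}}\kappa_{n-1})$ from these terms. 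Using $r_n=1$ to write the remaining type-$n$ contribution as $\delta_n^{-1}\log(\delta_n/(\nu_n[\delta_n^{-1}\log\nu_n^{-1}]^{r_n-1}))$ (the bracketed factor being trivially $1$) puts the answer in the form stated. Once these three algebraic reductions are completed, the inductive step is established and the lemma follows.
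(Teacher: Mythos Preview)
Your proposal is correct and follows essentially the same approach as the paper: both start from the closed-form median formula \eqref{eqn_mun}, substitute the appropriate branch of the $\omega$-recurrence from Lemma \ref{lem_omegarecur}, and algebraically extract $t_{1/2}^{(n)}$ from the resulting expression. Your observation that $r_n=1$ in the case $\delta_{n-1}<\lambda_n$ is in fact sharper than the paper's own write-up (which states $r_n=r_{n-1}$ there), though this discrepancy is harmless since only $r_n-1=0$ enters the final formula.
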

\begin{proof} 
We start with $\lambda_n<\delta_{n-1}$, in which case $\omega_n = \frac{\nu_{n-1}}{\delta_{n-1}-\lambda_n}\omega_{n-1}$, and $\delta_{n-1}= \delta_n$, $r_n = r_{n-1}$, thus
\begin{align*}
t_{1/2}^{(n+1)} &= \frac{1}{\delta_{n}}\log\frac{\delta_{n}(\delta_{n-1}-\lambda_{n})}{\nu_n [\delta_n^{-1} \log(\nu_n^{-1})]^{r_n-1} \nu_{n-1}\omega_{n-1}}
\\
& = \frac{1}{\delta_{n}}\log\frac{(\delta_{n-1}-\lambda_{n})}{\nu_n [\delta_n^{-1} \log(\nu_n^{-1})]^{r_n-1}}+
\frac{1}{\delta_{n}}\log\frac{\delta_{n}}{\nu_{n-1}\omega_{n-1}}
\\
&=  \frac{1}{\delta_{n}}\log\frac{(\delta_{n-1}-\lambda_{n})}{\nu_n }\frac{[\delta_{n-1}^{-1} \log(\nu_{n-1}^{-1})]^{r_{n-1}-1}}{[\delta_n^{-1} \log(\nu_n^{-1})]^{r_n-1}}
+
\frac{1}{\delta_{n}}\log\frac{\delta_{n}}{\nu_{n-1}\omega_{n-1}[\delta_{n-1}^{-1} \log(\nu_{n-1}^{-1})]^{r_{n-1}-1}}
\\
&= \frac{1}{\delta_{n}}\log\frac{(\delta_{n-1}-\lambda_{n})}{\nu_n }\left[\frac{ \log(\nu_{n-1}^{-1})}{\log(\nu_n^{-1})}\right]^{r_n-1}
+
\frac{1}{\delta_{n-1}}\log\frac{\delta_{n-1}}{\nu_{n-1}\omega_{n-1}[\delta_{n-1}^{-1} \log(\nu_{n-1}^{-1})]^{r_{n-1}-1}}
\\
&= \frac{1}{\delta_{n}}\log\frac{(\delta_{n-1}-\lambda_{n})}{\nu_n }\left[\frac{ \log(\nu_{n-1}^{-1})}{\log(\nu_n^{-1})}\right]^{r_n-1}+t_{1/2}^{(n)}
\\
&=  \resubtwo{\frac{1}{\delta_{n}}\log\frac{(\delta_{n}-\lambda_{n})}{\nu_n }\left[\frac{ \log(\nu_{n-1}^{-1})}{\log(\nu_n^{-1})}\right]^{r_n-1}+t_{1/2}^{(n)} .}
\end{align*}
For the case of $\lambda_n=\delta_{n-1}$, then $\omega_n = \nu_{n-1}\omega_{n-1}/r_{n-1} $ and $\delta_n = \delta_{n-1},\,r_n = r_{n-1}+1$, thus
\begin{align*}
t_{1/2}^{(n+1)} &= \frac{1}{\delta_{n}}\log\frac{\delta_{n}r_{n-1}}{\nu_n [\delta_n^{-1} \log(\nu_n^{-1})]^{r_n-1} \nu_{n-1}\omega_{n-1}}
\\
& = \frac{1}{\delta_{n}}\log\frac{r_{n-1}}{\nu_n }\frac{[\delta_{n-1}^{-1} \log(\nu_{n-1}^{-1})]^{r_{n-1}-1}}{[\delta_n^{-1} \log(\nu_n^{-1})]^{r_n-1}}
+
\frac{1}{\delta_{n}}\log\frac{\delta_{n}}{\nu_{n-1}\omega_{n-1}[\delta_{n-1}^{-1} \log(\nu_{n-1}^{-1})]^{r_{n-1}-1}}
\\
& = \frac{1}{\delta_{n}}\log\frac{r_{n-1}\delta_{n-1}}{\nu_n }\frac{[ \log(\nu_{n-1}^{-1})]^{r_{n-1}-1}}{[ \log(\nu_n^{-1})]^{r_{n-1}}}
+
\frac{1}{\delta_{n-1}}\log\frac{\delta_{n-1}}{\nu_{n-1}\omega_{n-1}[\delta_{n-1}^{-1} \log(\nu_{n-1}^{-1})]^{r_{n-1}-1}}
\\
&= \frac{1}{\delta_{n}}\log\frac{r_{n-1}\delta_{n-1}}{\nu_n }\frac{[ \log(\nu_{n-1}^{-1})]^{r_{n-1}-1}}{[ \log(\nu_n^{-1})]^{r_{n-1}}}+t_{1/2}^{(n)}.
\end{align*}
Turning to the case of $\lambda_{n}>\delta_{n-1}$, we have $\omega_n=(\omega_{n-1}\nu_{n-1}\log(\nu_{n-1}^{-1})^{r_{n-1}-1}\kappa_{n-1})^{\lambda_n/\delta_{n-1}}$, or alternatively 
$$
\omega_n\delta_{n-1}^{-(r_{n-1}-1)\lambda_n/\delta_{n-1}} = [\omega_{n-1} \nu_{n-1} [\delta_{n-1}^{-1} \log(\nu_{n-1}^{-1})]^{r_{n-1}-1} \kappa_{n-1}]^{\lambda_n/\delta_{n-1}}.
$$
and we also have $\delta_n=\lambda_n$ and $r_n=r_{n-1}$.
Similarly to before 
\begin{align*}
    t_{1/2}^{(n+1)} &= \frac{1}{\delta_{n}}\log\frac{\delta_{n}}{\nu_n [\delta_n^{-1} \log(\nu_n^{-1})]^{r_n-1} }
    +
    \frac{1}{\delta_n}\log \frac{\delta_{n-1}^{-(r_{n-1}-1)\lambda_n/\delta_{n-1}}}{\omega_n\delta_{n-1}^{-(r_{n-1}-1)\lambda_n/\delta_{n-1}}}
    \\
    &= \frac{1}{\delta_{n}}\log\frac{\delta_{n}}{\nu_n [\delta_n^{-1} \log(\nu_n^{-1})]^{r_n-1} }
    +
      \frac{1}{\delta_n}\log \delta_{n-1}^{-(r_{n-1}-1)\delta_n/\delta_{n-1}}
      \\
      &+ \frac{1}{\delta_n}\log\frac{\delta_{n-1}^{\delta_n/\delta_{n-1}}}{[\omega_{n-1} \nu_{n-1} [\delta_{n-1}^{-1} \log(\nu_{n-1}^{-1})]^{r_{n-1}-1} ]^{\delta_n/\delta_{n-1}}}
      \\
      &+ \frac{1}{\delta_n}\log\frac{1}{(\delta_{n-1}\kappa_{n-1})^{\delta_n/\delta_{n-1}}}
      \\
      &= \frac{1}{\delta_{n}}\log\frac{\delta_{n}}{\nu_n [\delta_n^{-1} \log(\nu_n^{-1})]^{r_n-1} } 
      + \frac{1}{\delta_{n-1}}\log\frac{1}{\delta_{n-1}^{r_{n-1}}\kappa_{n-1}} + t_{1/2}^{(n)}
\end{align*}

\end{proof}

We summarise this approximate distribution of $\tau_{n+1}^{*}$ as a theorem, to emphasise that it is the culmination of the results in this section.
\begin{mythe}
For $t$ all $\nu_i$ small
$$
\pr(\tau_{n+1}^{*}>t) \approx \left[1+ e^{\delta_{1} (t-t_{1/2}^{(n+1)})}\right]^{-1}.
$$
where the median times $t_{1/2}^{(n+1)}$ which satisfies the recurrence of Lemma \ref{lem_timerecur}.
\end{mythe}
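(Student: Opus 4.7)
The plan is to assemble the theorem directly from Proposition \ref{prop_times}, Proposition \ref{prop_Wsmallnu}, and Lemma \ref{lem_omegarecur}, which together already contain all the analytic content. First I would invoke Proposition \ref{prop_times} for the approximate model to obtain
$$
\pr(\tau_{n+1}^{*} - m_n > t) \longrightarrow \ep\bigl[\exp\bigl(-V_n^{*} e^{\delta_n t}/\delta_n\bigr)\bigr]
$$
as $\nu_n \to 0$, and then chain the remaining limits $\nu_{n-1} \to 0, \ldots, \nu_1 \to 0$ so that Proposition \ref{prop_Wsmallnu} can be applied with the argument $\theta = e^{\delta_n t}/\delta_n$, together with the centering shift $\delta_n^{-1}\log \mathcal{F}_{n-1}(\boldsymbol\nu)$ introduced in the Corollary immediately following Proposition \ref{prop_times}. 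This step identifies the limiting Laplace transform as that of an exponential random variable with mean $\alpha_1/\lambda_1$, raised to the power $\delta_1/\delta_n$, and thereby collapses everything to a closed-form survival function.

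Next I would perform the algebraic simplification that turns the resulting expression into logistic form. The Laplace transform of the $\mathrm{Exp}(\lambda_1/\alpha_1)$ variable $V_1^{*}$ evaluated at $\theta^{\delta_1/\delta_n}\prod \kappa_i^{\delta_1/\delta_i}$ gives a factor of the form $\bigl[1 + C\,e^{\delta_1 t}\bigr]^{-1}$, and the task is to recognise the constant $C$ as $\exp(-\delta_1\, t_{1/2}^{(n+1)})$. This is accomplished by substituting the definition of $\omega_n$ from Eq.~\eqref{def_omega}, expanding $e^{-\delta_1 m_n}$ via $m_n = \delta_n^{-1}\log(\nu_n^{-1}\sigma_n^{1-r_n})$, and noting that the combined exponents on $\alpha_1/\lambda_1$, on $\prod \kappa_i^{\delta_1/\delta_i}$, and on the polynomial correction $[\delta_n^{-1}\log \nu_n^{-1}]^{r_n-1}$ telescope into precisely the formula
$$
t_{1/2}^{(n+1)} = \frac{1}{\delta_n}\log\frac{\delta_n}{\omega_n \nu_n [\delta_n^{-1}\log \nu_n^{-1}]^{r_n-1}}.
$$

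For the recursion claim, I would substitute the three cases of Lemma \ref{lem_omegarecur} into this explicit formula for $t_{1/2}^{(n+1)}$ and subtract the analogous expression for $t_{1/2}^{(n)}$. Each case produces a different ratio $\omega_n/\omega_{n-1}$, which when combined with the shift in $\sigma_n^{r_n-1}$ and the change in indexing of $\delta_n$ and $r_n$ yields the three branches of Lemma \ref{lem_timerecur} after cancellation. The main obstacle is purely bookkeeping: correctly tracking which exponent ($\delta_n$ versus $\delta_{n-1}$) attaches to each logarithm, and ensuring that the polynomial corrections in $m_n$ combine compatibly with the polynomial factor in $\omega_n$ as one passes between the running-max regimes. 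Once that bookkeeping is done, the theorem drops out with no new ingredient beyond the results already established earlier in the paper.
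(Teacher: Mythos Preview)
Your proposal is correct and follows essentially the same route as the paper: invoke Proposition~\ref{prop_times}, feed the result into Proposition~\ref{prop_Wsmallnu} via the Corollary that introduces the $\delta_n^{-1}\log\mathcal{F}_{n-1}(\boldsymbol\nu)$ shift, then unwind the constants using the definition of $\omega_n$ and $m_n$ to obtain the logistic form with median $t_{1/2}^{(n+1)}$; the recursion is then derived by substituting the three cases of Lemma~\ref{lem_omegarecur} into the explicit median formula. The paper presents this theorem purely as a summary of the preceding derivations, so there is nothing beyond the bookkeeping you describe.
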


\begin{remark}
In the above results we take the ordered limit $\lim_{\nu_1\rightarrow 0}\ldots\lim _{\nu_n\rightarrow 0}$ for two technical reasons: 
\setlength\parindent{24pt}
\\
\indent (i) In the proof of Proposition \ref{prop_times} we used the almost sure convergence of the scaled type $n$ cell number, that is Proposition \ref{prop_as_approx}. As the type $n$ populations' growth is unaffected by the value of $\nu_n$, no issues arise. However, the type $n$'s growth is affected by $\nu_1,\ldots,\nu_{n-1}$, and so almost sure convergence of cell numbers would not hold when simultaneously sending these mutation rates to 0, thus invalidating our proof strategy.
\\
\indent (ii) We build our understanding of the limit random variable $V_{n+1}^*$ from the distribution of $V_{n}^*$, as seen in Corollary \ref{cor_randamp_approx}. Small mutation rate limits were required to circumvent the complexity introduced by the Lerch transcendent in $h_{n}(\theta)$, and then ultimately in the composite function - composing all $h_i$ - in Eq. \eqref{Wnstardist}. In the composite function of Eq. \eqref{Wnstardist}, the function $h_{i+1}$ is applied before $h_i$, hence the mutation rate ordering. 

This specific ordering may have consequences on higher order details; for example in Eq.~\eqref{eqn_logistic}, the final mutation rate $\nu_n$ is privileged, appearing in the $\log(\nu_n^{-1})$ term. In other limits, e.g. all mutation rates are equal, this term may alter. On the other hand, when considering $\tau_{n+1}$, we wait for the first mutation of type $n+1$, whereas multiple mutations may occur from type $i\rightarrow i+1$ for $i=\ldots,n-1$; so the $\log(\nu_n^{-1})$ might remain in alternative limit orders. However, for practical scenarios we do not expect this feature to considerably impact results; this may be seen by the considering the median time $t_{1/2}^{(n)}$, where it's clear that the privileged term acts as a higher order $\log \log$ correction to the leading behaviour. 

\end{remark}

% \resub{\section{Supplementary Materials}}
\renewcommand{\thefigure}{S\arabic{figure}}
\setcounter{figure}{0}    

\section*{Supplementary material}
\subsection*{S1 Text: Statistical methods for $n$-mutation fluctuation assay}
\resub{For the fluctuation assay simulations presented in Fig. 4, we performed simulations of a 3-type birth-death-mutation process with $\alpha_i=1,\,\beta_i=0$, and $\log_{10}(\nu_i)$ either $\{-3,-2.5,-2,-1.5\}$ for each $i$. For each mutation rate 100 simulations were performed, simulations were stopped at $t=10$ and the number of type 3 cells were recorded (mutant counts), which were assumed to be the cells resistant to a given therapy. The simulated data was then used to infer the underlying mutation rate using either the $p_0$ method or maximum likelihood on the mutant counts as follows.  }

For the \resubtwo{$p_0$} method, observe that the number of simulations yielding no type 3 cells is binomially distributed with 100 replicates and success probability $\pr(\tau_3>10)$ (\resubtwo{where we used} the simulation stopping time of $t=10$). Eq. 4 gives an approximation for $\pr(\tau_3>10)$ and so using this approximation, for given parameter values the likelihood of the data (number of simulations with no type 3 cells) may be numerically evaluated. For maximum likelihood on the mutant counts, the distribution for the number of type 3 cells is approximated for large times by Eq. 1. If $Z_3^{(k)}(10)$ is the number of type 3 cells in the $k$th simulations, then
$$
\frac{Z_3^{(k)}(10)}{10^2 e^{10}}
$$
is Mittag-Leffler distributed with tail parameter 1 and scale parameter $\omega_3$ which is numerically obtainable via the recursion of Eq. 2. \resubtwo{As the simulation parameters were chosen as $\alpha_i=1,\,\beta_i=0$ for $i=1,\,2,\,3$, then for type 3 the running-max fitness and number of times it has been attained are $\delta_3=1$ and $r_3=2$. Thus, the random amplitude $V_3$ follows a Mittag-Leffler distribution with tail parameter 1, and so is an exponential distribution with mean as the scale parameter $\omega_3$.} Hence for given parameter values, with $f_{V_3}$ as the density of the relevant Mittag-Leffler distribution, the likelihood of the mutant counts over the 100 simulations is
$$
\prod_{k=1}^{100} f_{V_3}\left(\frac{Z_3^{(k)}(10)}{10^2 e^{10}}\right).
$$

\resub{For both approaches, numerical likelihood values were obtained over a grid of $\log_{10}(\nu_i)\in [-4.5,0]$ with grid steps of $0.01$. The mutation rate that achieved the highest likelihood values is reported as the maximum likelihood estimate (mle). \resubtwo{95\%} confidence intervals were obtained by finding the mutation rate such that the normalised log-likelihood value (log-likelihood of data at given mutation rate - log-likeliood of data at the mle) dipped below -1.92, in accordance with the likelihood ratio test (see page 47 of Ref. \cite{Pawitan:2001}). }

\subsection*{S1 Fig. Supplementary Figure}
\begin{figure}[t!]
    \centering
    \includegraphics[width = \textwidth]{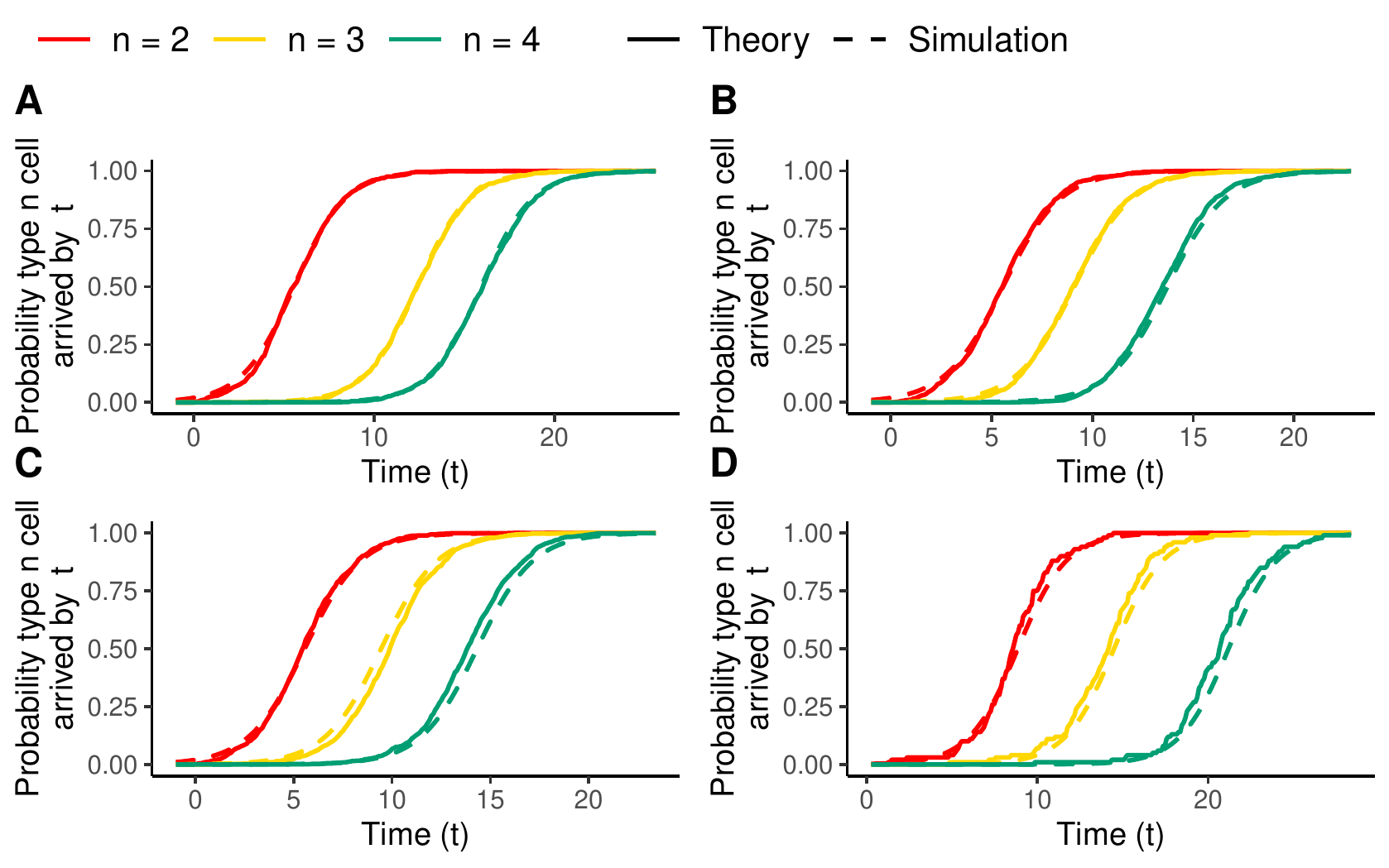}
    \caption{ \textbf{S1 Fig. Comparison of limiting logistic distribution for hitting times with stochastic simulations.}  Empirical cumulative distribution of the arrival times of types 1-3 obtained from simulations of the exact model versus the cumulative distribution function corresponding to the logistic distribution of Eq. 4.  Birth/death parameters: A (net growth rate decreases then increases), $\alpha_1 = \alpha_2 = 1$, $\alpha_3 =1.4,\,\beta_1=\beta_3=0.3,\,\beta_2=1.5$; B, D (net growth rate increases then decreases); $\alpha_1 = \alpha_3 = 1,\,\alpha_2 = 1.4,\beta_1 = \beta_2= 0.3,\,\beta_3 = 1.5$;  C (neutral), $\alpha_1 = \alpha_2=\alpha_3=1,\,\beta_1=\beta_2=\beta_3=0.3$. Mutation rates: A, B, C, $\nu_1=\nu_2=\nu_3=0.01$; D, $\nu_1=\nu_2=\nu_3=0.001$. Number of simulations: A, B, C; 1000 simulations; D, 100 simulations.}
\end{figure}

\section*{Acknowledgments}
We are grateful to Adri B. Olde Daalhuis
for his help with Lemma \ref{lem_lerch}, and to Martin Reijns for discussions on fluctuation assay experiments.

% \section*{Code and data availability}
% Code and data relating to this study can be found at \url{www.github.com/MichaelDNicholson/accumulate_nmutations}.
  %%%%%%%%%%%%%%%%%% %%%%%%%%%%%%%%%%%% %%%%%%%%%%%%%%%%%% %%%%%%%%%%%%%%%%%% %%%%%%%%%%%%%%%%%% %%%%%%%%%%%%%%%%%% %%%%%%%%%%%%%%%%%% %%%%%%%%%%%%%%%%%% %%%%%%%%%%%%%%%%%% %%%%%%%%%%%%%%%%%%

   %%%%%%%%%%%%%%%%%% %%%%%%%%%%%%%%%%%% %%%%%%%%%%%%%%%%%% %%%%%%%%%%%%%%%%%% %%%%%%%%%%%%%%%%%% %%%%%%%%%%%%%%%%%% %%%%%%%%%%%%%%%%%% %%%%%%%%%%%%%%%%%% %%%%%%%%%%%%%%%%%% %%%%%%%%%%%%%%%%%%

 %%%%%%%%%%%%%%%%%% %%%%%%%%%%%%%%%%%% %%%%%%%%%%%%%%%%%% %%%%%%%%%%%%%%%%%% %%%%%%%%%%%%%%%%%% %%%%%%%%%%%%%%%%%% %%%%%%%%%%%%%%%%%% %%%%%%%%%%%%%%%%%% %%%%%%%%%%%%%%%%%% %%%%%%%%%%%%%%%%%%
 
\bibliographystyle{plos2015.bst}
% Remove brackets from numbering in List of References
\makeatletter
\renewcommand{\@biblabel}[1]{\quad#1.}
\makeatother
\bibliography{cancer4}

 %%%%%%  %%%%%% %%%%%%  %%%%%% %%%%%%  %%%%%% %%%%%%  %%%%%% %%%%%%  %%%%%%
  %%%%%%  %%%%%% %%%%%%  %%%%%% %%%%%%  %%%%%% %%%%%%  %%%%%% %%%%%%  %%%%%%
   %%%%%%  %%%%%% %%%%%%  %%%%%% %%%%%%  %%%%%% %%%%%%  %%%%%% %%%%%%  %%%%%%

\end{document}